\font\black=cmbx10 \font\sblack=cmbx7 \font\ssblack=cmbx5 \font\blackital=cmmib10  \skewchar\blackital='177
\font\sblackital=cmmib7 \skewchar\sblackital='177 \font\ssblackital=cmmib5 \skewchar\ssblackital='177
\font\sanss=cmss11 \font\ssanss=cmss8 
\font\sssanss=cmss8 scaled 600 \font\blackboard=msbm10 \font\sblackboard=msbm7 \font\ssblackboard=msbm5
\font\caligr=eusm10 \font\scaligr=eusm7 \font\sscaligr=eusm5  \font\fraktur=eufm10
\font\sfraktur=eufm7 \font\ssfraktur=eufm5
\font\bsymb=cmsy10 scaled\magstep2
\def\all#1{\setbox0=\hbox{\lower1.5pt\hbox{\bsymb
       \char"38}}\setbox1=\hbox{$_{#1}$} \box0\lower2pt\box1\;}
\def\exi#1{\setbox0=\hbox{\lower1.5pt\hbox{\bsymb \char"39}}
       \setbox1=\hbox{$_{#1}$} \box0\lower2pt\box1\;}
\def\tx#1{{\fam0\relax#1}}
\def\bi#1{{\fam\bifam\relax#1}}
\def\sss#1{{\fam\ssfam\relax#1}}
\def\hpb#1{\setbox0=\hbox{${#1}$}
    \copy0 \kern-\wd0 \kern.2pt \box0}
\def\vpb#1{\setbox0=\hbox{${#1}$}
    \copy0 \kern-\wd0 \raise.08pt \box0}
\def\pmb#1{\setbox0\hbox{${#1}$} \copy0 \kern-\wd0 \kern.2pt \box0}
\def\pmbb#1{\setbox0\hbox{${#1}$} \copy0 \kern-\wd0
      \kern.2pt \copy0 \kern-\wd0 \kern.2pt \box0}
\def\pmbbb#1{\setbox0\hbox{${#1}$} \copy0 \kern-\wd0
      \kern.2pt \copy0 \kern-\wd0 \kern.2pt
    \copy0 \kern-\wd0 \kern.2pt \box0}
\def\pmxb#1{\setbox0\hbox{${#1}$} \copy0 \kern-\wd0
      \kern.2pt \copy0 \kern-\wd0 \kern.2pt
      \copy0 \kern-\wd0 \kern.2pt \copy0 \kern-\wd0 \kern.2pt \box0}
\def\pmxbb#1{\setbox0\hbox{${#1}$} \copy0 \kern-\wd0 \kern.2pt
      \copy0 \kern-\wd0 \kern.2pt
      \copy0 \kern-\wd0 \kern.2pt \copy0 \kern-\wd0 \kern.2pt
      \copy0 \kern-\wd0 \kern.2pt \box0}
\mathchardef\za="710B  
\mathchardef\zb="710C  
\mathchardef\zg="710D  
\mathchardef\zd="710E  
\mathchardef\zve="710F 
\mathchardef\zz="7110  
\mathchardef\zh="7111  
\mathchardef\zvy="7112 
\mathchardef\zi="7113  
\mathchardef\zk="7114  
\mathchardef\zl="7115  
\mathchardef\zm="7116  
\mathchardef\zn="7117  
\mathchardef\zx="7118  
\mathchardef\zp="7119  
\mathchardef\zr="711A  
\mathchardef\zs="711B  
\mathchardef\zt="711C  
\mathchardef\zu="711D  
\mathchardef\zvf="711E 
\mathchardef\zq="711F  
\mathchardef\zc="7120  
\mathchardef\zw="7121  
\mathchardef\ze="7122  
\mathchardef\zy="7123  
\mathchardef\zf="7124  
\mathchardef\zvr="7125 
\mathchardef\zvs="7126 
\mathchardef\zf="7127  
\mathchardef\zG="7000  
\mathchardef\zD="7001  
\mathchardef\zY="7002  
\mathchardef\zL="7003  
\mathchardef\zX="7004  
\mathchardef\zP="7005  
\mathchardef\zS="7006  
\mathchardef\zU="7007  
\mathchardef\zF="7008  
\mathchardef\zW="700A  
\newcommand{\be}{\begin{equation}}
\newcommand{\ee}{\end{equation}}
\newcommand{\ra}{\rightarrow}
\newcommand{\bea}{\begin{eqnarray}}
\newcommand{\eea}{\end{eqnarray}}
\newcommand{\beas}{\begin{eqnarray*}}
\newcommand{\eeas}{\end{eqnarray*}}
\def\*{{\textstyle *}}
\newcommand{\R}{{\mathbb R}}
\newcommand{\C}{{\mathbb C}}
\newcommand{\D}{{\rm d}}
\newcommand{\we}{\wedge}
\newcommand{\ti}{\times}
\newcommand{\A}{{\cal A}}
\newcommand{\h}{{\sf H}}
\def\ran{\rangle}
\def\op{\oplus}
\def\cD{{\cal D}}
\def\cL{{\cal L}}
\def\cR{{\cal R}}
\def\cH{{\cal H}}
\def\cO{{\cal O}}
\def\wt{\widetilde}
\def\ol{\overline}
\def\Z{\mathbf{Z}}
\def\la{\langle}
\def\ran{\rangle}
\def\br{{\bi r}}
\def\sD{{\sss D}}
\def\sT{{\sss T}}
\def\st{{\sss t}}
\def\xd{\tx{d}}
\def\xi{\tx{i}}
\def\dt{\xd_{\sss T}}
\def\cD{{\cal D}}
\def\pr{\operatorname{pr}}
\def\gl{\operatorname{gl}}
\def\GL{\operatorname{GL}}
\def\L{\operatorname{L}}
\def\U{\operatorname{U}}
\def\Id{\operatorname{Id}}
\def\PU{\operatorname{PU}}
\def\u{\operatorname{u}}
\def\h{\operatorname{h}}
\def\P{\mathbf{P}}
\def\D{\mathbf{D}}
\newcommand{\tr}{\mbox{$\mathrm{tr}$}}
\newcommand{\bra}[1]{\ensuremath{\langle #1 |}}
\newcommand{\ket}[1]{\ensuremath{| #1\rangle}}
\newcommand{\bk}[2]{\ensuremath{\langle #1 | #2\rangle}}
\newcommand{\kb}[2]{\ensuremath{| #1\rangle\!\langle #2 |}}
\newcommand{\brk}[2]{\ensuremath{( #1 | #2)}}
\newcommand{\gh}{{g_\cH}}
\newcommand{\oh}{{\zw_\cH}}
\newcommand{\oP}{{\zw_{\P(\cH)}}}
\newcommand{\im}{{\mathcal{I}m}}
\newcommand{\re}{{\mathcal{R}e}}
\newcommand{\nm}[1]{\ensuremath{\Vert #1 \Vert}}
\newcommand{\norm}[1]{\left\lVert#1\right\rVert}
\newtheorem{theorem}{Theorem}[section]
\newtheorem{proposition}{Proposition}[section]
\newtheorem{lemma}{Lemma}[section]
\newtheorem{corollary}{Corollary}[section]
\theoremstyle{definition}
\newtheorem{example}{Example}[section]
\newtheorem{definition}{Definition}[section]
\newtheorem{remark}{Remark}[section]
\newcommand{\epf}{\hfill$\square$}
\begin{document}

\title{GEOMETRY OF QUANTUM DYNAMICS \\ IN INFINITE DIMENSION}
\author{Janusz Grabowski\footnote{email: jagrab@impan.pl} \\
\textit{Institute of Mathematics, Polish Academy of Sciences} \\
\\
Marek Ku\'s\footnote{email: marek.kus@cft.edu.pl}\\
\textit{Center for Theoretical Physics, Polish Academy of Sciences}
 \\
\\
Giuseppe Marmo\footnote{email: marmo@na.infn.it}\\
\textit{Dipartimento di Fisica, Universit\`{a} ``Federico II'' di Napoli} \\
\textit{and Istituto Nazionale di Fisica Nucleare, Sezione di Napoli} \\
\\ Tatiana Shulman\footnote{email: tshulman@impan.pl}\\
\textit{Institute of Mathematics, Polish Academy of Sciences} \\
}
\date{}
\maketitle

\maketitle

\begin{abstract}
We develop a geometric approach to quantum mechanics based on the concept of
the Tulczyjew triple. Our approach is genuinely infinite-dimensional and including a Lagrangian formalism in which self-adjoint (Schr\"odinger) operators are obtained as Lagrangian submanifolds associated with the Lagrangian. As a byproduct we obtain also results concerning coadjoint orbits of the unitary group in infinite dimension, embedding of the Hilbert projective space of pure states in the unitary group, and an approach to self-adjoint extensions of symmetric relations.

\end{abstract}

\section{Introduction}
There is a widespread belief among physicists that the Lagrangian and Hamiltonian descriptions
of evolutionary systems are essentially equivalent. That this is not true is quite clear from the considerations of fundamental interactions described by gauge theories. Indeed, to deal with these situations Dirac and Bergmann (separately) elaborated the so called theory of constraints for degenerate Lagrangians \cite{anderson51,bergmann55,dirac50,dirac58} (see also \cite{marmo83}).

A clear cut geometrical approach to these cases was proposed by Tulczyjew \cite{tul1} (see also \cite{TU}). What emerges from this approach may be summarized in a sentence: Lagrangian description provides an implicit differential equation, while the Hamiltonian one provides an explicit differential equation.

In quantum mechanics, due to the probabilistic interpretation, one usually assumes that the evolution is described by a one-parameter group of unitary transformations. By means of the Stone theorem, this assumption requires the infinitesimal generator to be an essentially self-adjoint operator acting on the separable complex Hilbert space $\cH$ associated with the physical system. As it is well known, for instance to properly deal with the canonical commutation relations, one is obliged to consider unbounded operator as the one describing the evolution.
These operators give rise to what is known as the ``domain problem''. It means that in general the Schr\"odinger-type equation one writes to describe the evolution is not defined on the total Hilbert space but only on a subset and one is forced to study the problem of extending this subset to turn the operator into an essential self-adjoint operator, so that it may be integrated to a one-parameter group of unitary transformations.

In recent times (\cite{ashtekar96,schilling96} and \cite[Chapter 6]{carinena15}) it has been observed that within a geometric approach to quantum mechanics, symplectic, Riemannian and complex structures naturally emerge from the Hermitian structure one uses to define the inner product in the Hilbert space. Thus, these geometrical structures naturally call for a more geometrical description of the problem of motion, i.e., the study of self-adjoint operators in geometrical terms.

The aim of this paper is to provide such a geometrical description for the study of self-adjoint operators and their ``domains of self-adjointness".
In particular, we view such domains as the constraints in the Lagrangian picture in which the self-adjoint operators are identified with certain Lagrangian submanifolds of the `symplectic manifold' $\cH\oplus\cH$. Consequently, the Cayley transform is viewed as a symplectomorphism between two such symplectic structures which clearly maps Lagrangian submanifolds onto Lagrangian submanifolds and immediately leads to the von Neumann theorem describing self-adjoint extension of symmetric operators.

In our approach we follow the Tulczyjew's framework for Lagrangian and Hamiltonian formalism viewing consequently the quantum dynamics (Schr\"odinger operators) as certain Lagrangian submanifolds.

Contrary to many works on the geometrical quantum mechanics, we do not restrict considerations to the Hamiltonian picture and do not reduce them to finite-dimensional Hilbert spaces, carefully describing the topological and geometrical structures in the infinite dimension, in particular the Hilbert projective space.

As a byproduct, studying coadjoint orbits of the unitary group of the Hilbert space we recognize closed orbits as those of finite rank operators, that corrects wrong statements known from the literature. To show that the topology of the Hilbert projective space of pure states coincides with the quotient topology of the orbit space, we find a nice (local) embedding of pure states into the unitary group.

We provide examples of quadratic quantum Lagrangians, also constrained ones, and the corresponding Hamiltonians, as well as some remarks concerning the Heisenberg picture and composite systems.

The paper is organized in the following manner.

In section 2 we present the concept of implicit dynamics in the classical picture, and in section 3 we provide the Tulczyjew's approach to the Lagrangian and Hamiltonian mechanics which we will follow in the quantum case.
Section 4 is devoted to reviewing basic functional analysis needed in quantum mechanics. In section 5 we study coadjoint orbits of the unitary group and section 6 is devoted to the geometry and topology of quantum states. Quantum dynamics in the Tulczyjew picture we present in section 7 with many examples, while in section 8 we study self-adjoint extensions of symmetric relations and present a version of the von Neumann theorem. In the next section we place a few comments on the Heisenberg picture and composite systems and we end up with concluding remarks.

\section{Implicit dynamics in Classical Mechanics}
\subsection{Implicit differential equations}
Let us start with an explanation of what we will understand as implicit dynamics on a manifold $N$.
\begin{definition} An {\it (implicit) first-order ordinary differential equation (implicit dynamics)} on a manifold $N$
will be understood as a subset $\sD$ of the tangent bundle $\sT N$. We say that a smooth curve $\zg:\R\ra N$
(or a smooth path $\zg:[t_0,t_1]\ra N$) {\it satisfies the equation} $\sD$ (or {\it is a solution of $\sD$}),
if its tangent prolongation $\st\zg=(\zg,\dot\zg):\R\ra\sT N$ (resp., $\st{\zg}:[t_0,t_1]\ra\sT N$) takes values in
$\sD$. A curve (or a path) $\wt{\zg}$ in $\sT N$ we call {\it admissible}, if it is the tangent prolongation
of its projection ${\wt{\zg}}_N$ on $N$.
\end{definition}
\begin{example} (explicit differential equations)
Of course, the most studied are first-order \emph{explicit} ordinary differential equation which are given as the range $\sD=X(N)\subset\sT N$ of a vector field $X:N\to\sT N$ on $N$. Solutions in this case are called \emph{trajectories} of $X$.
\end{example}

According to the above definition, solutions of an implicit dynamics $\sD\subset\sT N$ on a manifold $N$ are projections
$\wt{\zg}_N$ of admissible curves $\wt{\zg}$ lying in $D$. Note, however, that different implicit differential
equations may have the same set of solutions. First of all, if $\sD$ is supported on a subset $N_0$,
$\zt_N(\sD)=N_0$, only vectors from $\sD\cap\sT N_0$ do matter if solutions are concerned. Hence,
the \emph{first integrability extract}
\be\label{ie}\sD^1=\sD\cap\sT N_0
\ee
has the same solutions as $\sD$, and $\sD\subset \sT N_0$ is the {\it first
integrability condition}. Of course, replacing $\sD$ with $\sD^1$ may turn out to be an infinite procedure, but this will not happen in examples considered
in this paper. Of course, explicit differential equations are automatically integrable.

\begin{example}
Consider an implicit dynamics on $N=S^1\ti\R^4$ given by $\sD\subset\sT N$ parameterized by $(\zf,y^1,y^2)\in S^1\ti\R^2$ as follows:
\beas \sD&=&\{(\zf,\zx_1,\zx_2,\zx_3,\zx_4,\dot{\zf},\dot{\zx}_1,\dot{\zx}_2,\dot{\zx}_3,\dot{\zx}_4):
\zx_1=J_1y^1\,,\ \zx_2=(mR^2+J_2)y^2\,,\\
&& \zx_3=mRy^2\cos{\zf}\,,\ \zx_4=mRy^2\sin{\zf}\,,\ \dot{\zf}=y^1\,,\ \dot{\zx}_1=0\,,\ \dot{\zx}_2=0\}\,,
\eeas
where $m,R,J_1,J_2$ are some constants. The dynamics is clearly not explicit due to the constraints \be\label{e61}\zx_3=\zm\cos{\zf}\cdot\zx_2,\quad \zx_4=\zm\sin{\zf}\cdot\zx_2,\,
\ee
where $\zm=\frac{mR}{mR^2+J_2}$.
We have the equations
$$\dot{\zf}=\zx_1/J_1\,,\ \dot{\zx}_1=0\,,\ \dot{\zx}_2=0\,,$$
but $\dot{\zx}_3$ and $\dot{\zx}_4$ are arbitrary. It is interesting that the first integrability condition allows us to describe them
as well, since it gives
$$\dot{\zx}_3=-\zm\zx_2\sin{\zf}\cdot\dot{\zf}=-\frac{\zm}{J_1}\zx_1\zx_2\sin{\zf}\,,\quad
\dot{\zx}_4=\zm\zx_2\cos{\zf}\cdot\dot{\zf}=\frac{\zm}{J_1}\zx_1\zx_2\cos{\zf}\,.
$$
Thus, the dynamics is integrable. Actually, it is the properly understood phase dynamics  of vertical rolling disc on a plane in a Dirac algebroid setting studied in \cite{Grabowska:2011}.
\end{example}

All this can be generalized to ordinary implicit differential equations of arbitrary order. In this case we
consider $\sD$ as a subset of higher jet bundles, the $n$-th tangent bundle $\sT^n N$ in case of an equation
of order $n$, and consider $\zg$ as a solution if its $n$-th jet prolongation takes values in $\sD$. If we
call the $n$-th jet prolongations {\it admissible} in $\sT^n N$, then solutions of $\sD$ are exactly
projections $\wt{\zg}_N$ to $N$ of admissible curves (or paths) $\wt{\zg}$ in $\sT^n N$ lying in $\sD$.

\begin{remark} The implicit differential equations described above are called by some authors {\it differential relations}.
Let us explain that we use the most general definition, not requiring from $\sD$ any differentiability
properties, since in real life the dynamics $\sD$ we encounter are often not submanifolds. This generality is
also very convenient, as allows us to skip technical difficulties in the corresponding Lagrangian and
Hamiltonian formalisms. Of course, what is a ballast in defining implicit dynamics can be very useful in
solving the equations, but in our opinion, solving could be considered case by case, while geometric
formalisms of generating dynamics should be as general as possible. Note also that for any subset $N_0$ of a
manifold $N$ the tangent prolongations $\sT N_0$, $\sT^2 N_0$, etc., make precisely sense as subsets of $\sT
N$, $\sT^2 N$, etc. They are simply understood as families of the corresponding jets of appropriately smooth
curves in $N$ which take values in $N_0$.
\end{remark}

\subsection{Lagrangian submanifolds}
Note, however, that for many instances in Classical Mechanics, $N$ is a phase space equipped with a symplectic form $\zw$ and the phase-space implicit dynamics we encounter are not only submanifolds of $\sT N$, but \emph{Lagrangian submanifolds} of $\sT N$.

Let us explain this statement briefly. First, recall that a Lagrangian submanifold $\cL$ of a symplectic manifold $(N,\zw)$ of dimension $2n$ is a submanifold of dimension $n$ on which the symplectic form vanishes. In the case of a standard phase space, $N=\sT^*Q$, equipped with the canonical symplectic form $\zw_Q$, the following is well known.

\begin{proposition}\label{p1} The range $\cL=\za(Q)$ of a one-form $\za$ on $Q$ viewed as a section $\za:Q\to\sT^*Q$ of the cotangent bundle, is a Lagrangian submanifold in $\sT^*Q$ if and only if
$\za$ is a closed form. Moreover, the above Lagrangian submanifolds can be characterized as those
for which the canonical projection $\zp_Q:\sT^*Q\to Q$ induces a diffeomorphism on $\cL$ onto $Q$.
\end{proposition}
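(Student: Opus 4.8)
The plan is to reduce everything to one structural fact about $\sT^*Q$: it carries a canonical (tautological, Liouville) one-form $\zvy_Q$, characterised pointwise by $\langle\zvy_Q(p),v\rangle=\langle p,\mathrm{d}_p\zp_Q(v)\rangle$ for $p\in\sT^*Q$ and $v\in\sT_p(\sT^*Q)$, with $\zw_Q=-\mathrm{d}\zvy_Q$ (the sign is irrelevant below). There is no genuine obstacle in this proposition --- it is classical --- and the one point that needs care is the reproducing identity $\za^*\zvy_Q=\za$, valid for every one-form $\za$ on $Q$ regarded as a section $\za:Q\to\sT^*Q$: for $q\in Q$ and $w\in\sT_qQ$ one computes $\langle(\za^*\zvy_Q)(q),w\rangle=\langle\za(q),\mathrm{d}_q(\zp_Q\circ\za)(w)\rangle=\langle\za(q),w\rangle$, using $\zp_Q\circ\za=\Id_Q$; in Darboux coordinates this is just $\zvy_Q=p_i\,\mathrm{d}q^i$ together with $\za^*(p_i\,\mathrm{d}q^i)=\za_i(q)\,\mathrm{d}q^i$. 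This bookkeeping identity is the only real content; everything else is formal.

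Granting it, the first assertion follows at once. A section of a vector bundle is an embedding, so $\za:Q\to\cL$ is a diffeomorphism onto its image and $\cL\subset\sT^*Q$ is a submanifold with $\dim\cL=\dim Q=\tfrac12\dim\sT^*Q$; hence $\cL$ is Lagrangian precisely when it is isotropic, i.e.\ when $\zw_Q$ restricts to zero on $\cL$, equivalently when $\za^*\zw_Q=0$. But $\za^*\zw_Q=-\mathrm{d}(\za^*\zvy_Q)=-\mathrm{d}\za$, so $\cL=\za(Q)$ is Lagrangian if and only if $\za$ is closed, which settles both implications simultaneously. (When $\za$ is not closed, $\cL$ remains an embedded submanifold of the right dimension --- it merely fails to be isotropic --- so the criterion is sharp.)

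For the ``moreover'' clause I would argue in both directions. If $\cL=\za(Q)$ with $\za$ a one-form, then $\zp_Q\circ\za=\Id_Q$ exhibits $\zp_Q|_\cL$ as a bijection onto $Q$ with smooth inverse $\za$, hence a diffeomorphism. Conversely, if $\cL\subset\sT^*Q$ is a Lagrangian submanifold for which $\zp_Q|_\cL:\cL\to Q$ is a diffeomorphism, set $\za:=(\zp_Q|_\cL)^{-1}:Q\to\sT^*Q$; this is a smooth section of $\zp_Q$, i.e.\ a one-form on $Q$, with $\za(Q)=\cL$, and the first part forces $\za$ to be closed. I would end with the remark that the only place the finiteness of $\dim Q$ intervenes is the half-dimension count upgrading isotropy to Lagrangianity; the identities $\za^*\zvy_Q=\za$, $\za^*\zw_Q=-\mathrm{d}\za$, and the diffeomorphism statement are dimension-free and will reappear, suitably reinterpreted, in the Hilbert-space setting of the later sections.
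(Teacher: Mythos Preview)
Your proof is correct and is the standard argument via the tautological one-form: the reproducing identity $\za^*\zvy_Q=\za$ yields $\za^*\zw_Q=-\xd\za$, reducing isotropy to closedness, and the ``moreover'' is handled cleanly by reading off the inverse section. There is nothing to compare against, however: the paper states this proposition as ``well known'' and gives no proof of its own, so your argument simply supplies what the paper omits.
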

In particular,
any function $f:Q\to\R$ generates a Lagrangian submanifold $\xd f(Q)\subset\sT^*Q$ being the image of the differential 1-form $\xd f$.

Let $\cL\subset\sT^*Q$ be a Lagrangian submanifold of a cotangent bundle. A \emph{regular
point} $p\in\cL$ is a point where the derivative of
${\zp_Q}_{|\cL}$ is surjective, i.e.\   $\cL$ is transversal to the fibers of the projection $\zp_Q$.
A nonregular point is called \emph{singular} or \emph{critical}. A singular point of a Lagrangian
submanifold is also called \emph{Lagrangian singularity} or \emph{catastrophe}. The
set of all the points of Q on which are based the singular points of $\cL$ is
called the \emph{caustic} of $\cL$.

The above propositions implies that $\cL$ is the image of a closed 1-form only if it is \emph{regular}, i.e.\   has no singular points.

\subsection{Tangent lifts of forms and Hamiltonian vector fields}
There exists a derivation $\dt$ (cf.\    \cite{GU,YI}) on the exterior algebra of forms on a
manifold $N$ with values in the exterior algebra of forms of the
tangent bundle $\sT N$  which plays essential r\^ole in the calculus of
variations (\cite{tul1,tul}) and in analytical
mechanics. If, in local coordinates, $\zm = \zm_{i_1\dots i_r}\xd x^{i_1}\wedge
\cdots\wedge \xd x^{i_r}$, then
		$$ \dt\zm (x,\dot x) = \frac{\partial \zm_{i_1\dots i_r}}{\partial x^k}(x)
  \dot x^k \xd x^{i_1}\wedge \dots \wedge \xd x^{i_r} + \sum_m
\zm_{i_1\dots i_r}(x) \xd x^{i_1}\wedge \dots\wedge \xd \dot x^{i_m}\wedge \dots \xd x^{i_r}
														$$
for $r>0$ and
		$$ \dt\zm(x, \dot x) = \frac{\partial \zm}{\partial x^i}(x)\dot x^i
																$$
for $r=0$. The form $\dt\zm$ we call the \emph{tangent lift} of $\zm$.

Let $\zm$ be a 2-form on $N$ and $\zm^\flat:\sT N\to \sT^* N$ be the vector bundle morphism induced by the contraction with $\zm$, i.e.\   $\zm^\flat(X_p)=i_{X_p}\zm(p)$. We have the following (\cite{GU,tul1,tul}).
\begin{proposition}\label{p2}
A 2-form $\zm$ is closed if and only if $\dt\zm=(\zm^\flat)^*(\zw_N)$. In particular, if $\zw$ is symplectic, then $\dt\zw$ is symplectic and $\zw^\flat$ is a symplectomorphism between $(\sT N,\dt\zw)$ and $(\sT^*N,\zw_N)$.
\end{proposition}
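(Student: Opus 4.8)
The plan is to prove Proposition \ref{p2} by a direct coordinate computation, checking the two sides of the identity $\dt\zm=(\zm^\flat)^*(\zw_N)$ term by term and then reading off the consequences for symplectic $\zw$.

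First I would set up local coordinates $(x^i)$ on $N$, so that $(x^i,\dot x^i)$ are the induced coordinates on $\sT N$ and $(x^i,p_i)$ the induced coordinates on $\sT^*N$, with the canonical one-form $\zvy_N=p_i\,\xd x^i$ and $\zw_N=\xd x^i\wedge\xd p_i$. Writing $\zm=\frac12\zm_{ij}\,\xd x^i\wedge\xd x^j$ with $\zm_{ij}=-\zm_{ji}$, the bundle map $\zm^\flat:\sT N\to\sT^*N$ is given in coordinates by $\zm^\flat(x,\dot x)=(x^i,\;p_i=\zm_{ij}(x)\dot x^j)$. Pulling back $\zw_N$ along this map, I would compute $(\zm^\flat)^*(\zw_N)=\xd x^i\wedge\xd\!\left(\zm_{ij}(x)\dot x^j\right)=\frac{\pa\zm_{ij}}{\pa x^k}\,\xd x^i\wedge\xd x^k\,\dot x^j+\zm_{ij}\,\xd x^i\wedge\xd\dot x^j$. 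On the other side, the formula for the tangent lift given just above the proposition yields $\dt\zm=\frac12\frac{\pa\zm_{ij}}{\pa x^k}\dot x^k\,\xd x^i\wedge\xd x^j+\frac12\zm_{ij}\left(\xd\dot x^i\wedge\xd x^j+\xd x^i\wedge\xd\dot x^j\right)$; using antisymmetry of $\zm_{ij}$ the second group collapses to $\zm_{ij}\,\xd x^i\wedge\xd\dot x^j$, which already matches the corresponding term in $(\zm^\flat)^*\zw_N$. So the identity $\dt\zm=(\zm^\flat)^*\zw_N$ reduces to the purely ``horizontal'' identity $\frac12\frac{\pa\zm_{ij}}{\pa x^k}\dot x^k\,\xd x^i\wedge\xd x^j=\frac{\pa\zm_{ij}}{\pa x^k}\dot x^j\,\xd x^i\wedge\xd x^k$ as 2-forms in the $\xd x$'s with $\dot x$ as parameter, and one checks by relabelling indices that this holds precisely when the cyclic sum $\frac{\pa\zm_{ij}}{\pa x^k}+\frac{\pa\zm_{jk}}{\pa x^i}+\frac{\pa\zm_{ki}}{\pa x^j}$ vanishes, i.e.\ exactly when $\xd\zm=0$.

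For the ``in particular'' clause: if $\zw$ is symplectic then it is closed, so by the first part $\dt\zw=(\zw^\flat)^*\zw_N$. Since $\zw$ is nondegenerate, $\zw^\flat:\sT N\to\sT^*N$ is a vector-bundle isomorphism (fibrewise invertible, covering $\id_N$), hence a diffeomorphism; and $\zw_N$ is a symplectic form on $\sT^*N$. Therefore $\dt\zw=(\zw^\flat)^*\zw_N$ is the pullback of a symplectic form along a diffeomorphism, so it is symplectic, and $\zw^\flat$ is by construction a symplectomorphism $(\sT N,\dt\zw)\to(\sT^*N,\zw_N)$.

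I expect the main (really the only) delicate point to be the bookkeeping in the horizontal term: one must be careful with the factors of $\tfrac12$ coming from the convention $\zm=\tfrac12\zm_{ij}\,\xd x^i\wedge\xd x^j$, and with the index relabelling that turns $\frac{\pa\zm_{ij}}{\pa x^k}\dot x^j\,\xd x^i\wedge\xd x^k$ into a manifestly cyclic expression, so that the equality becomes visibly equivalent to the cocycle/closedness condition $\xd\zm=0$ rather than to something stronger. Everything else — that $\zw^\flat$ is a diffeomorphism when $\zw$ is nondegenerate, and that pullback by a diffeomorphism preserves symplecticity — is standard and needs no computation. Since the statement is local, the coordinate argument suffices; in the infinite-dimensional setting one would instead phrase $\zm^\flat$ and $\dt$ intrinsically, but the identity $\dt\zm=(\zm^\flat)^*\zw_N\iff\xd\zm=0$ is the content either way.
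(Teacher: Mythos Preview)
Your coordinate computation is correct and constitutes a complete proof. Note, however, that the paper does not actually prove this proposition: it is quoted as a known result with a citation to \cite{GU,tul1,tul} and no argument is given in the text, so there is nothing to compare against. Your approach is the standard one from those references --- compute both $\dt\zm$ and $(\zm^\flat)^*\zw_N$ in induced coordinates, observe that the mixed $\xd x\wedge\xd\dot x$ terms always agree, and identify the discrepancy in the purely horizontal part with the cyclic sum $\partial_k\zm_{ij}+\partial_i\zm_{jk}+\partial_j\zm_{ki}$, i.e.\ with $\xd\zm$.

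One small bookkeeping remark: with the definition $\zm^\flat(X)=i_X\zm$ and $\zm=\tfrac12\zm_{ij}\,\xd x^i\wedge\xd x^j$ one finds $p_i=\zm_{ji}\dot x^j=-\zm_{ij}\dot x^j$, not $p_i=\zm_{ij}\dot x^j$; reconciling this with your choice $\zw_N=\xd x^i\wedge\xd p_i$ (the paper later uses $\xd p_k\wedge\xd q^k$) introduces a global sign that you should track once, but it does not affect the substance of the argument or the equivalence with $\xd\zm=0$.
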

On a symplectic manifold $(N,\zw)$, locally Hamiltonian vector fields correspond, \emph{via} the isomorphism of vector bundles $\zm^\flat:\sT N\to \sT^* N$, to closed one-forms. Since in the case when the first cohomology group
$H^1(N)$ vanishes, e.g. $N$ is contractible, there is no difference between closed and exact 1-forms, so locally and globally Hamiltonian vector fields, in view of propositions \ref{p1} and \ref{p2}, we get the following.
\begin{proposition}\label{p3}
A vector field $X\colon N\rightarrow \sT N$ on a contractible symplectic manifold $N$ is  Hamiltonian if and only if its image $X(M)$
is a Lagrangian submanifold of $(\sT M, \dt \zw )$.
\end{proposition}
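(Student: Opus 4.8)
\noindent\emph{Proof proposal.} The idea is to push the whole question through the symplectomorphism furnished by Proposition~\ref{p2} and then read off the answer from Proposition~\ref{p1} together with the contractibility of $N$.

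Since $\zw$ is symplectic, Proposition~\ref{p2} says that $\dt\zw$ is symplectic on $\sT N$ and that the contraction map $\zw^\flat\colon(\sT N,\dt\zw)\to(\sT^*N,\zw_N)$ is a symplectomorphism. Being a vector bundle isomorphism over $\Id_N$, $\zw^\flat$ sends a vector field $X\colon N\to\sT N$ to a $1$-form $\za\de\zw^\flat\circ X\colon N\to\sT^*N$; by the very definition $\zw^\flat(X_p)=i_{X_p}\zw(p)$ this is $\za=i_X\zw$. Moreover $\zw^\flat$ maps the image $X(N)$ diffeomorphically onto the image $\za(N)$. (Here $X(N)$ is automatically a submanifold of $\sT N$ because $X$ is an embedding, $\zt_N\circ X=\Id_N$, and $\za(N)$ is a submanifold for the same reason.) Since a symplectomorphism carries Lagrangian submanifolds to Lagrangian submanifolds, $X(N)$ is Lagrangian in $(\sT N,\dt\zw)$ if and only if $\za(N)$ is Lagrangian in $(\sT^*N,\zw_N)$.

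Now I would apply Proposition~\ref{p1}: $\za(N)$ is Lagrangian in $\sT^*N$ precisely when $\za$ is a closed $1$-form, i.e.\ $\xd(i_X\zw)=0$. By Cartan's identity $\Ll_X\zw=\xd(i_X\zw)+i_X\xd\zw=\xd(i_X\zw)$ (using $\xd\zw=0$), this is exactly the condition that $X$ be locally Hamiltonian. Finally, because $N$ is contractible we have $H^1(N)=0$, hence every closed $1$-form on $N$ is exact, so locally Hamiltonian vector fields on $N$ are precisely the (globally) Hamiltonian ones. Concatenating the three equivalences yields $X$ Hamiltonian $\iff$ $\za$ closed $\iff$ $\za(N)$ Lagrangian $\iff$ $X(N)$ Lagrangian, which is the assertion.

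There is no serious obstacle here: the proof is essentially an assembly of Propositions~\ref{p1} and~\ref{p2}. The only step needing genuine care is the last one, the identification of locally Hamiltonian with Hamiltonian vector fields --- this is where contractibility (equivalently $H^1(N)=0$) is indispensable, and it is exactly the subtlety already emphasized in the paragraph preceding the proposition; without that hypothesis only the ``locally Hamiltonian'' version of the statement survives. One should also not forget the trivial-but-necessary bookkeeping that $\zw^\flat\circ X$ is a bona fide section of $\sT^*N$ and that $\zw^\flat$ respects the submanifold structure of the images involved, but both are immediate from $\zw^\flat$ being a vector bundle isomorphism covering the identity.
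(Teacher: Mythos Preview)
Your argument is correct and follows exactly the route the paper indicates: the paper does not give a standalone proof but simply points to Propositions~\ref{p1} and~\ref{p2} together with the remark that $H^1(N)=0$ on a contractible manifold, and your write-up is precisely the detailed unpacking of that chain of equivalences. There is nothing to add or correct.
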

The concept of a generalized Hamiltonian system can be introduced as a Lagrangian
submanifold of $(\sT N, \dt \zw)$. The infinitesimal  dynamics
of a relativistic particle is an example of such a system.

\begin{example}\label{rp} (cf.\    \cite{TU})
The (implicit) phase-space dynamics of a free relativistic particle in a Minkowski space $Q$ is  described by equations
        \bea\label{rp1}
            0 &= g^{\zk\zl}p_\zk p_\zl \\
            \dot q^\zk &= v g^{\zk\zl}p_\zl \\
            \dot p_\zk &= -\frac{v}{2} \partial_\zk g^{\zm\zn}p_\zm p_\zn\,,
        \eea
where $g_{\zk\zl}$ is the Minkowski metric and $v>0$ (more precisely, we should take only the `future part' of (\ref{rp1})). The equations describe a Lagrangian submanifold $\sD$ in $\sT\sT^*Q$ which is not the range of any vector field on $\sT^*Q$ due to the constraint
$g^{\zk\zl}p_\zk p_\zl =0 $. However, following Tulczyjew \cite{tul1,tul}, it is possible to obtain the above dynamics from a \emph{constrained Lagrangian}, as we explain in the next section.
\end{example}

\section{The Tulczyjew triple}
The canonical symplectic form $\zw_M$ on $\sT^\* M$ induces an isomorphism
$$\zb_M=(\zw_M)^\flat:\sT\sT^\* M\to\sT^\*\sT^\* M\,.$$
Composing it with $\cR_{\sT M}$, where, for any vector bundle $E$,
$$\cR_E:\sT^\*E^\*\to \sT^\*E$$
is the well-known canonical isomorphism (see e.g. \cite{KU,Ur}),
 we get the map
$$\za_M:\sT\sT^\* M\to\sT^\*\sT M\,.$$
Using the standard coordinates $(x^\zm,\dot{x}^\zn)$ and $(x^\zm,p_\zn)$ on $\sT M$ and $\sT^*M$, respectively, and the adapted coordinates on $\sT^*\sT M$ and $\sT\sT^*M$, we can write
\be\label{alpha} \za(x,p,\dot x,\dot p)=(x,\dot x,\dot p,p)\,.
\ee
This gives rise to the commutative diagram of \emph{double vector bundle isomorphisms} and symplectomorphisms (the \emph{Tulczyjew triple})
\be\label{tt}
    \xymatrix@R-4mm @C-10mm
        { & \sT^\*\sT^\*M \ar[ldd]_*{} \ar[rd]^*{} & & & \sT \sT^\* M \ar[rrr]^*{{\za}_M}
        \ar[lll]_*{\zb_M} \ar[ldd]^*{} \ar[rd]^*{}& & & \sT^\*\sT M \ar[ldd]^*{} \ar[rd]^*{} & \cr
        & & \sT M \ar[ldd]^*{} & & & \sT M  \ar[ldd]^*{} \ar@{=}[lll]^*{} \ar@{=}[rrr]^*{} & & & \sT M \ar[ldd]^*{}  \cr
         \sT^\* M  \ar[rd]^*{}  & & & \sT^\* M \ar@{=}[rrr]^*{} \ar@{=}[lll]^*{} \ar[rd]^*{}& & & \sT^\* M
        \ar[rd]^*{} & & \cr
        & M  & & &  M \ar@{=}[rrr]^*{} \ar@{=}[lll]^*{} & & & M & }.
\ee

The map $\za_M$  (or $\zb_M$) encodes the \emph{Lie algebroid structure} of $\sT M$, i.e.\   the Lie bracket of vector fields (cf.\    \cite{GU3,GU1}).

The Lagrangian and Hamiltonian formalisms have simple description in terms of the Tulczyjew triple. The true physical dynamics, the \emph{phase dynamics}, will be described as an implicit first order differential equation on the \emph{phase space} $\sT^*M$, given by a submanifold $\sD\subset\sT\sT^*M$. Note that this picture, together with the implication to geometrical mechanics, can be easily extended to the case of an arbitrary Lie algebroid (or even a \emph{general algebroid} in the sense of \cite{GU3,GU1}), as shown in \cite{Grabowska:2008,Grabowska:2006}.

\subsection{The Tulczyjew triple - Lagrangian formalism}

Starting with a Lagrangian $L:\sT M\rightarrow \R$ we derive the diagram

$$\xymatrix@C-20pt@R-10pt{
{\sD}\ar@{ (->}[r]& \sT\sT^\ast M \ar[rrr]^{\alpha_M} \ar[dr]\ar[ddl]
 & & & \sT^\ast\sT M\ar[dr]\ar[ddl] & \\
 & & \sT M\ar@{.}[rrr]\ar@{.}[ddl]
 & & & \sT M \ar@{.}[ddl]\ar@/_1pc/[ul]_{\xd L}\ar[dll]_{\zl_L}\ar[ullll]_{\mathcal{T}L}\\
 \sT^\ast M\ar@{.}[rrr]\ar@{.}[dr]
 & & & \sT^\ast M\ar@{.}[dr] & &  \\
 & M\ar@{.}[rrr]& & & M &
}$$
The dynamics
$$\sD=\alpha_M^{-1}(\xd L(\sT M)))=\mathcal{T}L(\sT M)\,,$$
is given as the range of the \emph{Tulczyjew differential} $\mathcal{T}L=\za_M^{-1}\circ\xd L$.
In local coordinates,
\be\label{phd}\sD=\left\{(x,p,\dot x,\dot p):\;\; p=\frac{\partial L}{\partial \dot x},\quad \dot p=\frac{\partial L}{\partial x}\right\}\,.
\ee
In the diagram we indicated also the \emph{Legendre map},
\be\label{leg}\zl_L:\sT M\rightarrow \sT^\ast M, \;\; \quad \zl_L(x,\dot x)=
(x,\frac{\partial L}{\partial \dot x})\,.
\ee

\subsection{Euler-Lagrange equations}
Let now, $\zg:\R\to M$ be a curve in $M$ (of course, $\R$ can be replaced by an open interval), and $\st\zg:\R\to\sT M$ be its tangent prolongation. It is easy to see that both curves, $\xd L\circ\st\zg$ and $\za_M\circ\st(\zl_L\circ\st\zg)$ are curves in $\sT^*\sT M$ covering $\st\zg$. Therefore, their difference makes sense and, as easily seen, takes values in the annihilator $V^0\sT M$ of the vertical subbundle $V\sT M\subset\sT\sT M$. Since $V^0\sT M\simeq\sT M\ti_M\sT^*M$, we obtain a map
$\zd L_\zg:\R\to\sT^*M$. The above map is interpreted as the external force along the trajectory. Its value at $t\in\R$ depends on the second jet $\st^2\zg(t)$ of $\gamma$ only, so defines the variation of the Lagrangian, understood as a map
\be\label{work}\zd L:\sT^2M\to\sT^*M\,,
\ee
where $\sT^2M$, the second tangent bundle, is the bundle of all second jets of curves $\R\to M$ at $0\in\R$.
The equation
\be\label{EL}\zd L_\zg=\zd L\circ\st^2\zg=0
\ee
is known as the \emph{Euler-Lagrange equation} and tells that the curve $\xd L\circ\st\zg$ corresponds \emph{via} $\za_M$ to an \emph{admissible curve} in $\sT\sT^*M$, i.e.\   the tangent prolongation of a curve in $\sT^*M$. Here, of course, $\st^2\zg$ is the second tangent prolongation of $\zg$ to $\sT^2M$.

From (\ref{phd}) we get immediately the Euler-Lagrange equations in the form
$$\frac{\partial L}{\partial x}=\frac{\xd}{\xd t}\left(\frac{\partial L}{\partial \dot{x}}\right)\,.$$

\subsection{The Tulczyjew triple - Hamiltonian formalism}
The Hamiltonian formalism looks analogously. If {$H:\sT^\ast M\rightarrow \R$} is a Hamiltonian function, from the Hamiltonian side of the triple
{$$\hskip-1.2cm\xymatrix@C-20pt@R-10pt{
 & \sT^\ast\sT^\ast M  \ar[dr] \ar[ddl]
 & & & \sT\sT^\ast M\ar[dr]\ar[ddl] \ar[lll]_{\beta_M}&
 { \sD}\ar@{ (->}[l] \\
 & & \sT M\ar@{.}[rrr]\ar@{.}[ddl]
 & & & \sT M \ar@{.}[ddl]\\
 \sT^\ast M\ar@{.}[rrr]\ar@{.}[dr] \ar@/^1pc/[uur]^{\xd H}
 & & & \sT^\ast M\ar@{.}[dr] & &  \\
 & M\ar@{.}[rrr]& & & M &
}\qquad$$}
we derive the phase-space dynamics in the form
{$$\sD=\beta_M^{-1}(\xd H(\sT^\ast M))\,.$$}
It is automatically explicit, i.e.\   generated by the corresponding Hamiltonian vector field, so corresponds to a phase dynamics induced by a Lagrangian function only in regular cases.
In local coordinates,
{$$\sD=\left\{(x,p,\dot x,\dot p):\;\; \dot p=-\frac{\partial H}{\partial x},\quad \dot x=\frac{\partial H}{\partial p}\right\}\,,$$}
so we obtain the standard Hamilton equations.

The question of finding Hamiltonian description of the phase dynamics associated with a Lagrangian $L:\sT M\to\R$ can be now easily explained as the question of finding a Hamiltonian $H:\sT^*M\to\R$ such that $\zb^{-1}(\xd H(\sT^*M))=\za^{-1}(\xd L(\sT M))$, i.e.\   such that the Lagrangian submanifold $\xd H(\sT^*M)$ corresponds to the Lagrangian submanifold $\xd L(\sT M)$ \emph{via} the anti-symplectomorphism $\cR_{\sT M}$. It is always possible if the Lagrangian is hyper-regular.
If, however, we have started from a Lagrangian which is not regular, or we assume some constraints, the resulted phase dynamics may fail to come from a Hamiltonian, so $\sD$ being still Lagrangian submanifold is not of the form $X(N)$. A partial solution is to add also constrained Hamiltonians
into the picture.

\subsection{Constrained Lagrangians and Hamiltonians}
Starting with a \emph{constrained Lagrangian}, i.e.\   a Lagrangian defined only on a constraint manifold $S\subset {\sT M}$, we can slightly modify the above picture and get the diagram
$$\xymatrix@C-10pt@R-10pt{
 & & {\sD}\ar@{ (->}[d] & & & {S(L)}\ar[lll]\ar@{ (->}[d] & \\
  & & \sT {\sT^\ast M}  \ar[dr] \ar[ddl]\ar[rrr]_{\za}
 & & &{ \sT^\ast {\sT M}}\ar[dr]\ar[ddl] & \\
 & & & \sT M\ar[ddl]
 & & & {{\sT M}\supset {S}} \ar[ddl]\ar@{=}[lll]\ar@/_1pc/[uul]_{\mathcal{S} L}\ar[dll]_{\lambda_L}  \\
 &  {\sT M}^\ast\ar[dr]
 & & & {\sT M}^\ast\ar@{=}[lll]\ar[dr] & &  \\
 & & M  & & & M \ar@{=}[lll]&
}$$
Here, $S(L)$ is the Lagrangian submanifold in $\sT^\ast {\sT M}$ induced by the constrained Lagrangian and ${\mathcal{S} L}:S\to\sT^*{\sT M}$  is the corresponding relation. Recall that, in general, any smooth function $L:S\to\R$ defined on a submanifold $S$ of $N$ generates the Lagrangian submanifold $S(L)$ of $\sT^*N$:
\be\label{cl}S(L)=\{ \za_e\in\sT^\ast_e{N}: e\in S\text{\ and\ }\la\za_e,v_e\ran=\xd L(v_e)\text{\ for every\ } v_e\in\sT_eS\}\,.
\ee
Lagrangian submanifolds of this kind have been introduced in \cite{tul} and in the case $S=N$ we get the well-known Lagrangian submanifolds of $\sT^*N$ of the form $\xd L(N)$.
The \emph{vakonomically constrained phase dynamics} is just $\sD=\za^{-1}(S(L))\subset\sT {\sT^\ast M}$.  We stress that, due to the fact that we are dealing with a vakonomically constrained system, relations and not just genuine smooth maps naturally appear in the formalism.

Completely analogously we can obtain constrained phase dynamics $\sD\subset \sT N$ from a constrained Hamiltonian $H:N\supset S\to\R$ defined on a symplectic manifold $(N,\zw)$, namely
\be\label{cH}
\sD=(\zw^\flat)^{-1}(S_H)\,.
\ee
\begin{example} The implicit dynamics $\sD$ of a free relativistic particle described in example \ref{rp} is of the form (\ref{cH}), with the trivial Hamiltonian $H=0$ defined on the constraint
$S\subset\sT^*Q$ being the `future part' of the cones $g^{\zk\zl}p_\zk p_\zl =0 $.
\end{example}

Our aim is to develop an analogous geometric and rigorous approach to Quantum Mechanics in the spirit of \cite{ashtekar96,grabowski05} but based on Tulczyjew triples (so including the Lagrangian part of the theory) and in the most physically interesting case of an infinite-dimensional Hilbert space. Note that this approach allows in principle also for non-linear dynamics, e.g. nonlinear Schr\"odinger equations.

\section{The unitary group and topology}
To fix the terminology and notation, let us recall that the main geometrical objects of Quantum Mechanics are associated with a (complex) Hilbert space $\cH$ which is equipped with a Hermitian inner product $\bk{\cdot}{\cdot}$ and the corresponding norm
$$\| x\|=\| x\|_\cH:=\sqrt{\bk{x}{x}}\,.$$
The Hilbert space is generally infinite-dimensional, but we will assume it is separable, i.e.\   $\cH$ admits a countable orthonormal basis.

First of all, one can consider the group of automorphisms of the Hilbert space, preserving the complex linear and Hermitian structures, i.e.\   the unitary group $\U(\cH)$. This group is naturally included in the group $\GL(\cH)$ of invertible elements of the algebra $\gl(\cH)$ of all
continuous complex linear maps $A:\cH\to\cH$. Note that $\gl(\cH)$ is a $C^*$-algebra with respect to the operator norm
\be\label{operatornorm}\| A\|=\sup\{\| Ax\|\,: x\in\cH\,,\quad \| x\|\le 1\}
\ee
and the $*$-operation being the Hermitian conjugation $A\mapsto A^\dag$, where
\be\label{dag}\bk{A^\dag x}{y}=\bk{x}{Ay}\,.\ee
Operators from $\gl(\cH)$ satisfying $A^\dag=A$ we call \emph{Hermitian}; those with $A^\dag=-A$ \emph{anti-Hermitian}. With $\h(\cH)$ we will denote the (real) Banach subspace of Hermitian) operators.
\begin{remark}
Actually, the adjoint operator $A^\dag$ makes sense even for densely defined operators $A:\cH\supset\cD\to\cH$. The domain of $A^\dag$ consists of those $x\in\cH$ for which the functional $$\cD\ni y\mapsto \bk{x}{Ay}\in\C$$ is continuous and (\ref{dag}) defines $A^\dag$ on this domain.
\end{remark}

Since $\GL(\cH)$ is an open subset in $\gl(\cH)$, it is easy to see that the group $\GL(\cH)$ is a (complex) Banach-Lie group modelled on $\gl(\cH)$. What is more, $U(\cH)$ is its (real) Lie subgroup
defined as a level set of the smooth map
$$\gl(\cH)\ni A \mapsto AA^\dag\in h(\cH)\,.$$
Indeed, elements $U$ from $\U(\cH)$ are characterized by $UU^\dag=I$ and the above map is a submersion in a neighbourhood of $\U(\cH)$ (cf.\   \cite{bourbaki1}). Consequently, the Lie algebra of $\U(\cH)$ is the real Banach subspace $\u(\cH)$ of $\gl(\cH)$ consisting of anti-Hermitian  operators and equipped with the commutator bracket.
\begin{remark}
Besides the norm topology, the unitary group carries another topology in which it is also a topological group (cf.\   \cite{schott}). This is the \emph{strong topology}:
$$U_k\rightarrow U\quad\Leftrightarrow\quad \forall\, x\in\cH\,[\,U_kx\rightarrow Ux\,]\,.$$
However, $\U(\cH)$ is not a Lie group with respect to the strong topology. The strong one-parameter subgroups are known to be generated by (generally unbounded and only densely defined) anti self-adjoint operators, i.e.\   operators $iA$, where $A$ is self-adjoint (Stone theorem). However, the set of general anti self-adjoint operators does not carry the structure of a Lie algebra: the commutator and even the addition is not well defined.
The self-adjoint operators, in turn, are usually interpreted as \emph{quantum observables}.
\end{remark}

The operators of the form $AA^\dag$ are called \emph{positive semi-definite}, as they coincide with operators $T\in \gl(\cH)$ such that $\bk{Tx}{x}\ge 0$ for all $x\in\cH$. For $A\in \gl(\cH)$ we have the \emph{polar decomposition} $A=U|A|$, where $|A|=\sqrt{AA^\dag}$ and $U$ is a unitary operator.

It is well known that the $C^*$-algebra $\gl(\cH)$ contains the ideal $\gl_0(\cH)$ of finite-rank operators and \emph{Schatten ideals} $\L^p(\cH)$, $1\le p< \infty$, where $\L^p(\cH)$ consists of those $A\in \gl(\cH)$ for which $\tr|A|^p<\infty$. Note that $\L^p(\cH)$ is a Banach algebra with respect to the norm
\be\label{pnorm}\| A\|_p=\left(\tr|A|^p\right)^{\frac{1}{p}}\,.
\ee
One can easily view each of $\L^p(\cH)$ as the completion of $\gl_0(\cH)$ with respect to the norm (\ref{pnorm}). This allows to define $\L^\infty(\cH)$ as the completion of $\gl_0(\cH)$ with respect to the norm
\be\label{inftynorm}\| A\|_\infty=\lim_{p\to\infty}\left(\tr|A|^p\right)^{\frac{1}{p}}\,.
\ee
The above norm on $\gl_0(\cH)$ coincides with the operator norm (\ref{operatornorm}) and the resulted closed ideal in $\gl(\cH)$ consists of \emph{compact operators}.
With $\h^p(\cH)\subset \L^p(\cH)$ we will denote the corresponding (real) subspaces of Hermitian operators from $\L^p(\cH)$. The unitary group acts on each of these spaces by $A\mapsto UAU^\dag$.

As for the duality we have the following well-known result.
\begin{proposition}
For $1\le p\le p'\le\infty$, we have
\be\label{inclusions} \L^p(\cH)\subset \L^{p'}(\cH)\subset \gl(\cH).
\ee
For $1<p<\infty$, the dual Banach space of $\L^p(\cH)$ is $\L^q(\cH)$, where $\frac{1}{p}+\frac{1}{q}=1$. Moreover,
\be\label{duality}
\left(\L^\infty(\cH)\right)^*=\L^1(\cH)\,,\quad\text{and}\quad \left(\L^1(\cH)\right)^*=\gl(\cH)\,.
\ee
The above dualities come from the pairings
\be\label{pairing}\L^p(\cH)\ti\L^q(\cH)\ni(A,B)\mapsto \langle A,B\rangle^p_q=\tr(AB)\in\C\,,
\ee
where $1<p<\infty$ and $\frac{1}{p}+\frac{1}{q}=1$, or $p=\infty$ and $q=1$, and the pairing
\be\label{pairing0}\L^1(\cH)\ti\gl(\cH)\ni(A,B)\mapsto \langle A,B\rangle=\tr(AB)\in\C\,.
\ee
\end{proposition}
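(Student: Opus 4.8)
The plan is to reduce the statement to the classical duality theory of the Schatten ideals, the genuinely infinite-dimensional input being only the singular value decomposition of compact operators together with the density of $\gl_0(\cH)$ in each $\L^p(\cH)$, $1\le p<\infty$, and in the ideal of compact operators for $p=\infty$ — all of which are built into the definitions adopted above. Throughout, for a compact $A$ write $s_1(A)\ge s_2(A)\ge\cdots\ge 0$ for its singular values, i.e.\ the eigenvalues of $|A|=\sqrt{AA^\dag}$ with multiplicity, so $\| A\|_p=\big(\sum_n s_n(A)^p\big)^{1/p}$ for $p<\infty$ and $\| A\|_\infty=\| A\|=s_1(A)$; also use the form $A=\sum_n s_n(A)\,\kb{e_n}{f_n}$ for suitable orthonormal systems $(e_n),(f_n)$. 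For (\ref{inclusions}): if $\tr|A|^p<\infty$ then $s_n(A)\to 0$, so $s_n(A)\le 1$ for all large $n$, hence $s_n(A)^{p'}\le s_n(A)^p$ eventually when $p\le p'$; summing gives $A\in\L^{p'}(\cH)$, and monotonicity of $\ell^p$-norms on sequences gives $\| A\|_{p'}\le\| A\|_p$, while $\| A\|_\infty=s_1(A)\le\| A\|_p$. So all inclusions hold and are norm-nonincreasing.

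Next I would check that the pairings are well defined and contractive. This rests on the Schatten--H\"older inequality $|\tr(AB)|\le\| AB\|_1\le\| A\|_p\| B\|_q$ (with $\|\cdot\|_\infty=\|\cdot\|$, and $1<p<\infty,\ \tfrac1p+\tfrac1q=1$, or $p=\infty,\ q=1$, or $p=1,\ q=\infty$), which follows from the multiplicative singular-value inequality $\sum_{k\le n}s_k(AB)\le\sum_{k\le n}s_k(A)s_k(B)$ and scalar H\"older; I would quote it. Hence $B\mapsto\langle\,\cdot\,,B\rangle$ maps $\L^q(\cH)$ into $(\L^p(\cH))^*$ (for $1<p<\infty$ and for $p=\infty$), and $\gl(\cH)$ into $(\L^1(\cH))^*$, with operator norm at most $\| B\|_q$, resp.\ $\| B\|$. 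That these maps are isometric is seen by near-optimal test operators: for $1<p,q<\infty$ and $B=\sum_n s_n(B)\,\kb{e_n}{f_n}$, put $A_N=\sum_{n\le N}s_n(B)^{q-1}\,\kb{f_n}{e_n}$, so $A_NB=\sum_{n\le N}s_n(B)^q\,\kb{f_n}{f_n}$, $\tr(A_NB)=\sum_{n\le N}s_n(B)^q$, and $\| A_N\|_p^p=\sum_{n\le N}s_n(B)^{(q-1)p}=\sum_{n\le N}s_n(B)^q$ since $(q-1)p=q$; thus $\tr(A_NB)/\| A_N\|_p\to\| B\|_q$. For $p=\infty,\ q=1$ take instead $A_N=\sum_{n\le N}\kb{f_n}{e_n}$ (a partial isometry, $\| A_N\|\le1$), giving $\tr(A_NB)=\sum_{n\le N}s_n(B)\to\| B\|_1$; for $p=1,\ q=\infty$ and $B\in\gl(\cH)$, take a rank-one $A=\kb{u}{Bu/\| Bu\|}$ with $u$ a unit vector nearly maximizing $\| Bu\|$, so $\| A\|_1=1$ and $\tr(AB)=\| Bu\|\approx\| B\|$.

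The substantial point is surjectivity. Given $\varphi\in(\L^p(\cH))^*$ ($1\le p\le\infty$, reading $\L^\infty(\cH)$ as the compacts), the form $(x,y)\mapsto\varphi(\kb{x}{y})$ is sesquilinear and, since $\|\kb{x}{y}\|_p=\| x\|\,\| y\|$ for every $p$, bounded by $\|\varphi\|$; so there is a unique $B\in\gl(\cH)$ with $\varphi(\kb{x}{y})=\bk{y}{Bx}$, i.e.\ $\varphi(T)=\tr(TB)$ for every rank-one $T$ and, by linearity, for every $T\in\gl_0(\cH)$. It remains to place $B$ in the right ideal with the right norm. For $1<p<\infty$ feed the truncations $A_N$ above, built now from this $B$, into $\varphi$: $\sum_{n\le N}s_n(B)^q=\tr(A_NB)=\varphi(A_N)\le\|\varphi\|\,\| A_N\|_p=\|\varphi\|\,\big(\sum_{n\le N}s_n(B)^q\big)^{1/p}$, whence $\big(\sum_{n\le N}s_n(B)^q\big)^{1/q}\le\|\varphi\|$, and $N\to\infty$ gives $B\in\L^q(\cH)$ with $\| B\|_q\le\|\varphi\|$; the reverse inequality is the contractivity already proved. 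For $p=\infty$, the same with $A_N=\sum_{n\le N}\kb{f_n}{e_n}$ yields $\| B\|_1\le\|\varphi\|$; for $p=1$, the bound $|\bk{y}{Bx}|=|\varphi(\kb{x}{y})|\le\|\varphi\|\,\| x\|\,\| y\|$ already gives $B\in\gl(\cH)$ with $\| B\|\le\|\varphi\|$. In each case $\varphi$ and $\langle\,\cdot\,,B\rangle$ agree on the dense subspace $\gl_0(\cH)$ and are continuous, hence coincide, which proves surjectivity and, with the isometry, the asserted identifications.

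The main obstacle, then, is precisely this last (surjectivity) step: extracting $B$ from $\varphi$ and certifying that $B$ sits in the correct Schatten ideal with $p$-norm exactly $\|\varphi\|$, while keeping the density/continuity argument honest in each of the three regimes $1<p<\infty$, $p=\infty$, $p=1$. One should also pause to reconcile the ``completion of $\gl_0(\cH)$'' definition of $\L^p(\cH)$ used above with the trace characterization $\{A:\tr|A|^p<\infty\}$ and with the density of $\gl_0(\cH)$, since the singular-value computations tacitly use the latter description; the Schatten--H\"older inequality and the underlying singular-value majorization are classical and may be cited rather than reproved.
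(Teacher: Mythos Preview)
The paper does not supply a proof of this proposition; it is introduced as ``the following well-known result'' and left at that. Your sketch is the standard textbook argument for Schatten-class duality and is essentially correct, so there is nothing in the paper to compare it against.

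One small point worth tightening: in the surjectivity step for $1<p\le\infty$ you extract $B\in\gl(\cH)$ from $\varphi$ and then immediately invoke a singular value decomposition $B=\sum_n s_n(B)\,\kb{e_n}{f_n}$ to build the test operators $A_N$. At that stage you only know $B$ is bounded, not compact, so a discrete SVD need not exist. The fix is routine: if $B$ were not compact, the spectral projection of $|B|$ onto $[\lambda,\infty)$ would be infinite-rank for some $\lambda>0$; feeding finite-rank partial isometries $A_N$ supported there into $\varphi$ (which already agrees with $\tr(\,\cdot\,B)$ on $\gl_0(\cH)$) gives $|\varphi(A_N)|\ge N\lambda$ while $\|A_N\|_p\le N^{1/p}$ (resp.\ $\le 1$ for $p=\infty$), so $|\varphi(A_N)|/\|A_N\|_p\ge \lambda N^{1/q}\to\infty$, contradicting boundedness of $\varphi$. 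Once $B$ is known to be compact your argument runs as written. You already flag related bookkeeping at the end, so this is a matter of ordering rather than a genuine obstacle.
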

\begin{remark}
Let us note that in the literature the symbol $\L^\infty(\cH)$ refers often to $\gl(\cH)$ and not to the space of compact operators. It seems, however, that our use of the symbol is logically justified.
\end{remark}
According to a general theory \cite{OR,Ratiu}, $\L^1(\cH)$ is canonically a \emph{Banach Lie-Poisson space} as a predual of a $W^*$-algebra, namely of $\gl(\cH)$, and the spaces $\L^p(\cH)$, for $1<p<\infty$, are canonically \emph{Banach Lie-Poisson spaces} by being reflexive. The canonical linear Poisson bracket $\{\cdot,\cdot\}_p$ is given by the Kostant-Kirillov-Souriau formula
\be\label{PB}
\{ f,g\}_p(A)=\langle A,[\xd f(A),\xd g(A)]\rangle^p_q
\ee
for $1<p<\infty$, and
\be\label{PB0}
\{ f,g\}_1(A)=\langle A,[\xd f(A),\xd g(A)]\rangle
\ee
for $p=1$. Here, $\xd f(A),\xd g(A)$ are understood as elements of the dual space and the bracket $[\cdot,\cdot]$ is the commutator.

\medskip
There are two particularly important Schatten ideals: $\L^2(\cH)$, the space of \emph{Hilbert-Schmidt operators}, which is a Hilbert space itself with respect to the Hermitian inner product (we will often skip the subscript ``HS'' for this product)
\be\label{HS}
\bk{A}{B}_{HS}=\tr(A^\dag B)\,,
\ee
and the space $\L^1(\cH)$ of \emph{nuclear} (\emph{trace-class}) \emph{operators} on which we have
a distinguished functional of the trace:
\be\label{tr}
\tr:\L^1(\cH)\to\C\,,\quad A\mapsto \tr(A)\,.
\ee
Note that the Hermitian inner product (\ref{HS}) induces a scalar product $\brk{\cdot}{\cdot}$ on the space $\h^2(\cH)$ of Hermitian Hilbert-Schmidt operators, $\brk{A}{B}=\tr(AB)$, so one can view $\h^2(\cH)$ as a canonical Euclidean space. We will also view $\h^1(\cH)$ as the \emph{predual space} $\u_*(\cH)$ of the Lie algebra $\u(\cH)=i\cdot \h(\cH)$ of the unitary group, with the pairing
\be\label{predual}
\u_*(\cH)\ti\u(\cH)\ni(A,B)\mapsto i\cdot\tr(AB)\in\R\,.
\ee
From this point of view, the natural action of $\U(\cH)$ on $\u_*(\cH)$ can be seen as the co-adjoint action, so the orbits carry symplectic structures. However, the situation is much more complicated than in finite dimensions. The orbits are in general only \emph{weakly immersed} submanifolds and the symplectic structures are \emph{weakly symplectic}. There is an extensive literature on the subject (see e.g. \cite{Ratiu} and references therein), but we will concentrate on physically important examples and will not develop a general theory. This will allow for relatively easy proofs of the facts we will need in the sequel.

\section{Coadjoint orbits in $\u_*(\cH)$}
In \cite{Bona} there were investigated infinite dimensional unitary coadjoint orbits of symmetric trace-class operators.
On each such orbit there are 2 natural topologies - the topology of Banach space of trace-class operators and the topology induced from
the coadjoint action of the unitary group $U(\cH)$. It was shown in \cite{Bona} that for orbits going through finite rank operators, these topologies coincide, or, in other words, that the orbits of finite rank operators are immersed into the space of trace class operators. What is more, it is esy to see (cf.\    \cite{Bona}) that also all $\L^k$-topologies, $k\ge 1$, coincide  on an orbit $\cO_\zr$ of a finite-rank $\zr\in\u_*(\cH)=\h^1(\cH)$.

\subsection{Closedness problem}
In \cite{Bona} it was erroneously claimed that
all unitary coadjoint orbits of Hermitian trace-class operators are closed in the trace-class topology. The mistake came from the wrong statement that the orbit of $\rho$ is completely determined by nonzero eigenvalues of $\rho$ and their multiplicities. Here we prove that the orbit $\cO_\zr  $ is closed if and only if $\rho$ is finite rank.

\begin{theorem} Let $\rho\in \u_*(\mathcal H)$. Then $\cO_\zr  $ is closed in the $\L^1$-topology if and only if $\rho$ is of finite rank.
\end{theorem}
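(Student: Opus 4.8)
The plan is to describe the orbit $\cO_\zr$ explicitly and then prove the two implications separately, the converse being the substantive one. Recall that two self-adjoint compact operators on the separable Hilbert space $\cH$ are conjugate by an element of $\U(\cH)$ if and only if they have the same eigenvalues with the same multiplicities, \emph{counting also the multiplicity of the eigenvalue $0$}: the nonzero eigenvalues have finite-dimensional eigenspaces, while the two kernels are closed subspaces of the same (finite or countably infinite) dimension, so one may match orthonormal eigenbases. Writing $(\zm_i)_i$ for the nonzero eigenvalues of $\zr\in\h^1(\cH)=\u_*(\cH)$ repeated according to multiplicity -- so $\sum_i|\zm_i|=\|\zr\|_1<\infty$ and $\zm_i\to 0$ -- the orbit $\cO_\zr$ thus consists exactly of those $\zs\in\h^1(\cH)$ whose nonzero-eigenvalue multiset equals $(\zm_i)_i$ and with $\dim\ker\zs=\dim\ker\zr$; in particular $\|\cdot\|_1$ is constant on $\cO_\zr$.

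For the easy direction, suppose $\zr$ has finite rank, and let $\zs_k\in\cO_\zr$ with $\zs_k\to\zs$ in $\L^1$. Then also $\zs_k\to\zs$ in operator norm, and by Weyl's perturbation inequality (via the min--max principle) the decreasingly ordered positive eigenvalues $\zl^+_j(\cdot)$ and the decreasingly ordered absolute values $\zl^-_j(\cdot)$ of the negative eigenvalues of a compact self-adjoint operator are $1$-Lipschitz in the operator norm. Hence the nonzero-eigenvalue multiset of $\zs$ equals that of $\zr$; since $\zr$ is finite rank, so is $\zs$, with the same rank, whence $\dim\ker\zs=\dim\ker\zr$ and $\zs\in\cO_\zr$. (Alternatively, with $q(t)=t\prod_{i}(t-\zm_i)$ one checks that $\cO_\zr=\{q(\cdot)=0\}\cap\bigcap_{j=1}^{\operatorname{rank}\zr}\{\tr((\cdot)^j)=\tr(\zr^j)\}$, an intersection of $\|\cdot\|_1$-closed subsets of $\h^1(\cH)$.)

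For the converse, suppose $\zr$ has infinite rank, i.e. infinitely many nonzero eigenvalues. By the same Weyl argument any $\L^1$-limit of orbit elements keeps the whole nonzero-eigenvalue multiset of $\zr$ and the total mass $\sum_i|\zm_i|$, so the only way $\cO_\zr$ can fail to be closed is a jump in the dimension of the kernel -- and this does occur. I would pick $\zs\in\h^1(\cH)$ with the same nonzero eigenvalues $(\zm_i)_i$ but with $\dim\ker\zs=\infty$ in case $\ker\zr$ is finite-dimensional, and with $\dim\ker\zs=0$ in case $\ker\zr$ is infinite-dimensional; such $\zs$ exists (any kernel dimension $\le\aleph_0$ is attainable since $\cH$ is separable) and $\zs\notin\cO_\zr$ because $\dim\ker\zs\ne\dim\ker\zr$. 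To see $\zs\in\overline{\cO_\zr}$, fix an orthonormal basis diagonalizing $\zs$ and define $\zs_k$, diagonal in the same basis, by leaving the $k$ eigenvalues of $\zs$ of largest modulus on their basis vectors and redistributing the remaining eigenvalues among the remaining basis vectors so that the diagonal multiset becomes that of $\zr$ -- that is, creating $\dim\ker\zr$ (finitely or infinitely many) zeros and filling the other slots with the tail $(\zm_i)_{i>k}$. Then each $\zs_k\in\cO_\zr$, and since $\zs_k$ and $\zs$ differ only on basis vectors carrying, on one side or the other, tail eigenvalues, $\|\zs_k-\zs\|_1$ is bounded by twice $\sum_{i>k}|\zm_i|$, which tends to $0$; hence $\cO_\zr$ is not closed.

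The main obstacle is this converse, and it rests on two points: first, that trace-norm convergence pins down every nonzero eigenvalue, so no element of $\overline{\cO_\zr}$ can differ from $\zr$ except possibly in the size of its kernel; second, that such a difference is genuinely attained in the limit -- which is exactly where infinite rank is used, since it is the presence of an infinite, absolutely summable tail $(\zm_i)$ with $\zm_i\to0$ that allows eigenvalues to be reshuffled between nonzero slots and kernel slots at arbitrarily small trace-norm cost. For finite-rank $\zr$ there is no such tail, this reshuffling is impossible, and the orbit stays closed. The only mild care needed is bookkeeping in the two cases $\dim\ker\zr<\infty$ and $\dim\ker\zr=\infty$, but the construction of the approximating sequence is the same in both.
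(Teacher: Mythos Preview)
Your argument is correct and somewhat cleaner than the paper's, though the underlying idea is the same: the orbit is determined by the multiset of nonzero eigenvalues together with the dimension of the kernel, and infinite rank is exactly what permits the kernel dimension to jump in an $\L^1$-limit.

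For the implication ``finite rank $\Rightarrow$ closed'', the paper argues in two steps: first that any limit point has the same \emph{set} of eigenvalues (an invertibility argument, conjugating back and forth by $U_n$), and then that the multiplicities agree (norm-continuity of the functional calculus applied to the spectral projections and the fact that close projections are unitarily equivalent). Your appeal to Weyl's perturbation inequality for the ordered eigenvalue lists handles both at once. For the converse, the paper treats the three cases $\dim\ker\zr=0$, $0<\dim\ker\zr<\infty$, $\dim\ker\zr=\infty$ separately, each with an explicit shift or interleaving of diagonal entries; your single ``keep the $k$ largest, redistribute the tail'' construction is the same mechanism packaged uniformly. A small side benefit of your version is that it never uses positivity of the eigenvalues, whereas the paper's explicit estimates are written for $a_i>0$.

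One minor remark on your parenthetical alternative in the finite-rank case: the conditions $q(\zs)=0$ and $\tr(\zs^j)=\tr(\zr^j)$ for $1\le j\le\operatorname{rank}\zr$ do pin down $\cO_\zr$, but this uses a Vandermonde step---once $q(\zs)=0$ forces the spectrum of $\zs$ into the finite set $\{0\}\cup\{\zm_i\}$, the power sums determine the multiplicities of the distinct nonzero values. If you retain that alternative, a sentence to this effect would close it off.
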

\begin{proof} "{\bf Only if}":  Suppose $\rho$ has infinite rank.  We will consider 3 different cases: when $0$ is not an eigenvalue of $\rho$, when it is an eigenvalue of finite multiplicity and when it is an eigenvalue of infinite multiplicity.

{\bf Case 1}: 0 is not an eigenvalue of $\rho$. Then in some orthonormal basis $\rho$ has diagonal form $$\rho = diag(a_1, a_2, \ldots)$$ with all $a_i> 0$ and such  that $\sum_{k=1}^{\infty} a_k < \infty$. Let $$\rho_n = diag(a_{n+1}, a_1, a_2, \ldots, a_n, a_{n+2}, a_{n+3}, \ldots)$$ with respect to the same basis.  Then $\rho_n \in \cO_\zr  $. Let $$ \rho' = diag(0, a_1, a_2, \ldots)$$ with respect to the same basis. Then
\begin{multline}\|\rho_n -\rho'\|_1 = \|diag(a_{n+1}, 0, \ldots, 0, a_{n+2}-a_{n+1}, a_{n+3}-a_{n+2}, \ldots)\|_1 = \\ a_{n+1} + \sum_{k=1}^{\infty} |a_{n+k+1}-a_{n+k}| \le 2 \sum_{k=n+1}^{\infty} a_k \to 0  \end{multline}
as $n\to \infty$. Hence $\rho' \in \overline{\cO_\zr  }$.  However $\rho'\notin \cO_\zr  $ because $\rho'$ has an eigenvalue 0. Thus $\cO_\zr  $ is not closed.

{\bf Case 2}: 0 is an eigenvalue of multiplicity $m< \infty$. Then in some orthonormal basis $\rho$ has diagonal form $$\rho = diag(0, \ldots, 0, a_1, a_2, \ldots)$$ (here 0 is repeated m times)  with all $a_i> 0$ and such  that $\sum_{k=1}^{\infty} a_k < \infty$. Let $$\rho_n = diag(a_1, \ldots, a_n, 0,\ldots, 0, a_{n+1}, a_{n+2}, \ldots),$$ with respect to the same basis and where again 0 is repeated m times. Then $\rho_n \in \cO_\zr  $. Let $$ \rho' = diag(a_1, a_2, \ldots)$$ with respect to the same basis. Then \begin{multline}\|\rho_n - \rho'\|_1 = \|diag(0, \ldots, 0, -a_{n+1}, \ldots, -a_{n+m}, a_{n+1} - a_{n+m+1}, a_{n+2} - a_{n+m+2}, \ldots)\| = \\ \sum_{k=1}^m a_{n+k} + \sum_{k=n+1}^{\infty} |a_k - a_{k+m}| \le 3 \sum_{k=n+1}^{\infty} a_k \to 0
\end{multline} as $n\to \infty$. Thus $\rho' \in \overline{\cO_\zr  }$ but $\rho' \notin \cO_\zr  $, since $\rho'$ does not have an eigenvalue 0.

{\bf Case 3}: 0 is an eigenvalue of infinite multiplicity. Let $\mathcal H_1 = Ker \rho$, $\mathcal H_2 = \mathcal H_1^{\perp}$, $f_1, f_2, \ldots$ an orthonormal basis in $\mathcal H_1$, $ e_1, e_2, \ldots$ an orthonormal basis in $\mathcal H_2$.  Then $\rho f_i = 0$ for all i's and $\rho e_i = a_i e_i$ for some positive $a_i$ such that $\sum_{k=1}^{\infty} a_k < \infty$. Define  $\rho_n$ by $$\rho_n f_i = 0, i\ge n,$$ $$ \rho_n f_i = a_{2i}f_i, i\le n-1, \;\;$$ $$\rho_n e_i = a_{2i-1}e_i, i \le n-1, \;\;$$$$ \rho_n e_i = 0, n \le i < 2n, \;\; $$$$\rho_ne_i = a_ie_i, i\ge 2n.$$ Then $\rho_n \in \cO_\zr  $. Define $\rho'$ by $$\rho' f_i = a_{2i}f_i, \; i\in \mathbb N$$  $$\rho'e_i = a_{2i-1}e_i,\; i \in \mathbb N. $$
Then,
$$(\rho_n - \rho')f_i =
(\rho_n - \rho')e_i = 0, \; i\le {n-1}\,,$$ $$(\rho_n - \rho')f_i = -a_{2i}f_i, \; i\ge n\,,$$ $$(\rho_n - \rho')e_i = -a_{2i-1}e_i, \; n\le i\le 2n-1\,,$$
$$(\rho_n - \rho')e_i = (a_{i} - a_{2i-1})e_i, \; i\ge 2n\,.$$
Hence,
\be\|\rho_n - \rho'\|_1 \le \sum_{i\ge n} a_{2i} + \sum_{n\le i\le 2n-1} a_{2i-1}
+\sum_{i\ge 2n} |a_{i} - a_{2i-1}| \le 4 \sum_{i\ge n} a_i \to 0\,,
\ee
as $n\to \infty$. Thus $\rho' \in \overline \cO_\zr  $. However $\rho'\notin \cO_\zr  $ since $\rho'$ does not have eigenvalue 0.

"{\bf If}": Let $\rho$ be finite rank, $\rho_n = U_n^*\rho U_n$ and $\|\rho_n-\rho'\|_1\to 0$ as $n\to \infty$. We need to show that $\rho'\in \cO_\zr  $.
Since rank is lower semi-continuous function (\cite{rank}), $\rho'$ has finite rank. Hence it will be sufficient to show that $\rho'$ has the same eigenvalues of the same multiplicities as $\rho$. Since $\|\rho_n-\rho'\|_1\to 0$ implies $\|\rho_n-\rho'\|\to 0$ and the set of invertible operators is open in the norm topology, for any $\lambda$ which is not eigenvalue of $\rho'$ we obtain that it is also not eigenvalue of $\rho$. Conjugating  $\rho_n-\rho'$ by $U_n^*$ we obtain that $\|U_n\rho' U_n^* - \rho\|\to 0$ and hence any $\lambda$ which is not eigenvalue of $\rho$ is also not eigenvalue of $\rho'$. Thus eigenvalues of $\rho$ and $\rho'$ are the same. Let $\lambda$ be an eigenvalue of $\rho$ and $\rho'$ and $\chi_{\lambda}$ be any continuous function which is equal 1 at $\lambda$ and $0$ at all other eigenvalues. Then $\chi_{\lambda}(\rho_n)$ and  $\chi_{\lambda}(\rho')$ are the spectral
projections of $\rho_n$ and $\rho'$ respectively, corresponding to $\lambda$. By continuity of functional calculus $$\chi_{\lambda}(\rho_n) \to
\chi_{\lambda}(\rho') $$ as $n\to \infty$. Hence, for $n$ large enough,
$$\|\chi_{\lambda}(\rho_n) -
\chi_{\lambda}(\rho')\|< 1,$$
which implies (see e.g. \cite{Davidson}) that these projections are unitarily equivalent and hence their dimensions coincide. Thus multiplicities of all eigenvalues of $\rho$ and $\rho'$ coincide.
\end{proof}

\section{Geometry and topology of quantum states}
Among the coadjoint orbits $\cO_\zr$ of the unitary group in $\u_*(\cH)$ the most important in Quantum Mechanics is clearly the orbit $\P(\cH)=\{\zr_\psi\,|\, \psi\in\cH^\ti\}$, where $\cH^\ti=\cH\setminus\{ 0\}$, consisting of \emph{pure quantum states}.

They are defined as elements $\zr$ of $\h^1(\cH)$ which are rank-one projectors. In the Dirac notation they are of the form
\be\label{pure}\zr_\psi=\frac{\kb{\psi}{\psi}}{\nm{\psi}^2}\,,
\ee
where $\psi\in\cH$ is a non-zero vector. Note that $\P(\cH)$ can be identified with the \emph{Hilbert projective space}, i.e.\   the set $\mathbb{P}\cH=\cH^\ti/\C^\ti$ of orbits in $\cH^\ti$ of the canonical action of the multiplicative group $\C^\ti=\C\setminus\{ 0\}$ of complex numbers.

As we will see in the next section, all the topologies on $\P(\cH)$: the one induced from $u_*(\cH)$ and the quotient topologies on $\cH^\ti/C^\ti$ and $\U(\cH)/\U_\zr(\cH)$, where $\U_\zr(\cH)=\{ U\in \U(\cH)\, |\, U\zr U^\dag=\zr\}$ is the isotropy subgroup of a pure state $\zr\in\P(\cH)$, coincide.

Actually, according to the general theory \cite{OR,Ratiu}, the set $\P(\cH)$ of pure states is an orbit which is a (real Banach) \emph{embedded submanifold} of $\u_*(\cH)$ whose symplectic structure is \emph{strong}. The canonical projection
\be\label{projective}
\P:\cH^\ti\rightarrow\P(\cH)\simeq \cH^\ti/\C^\ti\,,\quad \psi\mapsto [\psi]\,,
\ee
induces on $\P(\cH)$ actually a complex Hilbert manifold structure. This means that from the geometric point of view the situation is as good as it could be.

The space $\sT_{\zr_\psi}\P(\cH)$, tangent to $\P(\cH)$ at $\zr_\psi$, can be identified with $\langle\psi\rangle^\perp$. If we view $\P(\cH)$ as an $\U(\cH)$-orbit in $\u_*(\cH)$, $\sT_{\zr_\psi}\P(\cH)$ considered as a vector subspace of $\u_*(\cH)$ consists of vectors of $\u_*(\cH)$ which are commutators $[T,\zr_\psi]=T\circ\zr_\psi-\zr_\psi\circ T$, for $T$ in the Lie algebra $\u(\cH)$, i.e.\   for $T=iA$, where $A$ is Hermitian.
As
$$[T,\zr_\psi]\sim \kb{T\psi}{\psi}+\kb{\psi}{T\psi}\,,
$$
and $T\psi=i\lambda\psi +\phi$ for some $\zl\in\R$ and $\phi\perp \psi$, we have
\be\label{tP}
\sT_{\zr_\psi}\P(\cH)=\left\{\phi_\psi\,:\,\phi\perp\psi\right\}\,,
\ee
where $\phi_\psi=\kb{\phi}{\psi}+\kb{\psi}{\phi}$.
In this realization of the tangent space, for $\za\in\C^\times$, we identify $\phi_{\za\psi}$ with $(\ol{\za}\phi)_\psi$, and the complex structure $J$ on $\P(\cH)$ is represented by the map
\be\label{complex}
J\left(\phi_\psi\right)=(i\phi)_\psi=
i\left(\kb{\phi}{\psi}-\kb{\psi}{\phi}\right)\,.
\ee
Moreover,
\be\label{projection}
\sT_\psi\P(\phi)=\frac{1}{\nm{\psi}^2}(\phi^\perp)_\psi\,,
\ee
where
$$\phi^\perp=\phi-\frac{\bk{\psi}{\phi}}{\bk{\psi}{\psi}}\psi$$
is the part of $\phi$ orthogonal to $\psi$ in the decomposition $\cH=\langle\psi\rangle\op\langle\psi\rangle^\perp$.

As a coadjoint orbit, $\P(\cH)$ carries the canonical symplectic structure $\oP$ given by the Kostant-Kirillov-Souriau formula (cf.\    (\ref{PB0}))
$$\oP([T,\zr_\psi],[T',\zr_{\psi}])=\langle\zr_\psi,[T,T']\rangle=\tr(\zr_\psi[T,T'])\,.
$$
Assume for a moment that $\nm{\psi}=1$. Then,
$$\tr(\zr_\psi[T,T'])=\bk{\psi}{(TT'-T'T)(\psi)}=\bk{T'\psi}{T\psi}-\bk{T\psi}{T'\psi}
=-2\im\bk{T\psi}{T'\psi}\,,
$$
so that
$$\oP([T,\zr_\psi],[T',\zr_{\psi}])=-\frac{2}{\nm{\psi}^2}\im\bk{T\psi}{T'\psi}\,.
$$
Since $[T,\zr_\psi]=(T\psi)_\psi/\nm{\psi}^2$, we get
\be\label{symplectic}
\zw_{\P(\cH)}(\phi_\psi,\phi'_\psi)=-2\im (\bk{\phi}{\phi'})\nm{\psi}^2\,.
\ee
It is easy to see now, that
$$g_{\P(\cH)}(\phi_\psi,\phi'_\psi)=\oP(\phi_\psi,J(\phi'_\psi))$$
is a Riemannian tensor,
\be\label{riemannian}
g_{\P(\cH)}(\phi_\psi,\phi'_\psi)=2\re(\bk{\phi}{\phi'})\nm{\psi}^2=\tr(\phi_\psi\phi'_\psi)\,,
\ee
i.e.\   these three structures, $(\zw_{\P(\cH)},g_{\P(\cH)},J)$ turn $\P(\cH)$ into a Hilbert-K\"ahler manifold. For the theory of (finite-dimensional) K\"ahler manifolds we refer to \cite{Mor} and references therein.
Note that the Riemannian metric on $\P(\cH)$ is induced from the $\L^2$-metric on $u_*(\cH)$. Of course,
as $\sT_{\zr_\psi}\P(\cH)$ is a linear space of rank$\le 2$ operators, the $\L^2$-topology coincides with the $\L^1$-topology induced from $\u_*(\cH)$.  The hermitian product on $\sT_\zr\P(\cH)$
\be\label{hermitian}
\bk{\phi_\psi}{\phi'_\psi}_{\P(\cH)}=g_{\P(\cH)}(\phi_\psi,\phi'_\psi)-
i\zw_{\P(\cH)}(\phi_\psi,\phi'_\psi)=2\bk{\phi}{\phi'}\nm{\psi}^2\,.
\ee
It is easy to see that the action of the unitary group $\U(\cH)$ on $\cH$ projects to an action $U\mapsto U_\P$ on the Hilbert projective space, $U_\P[\psi]=[U(\psi)]$. The tangent map acts as
\be\label{U-action}
\sT U_\P(\phi_\psi)=U(\phi)_{U(\psi)}\,.
\ee
This action consists of automorphisms of the K\"ahler structure, i.e.\   of automorphisms of all the three structures,
$\zw_{\P(\cH)}, g_{\P(\cH)}, J$. Of course, as any two of these structures determine the third one, preserving two of these structures implies preserving of the third one. Note that $U_\P=\Id$ if and only if $U=e^{i\lambda}\Id$ for some $t\in\R$, so the effective action is provided by the \emph{projective unitary group} $\PU(\cH)=\U(\cH)/S^1$, where the canonical realisation of $S^1$ as a normal subgroup of $\U(\cH)$ is $e^{it}\mapsto e^{it}\Id$. This action gives all automorphisms of the K\"ahler structure on $\P(\cH)$.
\begin{theorem}\label{tx}
Any automorphism of the K\"ahler structure on $\P(\cH)$ is of the form $U_\P$ for some $U\in\U(\cH)$.
\end{theorem}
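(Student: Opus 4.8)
This is the infinite-dimensional form of Wigner's theorem, and the plan is to prove it by Riemannian rigidity rather than by a projective-geometry argument: an isometry of a connected Riemannian manifold is determined by its $1$-jet at a single point. So the two steps are (i) match the $1$-jet of an arbitrary K\"ahler automorphism $\Phi$ at one point with the $1$-jet of a suitable $U_\P$, $U\in\U(\cH)$, and (ii) deduce $\Phi=U_\P$ from rigidity.

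For step (i), fix a unit vector $\psi_0\in\cH$, set $\zr_0=\zr_{\psi_0}$, and write $\Phi(\zr_0)=\zr_{\psi_0'}$ with $\nm{\psi_0'}=1$. Since $\Phi$ preserves $J$, the differential $\xd\Phi_{\zr_0}\colon\sT_{\zr_0}\P(\cH)\to\sT_{\zr_{\psi_0'}}\P(\cH)$ is complex-linear, and since it also preserves $g_{\P(\cH)}$ it preserves the Hermitian product (\ref{hermitian}). Under the identification (\ref{tP}) of $\sT_{\zr_\psi}\P(\cH)$ with $\langle\psi\rangle^\perp$ (with complex structure (\ref{complex}) and, for $\nm\psi=1$, Hermitian product equal to twice the ambient one), $\xd\Phi_{\zr_0}$ becomes a unitary isomorphism $V\colon\langle\psi_0\rangle^\perp\to\langle\psi_0'\rangle^\perp$. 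I would then define $U_0\in\U(\cH)$ by $U_0\psi_0=\psi_0'$ and $U_0|_{\langle\psi_0\rangle^\perp}=V$; by the formula (\ref{U-action}) for the tangent action, $(U_0)_\P$ is a K\"ahler automorphism with $(U_0)_\P(\zr_0)=\zr_0$ replaced by $\zr_{\psi_0'}$ and $\sT(U_0)_\P|_{\zr_0}=\xd\Phi_{\zr_0}$ under the above identifications.

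For step (ii), put $\Psi=(U_0)_\P^{-1}\circ\Phi$. This is again an automorphism of the K\"ahler structure, in particular an isometry of $(\P(\cH),g_{\P(\cH)})$, and it fixes $\zr_0$ with $\xd\Psi_{\zr_0}=\Id$. Because the K\"ahler metric on $\P(\cH)$ is \emph{strong} (a genuine Hilbert inner product on every tangent space, as recorded after (\ref{riemannian})), the exponential maps $\exp_\zr$ are local diffeomorphisms and the isometry $\Psi$ intertwines them, $\Psi\circ\exp_\zr=\exp_{\Psi(\zr)}\circ\,\xd\Psi_\zr$. Hence the set $\{\zr:\Psi(\zr)=\zr,\ \xd\Psi_\zr=\Id\}$ is open (and closed and nonempty), so it is all of the connected manifold $\P(\cH)$; thus $\Psi=\Id$ and $\Phi=(U_0)_\P$. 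I would remark in passing that the geodesics are explicitly the great circles $t\mapsto\zr_{\cos t\,\psi_0+\sin t\,\phi}$ with $\phi\perp\psi_0$, $\nm\phi=1$, which makes geodesic completeness and geodesic-connectedness of $\P(\cH)$ transparent.

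The main obstacle is precisely this rigidity step: in infinite dimensions it genuinely uses that $g_{\P(\cH)}$ is \emph{strong} --- for a merely weak metric the exponential map need not be a local diffeomorphism and the argument breaks down; here strongness is exactly what was gained by realizing $\P(\cH)$ as an orbit with the Fubini--Study metric. A self-contained alternative avoiding infinite-dimensional Riemannian geometry would be: an isometry preserves the geodesic (Fubini--Study) distance, hence the transition probabilities $\tr(\zr_\psi\zr_\phi)$, since the latter is a function of the former; by Wigner's theorem it is then induced by a unitary or an anti-unitary operator, and anti-unitary ones are excluded because, by (\ref{complex}) and (\ref{U-action}), they intertwine $J$ with $-J$ instead of preserving it.
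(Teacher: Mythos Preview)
Your proposal is correct and follows essentially the same approach as the paper: match the $1$-jet of the automorphism at a chosen point with that of some $U_\P$ by reading off a unitary $V\colon\langle\psi_0\rangle^\perp\to\langle\psi_0'\rangle^\perp$ from the differential, then invoke the rigidity lemma that an isometry of a connected Riemannian (Banach) manifold fixing a point with identity derivative there must be the identity. Your version is in fact slightly more careful (you explicitly pass to $\Psi=(U_0)_\P^{-1}\circ\Phi$ before applying the lemma, and you flag that strongness of $g_{\P(\cH)}$ is what makes the exponential map a local diffeomorphism), and the Wigner-theorem alternative you sketch is a sound independent route.
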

In the proof we will use the following lemma from Riemannian geometry we learned from Jason DeVito.
\begin{lemma}
Let $M$ be a connected Riemannian (Banach) manifold and $f:M\to M$ be an isometry. Suppose that there is $p\in M$ such that $f(p)=p$ and the derivative $D_pf$ of $f$ at $p$ is the identity on $\sT_pM$. Then, $f$ is the identity.
\end{lemma}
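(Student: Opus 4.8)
The plan is the classical rigidity argument: a Riemannian isometry is completely determined by its value and derivative at a single point, via the exponential map, and one then propagates this over all of $M$ by a connectedness argument. So I would first handle a neighbourhood of $p$, and then show that the set of points at which $f$ agrees with the identity to first order is both open and closed in $M$.

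I would use two standard facts, after first confirming that they are legitimate in the present infinite-dimensional setting. Since the Riemannian metric is \emph{strong} --- which is precisely the situation for $\P(\cH)$, Hilbert-K\"ahler by (\ref{riemannian}) --- $M$ carries a Levi--Civita connection, hence geodesics, and for each $q\in M$ an exponential map $\exp_q$ restricting to a diffeomorphism of a ball $B_q\subset\sT_qM$ about $0$ onto a normal neighbourhood $U_q=\exp_q(B_q)$ of $q$. Moreover, since $f$ preserves the metric it sends geodesics to geodesics, hence is natural with respect to $\exp$:
$$f\circ\exp_q=\exp_{f(q)}\circ D_qf$$
on $B_q$. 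Together with $f(p)=p$ and $D_pf=\Id$ this immediately gives $f\circ\exp_p=\exp_p$ on $B_p$, so $f$ is the identity on the open set $U_p$.

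Next I would set $S=\{q\in M:\ f(q)=q\ \text{and}\ D_qf=\Id_{\sT_qM}\}$. It is nonempty, as $p\in S$. It is open: for $q\in S$ the displayed relation gives $f=\Id$ on $U_q$, and a map equal to the identity on an open set has identity derivative at every point of that set, whence $U_q\subset S$. It is closed: for $q\in\overline S$, continuity of $f$ gives $f(q)=q$; choosing a normal ball $U_q$, every vector in the nonempty open set $\exp_q^{-1}(S\cap U_q)\subset B_q$ is fixed by the linear map $D_qf$ (apply the displayed relation, $f(q)=q$, the norm-preservation of $D_qf$, and injectivity of $\exp_q$ on $B_q$), and a linear map fixing a nonempty open set is the identity, so $q\in S$. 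Since $M$ is connected, $S=M$, and in particular $f=\Id_M$. (One may instead deduce closedness of $S$ from the continuity of $q\mapsto D_qf$ --- isometries of $\P(\cH)$ being smooth --- which makes both conditions defining $S$ manifestly closed.)

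The step I expect to be the genuine obstacle is not the topological bookkeeping but the Riemannian input invoked in the second paragraph: one must make sure that, in infinite dimensions, the Levi--Civita connection exists, the exponential map is an honest local diffeomorphism, and the naturality relation for isometries holds. All of this is valid on a Hilbert manifold equipped with a \emph{strong} Riemannian metric --- exactly the case of $\P(\cH)$ --- but would fail for a merely weakly Riemannian Banach manifold, where geodesics and $\exp$ need not exist; making this distinction precise is where the real care lies.
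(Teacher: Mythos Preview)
Your argument is correct and follows essentially the same route as the paper: define the set of points where $f$ agrees with the identity to first order, and show it is nonempty, open, and closed using naturality of the exponential map under isometries. In fact you are more careful than the paper, which simply asserts closedness without justification, whereas you supply an argument (and an alternative via continuity of $q\mapsto D_qf$); your remarks on the need for a \emph{strong} Riemannian metric in infinite dimensions are also a welcome addition that the paper leaves implicit.
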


\begin{proof}
Put
$$X=\left\{ q\in M\,|\, f(q)=q\,,\quad D_qf=\Id\right\}\,.
$$
The set $X\subset M$ is non-empty and closed. It suffices to show that it is open.
For, let $q\in X$ and let $W$ be a normal neighbourhood of $q$  on which the inverse of the exponential map
$\exp_q$ is well defined. Since,
$$f(\exp_q(tv))=\exp_{f(q)}(tD_gf(v))\,,
$$
any $r\in W$ is a fixed point of $f$ and so $D_rf=\Id$.
\end{proof}
\textit{Proof of Theorem \ref{tx}.} Let $f:\P(\cH)\to\P(\cH)$ be an automorphism of the K\"ahler structure and $f([\psi])=[\psi']$. Fixing $\psi,\psi'$ chosen such that $\nm{\psi}=\nm{\psi'}=1$,
we can view $D_{[\psi]}f$ as a map
$$D_\psi^{\psi'}f:\langle\psi\rangle^\perp\to\langle\psi'\rangle^\perp\,.$$
As $f$ preserves all the structures, we have, according to (\ref{complex}) and (\ref{hermitian}),
$$D_\psi^{\psi'} f(i\phi)=iD_\psi^{\psi'} f(\phi)\,,\quad
\bk{D_\psi^{\psi'} f(\phi)}{D_\psi^{\psi'} f(\phi')}=\bk{\phi}{\phi'}\,,
$$
thus $D_\psi^{\psi'} f$ is unitary. It is easy to see now that the unique unitary operator $U$ which maps $\psi$ to $\psi'$ and equals $D_\psi^{\psi'}f$ on $\langle\psi\rangle^\perp$ induces on $\P(\cH)$ the map $U_\P$ such that $U_\P([\psi])=[\psi']$ and $D_{[\psi]}U_\P=D_{[\psi]}f$.
In view of the above lemma, $f=U_\P$.
\epf
\begin{remark}\label{re1}
A vector field $Y$ on a finite-dimensional complex manifold which is an infinitesimal automorphism of the complex structure $J$, i.e.
$$[Y,J(Z)]=J[Y,Z]\quad\text{for each vector field}\quad Z\,, $$
we call \emph{real holomorphic}.
Real holomorphic vector fields can be equivalently characterized as vector fields $Y$ that induce holomorphic flows, or such vector fields for which
$Y-iJ(Y)$ are holomorphic (cf.\    \cite[Lemma 8.7]{Mor}).
Infinitesimal automorphism of a K\"ahler structure can be then described as real holomorphic Hamiltonian vector fields, or equivalently, as real holomorphic Killing vector fields or Hamiltonian-Killing vector fields.
In infinite dimensions the situation is more complicated, as generators of strongly-continuous groups of automorphism might be only densely defined. We come to these questions in section \ref{holohamiltononians}.
\end{remark}

\medskip
The closed convex hull $\D(\cH)$ of $\P(\cH)\simeq\mathbb{P}$ in $\u_*(\cH)$ is the set of all \emph{(mixed) quantum states}. They are positive semi-definite operators from $\L^1(\cH)$ with trace 1. Being trace-class operators, they have the form
\be\label{mixed}
\zr=\sum_{n=1}^\infty\zl_n\zr_{\psi_n}\,,\quad \zl_n>0\,,\quad \sum_{n=1}^\infty\zl_n=1\,,
\ee
for a system $(\psi_n)$ of orthogonal vectors from $\cH^\ti$.

Note that the set $\P(\cH)$ of pure states can be viewed as a subset of each of $\L^p(\cH)$, however the trace is defined on the closed convex hull $\D(\cH)$ only in $\L^1(\cH)$. Moreover, in infinite dimension,
the closure of the convex hull of $\P(\cH)$ in $\L^p(\cH)$, $p>1$, contains 0 (which is clearly not the case for $p=1$):
$$\norm{\frac{\zr_{e_1}+\cdots+\zr_{e_n}}{n}}_p^p=n^{1-p}\xrightarrow[n\to\infty]{}\,0\,.
$$
Here, $e_1,\dots,e_n$ is an orthonormal sequence of vectors from $\cH$.

\medskip
As  $\D(\cH)$ is contained in the closed unit ball in $L^1(\cH)$ and this ball is compact in the weak$^*$ topology (\emph{Alaoglu's Theorem}), so in our case the weak topology induced from the space $\L^\infty(\cH)$ of compact operators, $\D(\cH)$ is compact in the weak$^*$ topology, so, according to the \emph{Krein-Milman Theorem}, it is the closure of the convex hull of its \emph{extreme points}. It is then easy to see that the extreme points are just pure quantum states (see (\ref{mixed}).

The boundary of $\D(\cH)$ consists those mixed states $\zr$ whose kernel, $\ker(\zr)$, is non-trivial.

\begin{remark}
Note that for a finite-dimensional $\cH$, the identification $\P(\cH)\simeq\U(\cH)/\U_\zr(\cH)$  is trivial, but working with infinite-dimensional Hilbert space brings a new topological flavour. For instance, the unit sphere $S^\infty(\cH)=\{\psi\in\cH\, |\, \nm{\psi}=1\}$ in an infinite-dimensional separable Hilbert space $\cH$, thus the unitary group $\U(\cH)$ itself, is contractible in the norm topology \cite{Kui}, which is clearly false in the finite-dimensional case. In fact, in infinite dimension $S^\infty(\cH)$ is even analytically diffeomorphic with the Hilbert space itself \cite{bessaga66,dobrowolski95}. Note also that any (real) smooth Hilbert manifold is diffeomorphic to an open set of the real Hilbert space \cite{EE}.

Of course, although the fundamental group of the Hilbert projective space vanishes, $\zp_1(\P\cH)=0$, the projective space $\mathbb{P}=\P(\cH)$ is no longer contractible in infinite dimension, since the fiber of the \emph{Serre fibration} $S^\infty(\cH)\to \mathbb{P}$ is the circle $S^1=\{ z\in\, |\, \nm{z}=1\}$, and the corresponding \emph{long exact sequence of homotopy groups} looks like
$$\dots\ra\zp_n(S^1)\ra\zp_n(S^\infty)\ra\zp_n(\mathbb{P})\ra\zp_{n-1}(S^1)\ra\dots\ra\zp_0(S^\infty)\ra 0\,.$$
In particular, we get
$$\dots\ra\zp_2(S^\infty)\ra\zp_2(\mathbb{P})\ra\zp_1(S_1)\ra\zp_1(S^\infty)\ra\dots\,.
$$
Since contractibility implies $\zp_k(S^\infty)=0$ for $k>0$, we end up with the exact sequence
$$0\ra\zp_2(\mathbb{P})\ra\zp_1(S_1)\ra 0\,,$$
which shows that $\zp_2(\mathbb{P})$ is isomorphic with $\zp_1(S^1)=\Z$, so $\mathbb{P}$ is not contractible.
\end{remark}

\subsection{Local embedding of pure states into the unitary group}
Of course it would be interesting to have, at least locally,  an explicit map $\cO_\zr   \to \U(\cH)$, composition of which with the quotient map $\U(\cH) \to \U(\cH)/\U_\rho(\mathcal H) $ would give a homeomorphism (here $\U_\rho(\mathcal H)$ is the stabilizer of $\rho$).
This would imply that the topology in the orbit $\cO_\zr$ of finite rank element $\zr$  coincides with the quotient topology $\U(\cH)/\U_\rho(\mathcal H)\simeq \cO_\zr$. We will construct such a map.

Let $\rho$ be  a state of finite rank  and $\lambda_1, \ldots, \lambda_k$ be its non-zero eigenvalues. Let $P_1, \ldots, P_k$ be the orthogonal projections on the corresponding eigenspaces. Let $P_0$ be the orthogonal projection on the kernel of $\rho$ and let $\lambda_0 = 0. $
Let $\rho' \in \cO_\zr  $. Then it has the same eigenvalues $\lambda_0, \lambda_1, \ldots, \lambda_k$. Let $Q_0, Q_1, \ldots, Q_k$ be the orthogonal projections on the corresponding eigenspaces. Since the spectra of $\rho, \rho'$ are finite, the projections $P_i, Q_i$ are smooth (in fact, polynomial) functions of $\rho$ and $\rho'$ respectively
 \begin{equation}\label{functions} P_i = f_i(\rho), \; Q_i = f_i(\rho').\end{equation} It implies that if $\rho$ and $\rho'$ are sufficiently close then $P_i, Q_i$ are sufficiently close. In particular we can ensure that $$\sum_{i=0}^k\|P_i - Q_i\| \le 1/2$$  (here $\|\;\|$ is the operator norm). Now we are going to construct a unitary $U$  which would conjugate $\rho$ and $\rho'$. For that we will use a construction from (\cite{Davidson}, Lemma III.3.2) where it is proved that sufficiently close projections are unitarily equivalent.

Let $$X = \sum_{i=0}^k Q_iP_i.$$ Then $$X^*X = \sum_{i=0}^k P_iQ_iP_i = \sum_{i=0}^k P_i + \sum_{i=0}^k P_i(Q_i-P_i)P_i \ge \Id - \sum_{i=0}^k \|Q_i - P_i\| \Id = \frac{1}{2} \Id.$$ Similarly $XX^* \ge \frac{1}{2} \Id.$ Hence $X$ is invertible.  We also have \begin{equation}\label{commute}Q_iX = XP_i\end{equation} and $$P_i X^*X = X^*X P_i,$$ for all $i's$. The latter equality implies that \begin{equation}\label{commute2}P_i |X| = |X| P_i,\end{equation} for all $i's$.

Let $U$ be the unitary from the polar decomposition $X = U |X|$ of $X$. By (\ref{commute}) and (\ref{commute2}) we obtain
$$UP_i  = X|X|^{-1}P_i = XP_i|X|^{-1} = Q_iX|X|^{-1} = Q_i U.$$ Since $\rho = \sum_{i=0}^k \lambda_i P_i$, $\rho' =  \sum_{i=0}^k \lambda_i Q_i$,
 we conclude that $U\rho = \rho' U$. Thus $\rho' = U\rho U^*$.

Now the mapping $\Phi:\rho' \mapsto U$ gives us a local embedding of $\cO_\zr  $ into $\U(\cH)$. It follows  from (\ref{functions}) and the construction of $U$ that this mapping is continuous with respect to the norm topology. Since $\rho$ has finite rank, on $\cO_\zr  $ the norm topology is equivalent to the topology of the space of trace-class operators and thus we have a continuous local embedding of $\cO_\zr  $ into $\U(\cH)$. This proves the following.

\begin{theorem} The above constructed map $\Phi$ establishes an embedding of an open neighbourhood $W$ of $\zr$ in the orbit $\cO_\zr$ into the unitary group $\U(\cH)$ such that the composition of $\Phi$ with the canonical projection $\U(\cH) \to \U(\cH)/\U_\rho(\mathcal H) $ is the identity. In consequence, the topologies in $\cO_\zr$  induced from $\u_*(\cH)$ and from $\U(\cH)/\U_\rho(\mathcal H) $ coincide.

\end{theorem}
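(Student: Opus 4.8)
The plan is to assemble the pieces already produced in the discussion above and to add the one genuinely new ingredient, a homogeneity argument for the coincidence of the two topologies. Recall that for every $\rho'$ in the set $W=\{\rho'\in\cO_\zr:\ \sum_{i=0}^k\|P_i-Q_i\|<\tfrac12\}$ a unitary $U$ with $\rho'=U\rho U^*$ was constructed, and $\Phi(\rho')=U$ is the polar part of $X=\sum_{i=0}^k Q_iP_i$. First I would check that $W$ is open in $\cO_\zr$: by (\ref{functions}) each $Q_i$ depends norm-continuously on $\rho'$, hence so does $\rho'\mapsto\sum_i\|P_i-Q_i\|$, and on $\cO_\zr$ the norm topology coincides with the $\L^1$-topology (indeed with every $\L^p$-topology) because $\rho$ has finite rank; thus $W$ is the preimage of $[0,\tfrac12)$ under a continuous function. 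On $W$ one has $X^*X\ge\tfrac12\,\Id$, so $|X|$ is invertible and $\Phi(\rho')=X|X|^{-1}$ is a genuinely single-valued unitary, and the computation preceding the theorem gives $\Phi(\rho')\,\rho\,\Phi(\rho')^*=\rho'$.

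Next I would verify that $\Phi$ is a homeomorphism onto its image. Norm-continuity of $\Phi$ follows from the chain $\rho'\mapsto(Q_0,\dots,Q_k)\mapsto X\mapsto |X|=(X^*X)^{1/2}\mapsto |X|^{-1}\mapsto U=X|X|^{-1}$: the first step is (\ref{functions}), the square root is continuous on the spectrum of $X^*X$, which stays in a fixed compact subset of $(0,\infty)$ over $W$, and inversion is continuous on the open set of invertibles. Injectivity is immediate, since $\Phi(\rho')=\Phi(\rho'')$ forces $\rho'=\Phi(\rho')\rho\Phi(\rho')^*=\Phi(\rho'')\rho\Phi(\rho'')^*=\rho''$. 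Finally the orbit map $q\colon\U(\cH)\to\cO_\zr$, $q(V)=V\rho V^*$, is continuous from the norm topology of $\U(\cH)$ to the $\L^1$-topology of $\cO_\zr$ (use $\|V\rho V^*-V'\rho V'^*\|_1\le\|\rho\|_1(\|V-V'\|+\|V^*-V'^*\|)$), and $q\circ\Phi=\Id_W$; hence $q$ restricted to $\Phi(W)$ is a continuous inverse of $\Phi\colon W\to\Phi(W)$, so $\Phi$ is an embedding. Since $q$ is nothing but the identification $\U(\cH)\to\U(\cH)/\U_\rho(\cH)\simeq\cO_\zr$, the composite of $\Phi$ with the canonical projection $\U(\cH)\to\U(\cH)/\U_\rho(\cH)$ is $\Id_W$.

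For the last assertion I would compare, on $\cO_\zr$, the topology induced from $\u_*(\cH)$ with the quotient topology transported via $\U(\cH)/\U_\rho(\cH)\simeq\cO_\zr$. Since $q$ is continuous into $(\cO_\zr,\L^1)$ and the quotient topology is, by construction, the finest making $q$ continuous, the identity $(\cO_\zr,\mathrm{quot})\to(\cO_\zr,\L^1)$ is continuous; this is one inclusion. Conversely, $\Phi$ is a continuous local section of $q$ near $\rho$, so $\pi\circ\Phi$ — with $\pi$ the projection onto $\U(\cH)/\U_\rho(\cH)$ — shows the identity $(\cO_\zr,\L^1)\to(\cO_\zr,\mathrm{quot})$ is continuous on $W$; translating $\Phi$ by an element of $\U(\cH)$ carrying $\rho$ to any prescribed point of the orbit (the action being by homeomorphisms for both topologies) yields such a local section near every point, so the identity is continuous everywhere and the two topologies agree. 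The only real obstacle is this final homogenization step, together with the care needed to see that $q|_{\Phi(W)}$ is continuous for the correct topologies so that ``embedding'' is genuinely justified; everything else is either already in the text above or a routine continuity verification.
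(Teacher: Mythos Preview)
Your proof is correct and follows the paper's approach: the paper presents the construction of $\Phi$ in the text preceding the theorem and then simply states the theorem as a consequence, so your proposal is really a fleshed-out version of what the paper leaves implicit. In particular, your homogenization step (translating the local section by the $\U(\cH)$-action to get sections near every orbit point) and your verification that $q|_{\Phi(W)}$ provides a continuous inverse are details the paper does not spell out; they are correct and needed for the ``in consequence'' clause to actually follow.
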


\section{Quantum dynamics in the Tulczyjew picture}
\subsection{Quantum Tulczyjew triple}
Our aim in this section is to construct a quantum analog of the Tulczyjew triple (\ref{tt}). First, notice that the Hilbert space $\cH$ can be viewed as an infinite-dimensional K\"ahler manifold with the standard complex structure $J$ the Riemannian and symplectic structures $\gh$ and $\oh$ being the real and (minus) imaginary parts of the hermitian product,respectively, \cite{ashtekar96,grabowski05}:
\be J(x)=i\cdot x,\qquad \gh(x,y)-i\cdot\oh(x,y)=\bk{x}{y}\,.
\ee
It is clear that the unitary group acts by symplectomorphisms which are simultaneously isometries
of the Riemannian structure. This is the starting point for developing a geometric quantum mechanics \cite{CM,cirelli90,cirelli91,cirelli03}.

To make a connection with the classical triple we will view the symplectic manifold $(\cH,\oh)$ as the cotangent bundle $\sT^*H$ with the canonical symplectic form $\zw_{\sT^*H}$, where $H$ is a `real part' of $\cH$, i.e.\   a real vector subspace of $\cH$ such that $\cH=H\oplus_\R iH$. In what follows, with some abuse of notation, we will understand the symbol $H\oplus H'$ as $H\oplus_\R H'$ if $H,H'$ are interpreted as real spaces, and $\cH\oplus\cH'$ as $\cH\oplus_\C\cH'$ if $\cH,\cH'$ are complex.  Note that $H$ and $iH$ are real Hilbert spaces with the scalar product $g_H$ induced from $\gh$, and that one can view $iH$ (\emph{via} $\oh$) as the dual space $H^*$ of $H$. Of course, the scalar product $g_H$ induces a canonical isomorphism $H\simeq H^*=iH$. In general there is no canonical choice of the `real part' $H$. We can span $H$ by an orthonormal basis $(e_k)$ in $\cH$ and real coefficients:
$$H=\left\{ q^ke_k\,:\, q_k\in\R\,,\quad \sum_k|q^k|^2<\infty\right\}\,,
$$
and regard $q=(q^k)$ as (real) coordinates in $H$. We will also write with some abuse of notation $q\in H$. It is easy to see that in the coordinates $(q,p)$ in
$$\cH=H\oplus iH=H\oplus H^*$$
for which $x\in\cH$ is written as
$x=\sum_k(q^k+ip_k)e_k$, the symplectic form $\oh$ reads
\be
\oh=\xd p_k\we\xd q^k\,,
\ee
i.e.\   $\oh$ coincides with $\zw_{\sT^*H}$.
\begin{remark}
Let us note that in some situations the choice of $H$ can be canonical. For instance,
if $\cH=\L^2(\Omega,\zm)$ is the space of square-integrable function on a measure set $(\Omega,\zm)$, then it is natural to choose $H$ as the space $\L^2_\R(\zW,\zm)$ of real functions from $\cH=\L^2(\Omega,\zm)$ (cf.\    \cite{MV}).
\end{remark}
Having chosen $H$ as the configuration manifold, we can write the corresponding Tulczyjew triple (the only difference is that we work with infinite-dimensional real Hilbert manifolds):
\be\label{qtt}
    \xymatrix@R-4mm @C-10mm
        { & \sT^\*\sT^\*H \ar[ldd]_*{} \ar[rd]^*{} & & & \sT \sT^\* H \ar[rrr]^*{{\za}_H}
        \ar[lll]_*{\zb_H} \ar[ldd]^*{} \ar[rd]^*{}& & & \sT^\*\sT H \ar[ldd]^*{} \ar[rd]^*{} & \cr
        & & \sT H \ar[ldd]^*{} & & & \sT H  \ar[ldd]^*{} \ar@{=}[lll]^*{} \ar@{=}[rrr]^*{} & & & \sT H \ar[ldd]^*{}  \cr
         \sT^\* H  \ar[rd]^*{}  & & & \sT^\* H \ar@{=}[rrr]^*{} \ar@{=}[lll]^*{} \ar[rd]^*{}& & & \sT^\* H
        \ar[rd]^*{} & & \cr
        & H  & & &  H \ar@{=}[rrr]^*{} \ar@{=}[lll]^*{} & & & H & }.
\ee
We have already identified $\sT^*H$  with $\cH=H\oplus H^*$. As $H$ is a linear Hilbert manifold, we have canonical identifications
$$\sT\sT^*H=\cH\oplus\cH\,,\quad \sT^*\sT^*H=(H\oplus H^*)\oplus(H^*\oplus H)\,,\quad  \sT^*\sT H=(H\oplus H)\oplus(H^*\oplus H^*)\,.$$
We can choose the corresponding coordinates such that the isomorphisms $\za_H$ and $\zb_H$ take the form of the identities:
\bea\label{coor}
(q,p,\dot q,\dot p)&&\quad \text{on}\quad (H\oplus H^*)\oplus(H\oplus H^*)=\sT\sT^*H\,,\\
(q,\dot q,\dot p, p)&&\quad \text{on}\quad (H\oplus H)\oplus(H^*\oplus H^*)=\sT^*\sT H\,,\label{coor1}\\
(q,p,\dot p,-\dot q)&&\quad \text{on}\quad (H\oplus H^*)\oplus(H\oplus H^*)=\sT^*\sT^*H\,.\label{coor2}
\eea
Identifying $\sT\sT^*H$ with $\cH\oplus\cH$ \emph{via}
$$(q,p,\dot q,\dot p)\mapsto (x,\dot x)=(q+ip,\dot q+i\dot p)\,,$$
we can write the canonical symplectic form
\be\label{0}
\zw_{0+}=\zw_{\sT\sT^*H}=\xd \dot p_k\we\xd q^k+\xd p_k\we\xd \dot q^k
\ee
on $\sT\sT^*H$, being the tangent lift $\dt \zw_{\sT^*H}$ of the canonical symplectic form $\zw_{\sT^*H}$ on $\sT^*H$, as
\be\label{sf}
\zw_{0+}\left((x,\dot x),(y,\dot y)\right)=-\im\bk{(x,\dot x)}{(y,\dot y)}_{0+}\,,
\ee
where the pseudo-Hermitian form $\bk{\cdot}{\cdot}_{0+}$ (we will see that it is actually Hermitian) reads
\be\label{bk0}
\bk{(x,\dot x)}{(y,\dot y)}_{0+}=\bk{\dot x}{y}+\bk{x}{\dot y}\,.
\ee
On $\sT^*\sT H$ the canonical symplectic form $\zw_0=\zw_{\sT^*\sT H}$ in coordinates (\ref{coor1}) reads exactly as $\zw_{0+}$ in (\ref{sf}) (the map $\za_H$ is a symplectomorphism). Writing $Q$ for $(q,\dot q)\in H\oplus H$ and $P$ for $(\dot p,p)\in H^*\oplus H^*$ we can also write
\be\label{zw0}
\zw_0\left((Q,P),(Q',P')\right)=\langle P',Q\rangle-\langle P,Q'\rangle\,,
\ee
where $\langle\cdot,\cdot\rangle$ is the canonical pairing between $H^*\oplus H^*$ and $H\oplus H$.

Actually, using the identification $H\simeq H^*$, we can identify also the bundles $\sT^*\sT H$ and $\sT^*\sT^*H$ with $\cH\oplus\cH$, but the identifications are far from being canonical. Also
the expression similar to (\ref{sf}) for the canonical symplectic forms depends on the chosen identification. In this sense, all main objects in the quantum Tulczyjew triple can be viewed as $\cH\oplus\cH$. However, we will not exploit explicit identifications (except this for the tangent bundle $\sT\sT^*H$ which contains the quantum dynamical part), as for generating the dynamics we will use anyhow the canonical Tulczyjew isomorphisms $\za_H$ and $\zb_H$.

Note only that on $\cH_1=\cH\oplus\cH$ we have two other pseudo-Hermitian structures:
\be\label{hi-her}
\bk{(\zf,\zf')}{(\zc,\zc')}_\pm=\bk{\zf'}{\zc'}\pm\bk{\zf}{\zc}\,.
\ee
and the corresponding symplectic structures
\be\label{zw1}
\zw_\pm\left((\zf,\zf')(\zc,\zc')\right)=-\im\left(\bk{(\zf,\zf')}{(\zc,\zc')}_\pm\right)\,.
\ee
The structure $\bk{\cdot}{\cdot}_+$ is actually the canonical symplectic structure on $\cH_1=\cH\oplus\cH$.
We will show later on that the symplectic structures $\zw_0$ and $\zw_+$ on $\cH_1=\cH\oplus\cH$ are actually linearly equivalent, i.e.\   there is a complex linear isomorphism (\emph{Cayley map}) $C:\cH_1\to \cH_1$ which maps $\zw_+$ onto $\zw_{0+}$.

\subsection{Self-adjoint and anti self-adjoint relations}
Let us recall that in the Tulczyjew approach an implicit dynamics is a Lagrangian submanifold in $\sT\sT^*H$. In the quantum case we will consider only \emph{complex linear Lagrangian submanifolds}, i.e.\  those Lagrangian submanifolds $V\subset\sT\sT^*H$ which, up to the above identification, are complex linear subspaces of the Hilbert space $\cH\oplus\cH$.
The linear subspace $V$ of $\cH\oplus \cH\simeq \cH\times \cH$ can be
understood as a (linear) relation in $\cH$. In particular, if $A:
\cH\supset\mathcal{D}\rightarrow \cH$ is a complex linear operator in the domain
$\mathcal{D}$, its graph, $\mathfrak{G}(A)=\{(x,Ax), x\in\mathcal{D}\}$, is a linear
relation (subspace) in $\cH\oplus\cH$. Such subspaces $V\subset\cH\oplus\cH$
are characterized by the condition that the projection
$(x,x^\prime)\xrightarrow{\pi_1} x$ onto the first component is injective on
$V$. Note that an operator $A$ is called \emph{closed} if $\mathfrak{G}(A)$ is closed.

The crucial observation is the following.
\begin{theorem}
The graph $\mathfrak{G}(A)$ of a complex linear operator $A:
\cH\supset\mathcal{D}\rightarrow \cH$ is an isotropic submanifold for the symplectic structure (\ref{sf}) if and only if the operator is anti-symmetric:
$$\bk{Ax}{y}+\bk{x}{Ay}=0\quad\text{for}\quad x,y\in\cD\,.$$
Moreover, $\mathfrak{G}(A)$ is a Lagrangian submanifold if and only if $\cD$ is dense in $\cH$ and $A$ is anti self-adjoint, i.e.\   $A^\dag=-A$.
\end{theorem}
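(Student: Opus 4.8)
\noindent\textit{Proof proposal.} The plan is to reduce both assertions to short computations with the form $\bk{\cdot}{\cdot}_{0+}$ of (\ref{bk0}), exploiting throughout that $\mathfrak{G}(A)$ is a \emph{complex-linear} subspace of $\cH\oplus\cH$. The one preliminary I would record is that, since $\zw_{0+}=-\im\bk{\cdot}{\cdot}_{0+}$, for any complex-linear subspace $V$ the symplectic orthogonal $V^{\zw}=\{u:\zw_{0+}(u,v)=0\ \forall v\in V\}$ coincides with the $\bk{\cdot}{\cdot}_{0+}$-orthogonal $\{u:\bk{u}{v}_{0+}=0\ \forall v\in V\}$: replacing $v$ by $iv\in V$ turns the vanishing of $\im\bk{u}{v}_{0+}$ into the vanishing of $\re\bk{u}{v}_{0+}$. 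The same trick shows that $V$ is isotropic for $\zw_{0+}$ iff $\bk{\cdot}{\cdot}_{0+}$ vanishes identically on $V\times V$, and that $V$ is Lagrangian (maximal isotropic) iff $V=V^{\zw}$; since $V^{\zw}$ is automatically norm-closed, closedness of a Lagrangian subspace is then free.

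For the isotropy statement I would just substitute graph vectors into (\ref{bk0}): for $x,y\in\cD$,
\be
\bk{(x,Ax)}{(y,Ay)}_{0+}=\bk{Ax}{y}+\bk{x}{Ay}\,,
\ee
so by the preliminary $\mathfrak{G}(A)$ is isotropic iff $\bk{Ax}{y}+\bk{x}{Ay}=0$ for all $x,y\in\cD$, which is precisely anti-symmetry of $A$; both implications are immediate.

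For the Lagrangian statement I would compute $\mathfrak{G}(A)^{\zw}$ explicitly. By the preliminary and (\ref{bk0}), a pair $(y,\dot y)\in\cH\oplus\cH$ lies in $\mathfrak{G}(A)^{\zw}$ iff $\bk{\dot y}{x}+\bk{y}{Ax}=0$ for every $x\in\cD$, i.e.\ iff the functional $x\mapsto\bk{y}{Ax}$ on $\cD$ agrees with the bounded functional $x\mapsto-\bk{\dot y}{x}$. By the definition of the adjoint of a densely defined operator (the Remark following (\ref{dag})) this says exactly that $y\in\Dom(A^\dag)$ and $\dot y=-A^\dag y$, provided $\cD$ is dense so that $A^\dag$ is single-valued; hence, \emph{when $\cD$ is dense}, $\mathfrak{G}(A)^{\zw}=\mathfrak{G}(-A^\dag)$. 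The ``if'' direction is then read off at once: if $\cD$ is dense and $A^\dag=-A$, then $A$ is anti-symmetric, so $\mathfrak{G}(A)$ is isotropic, and $\mathfrak{G}(A)^{\zw}=\mathfrak{G}(-A^\dag)=\mathfrak{G}(A)$, so $\mathfrak{G}(A)$ is Lagrangian. For the ``only if'' direction, assume $\mathfrak{G}(A)=\mathfrak{G}(A)^{\zw}$; then $\mathfrak{G}(A)$ is isotropic, so $A$ is anti-symmetric, and $\cD$ must be dense, for otherwise any nonzero $w\perp\cD$ would satisfy $\bk{(0,w)}{(x,Ax)}_{0+}=\bk{w}{x}=0$ for all $x\in\cD$, giving $(0,w)\in\mathfrak{G}(A)^{\zw}\setminus\mathfrak{G}(A)$ and contradicting $\mathfrak{G}(A)=\mathfrak{G}(A)^{\zw}$. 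With $\cD$ dense, $\mathfrak{G}(A)=\mathfrak{G}(A)^{\zw}=\mathfrak{G}(-A^\dag)$, and equality of graphs forces equality of operators together with their domains, i.e.\ $A^\dag=-A$.

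The hard part is not any analytic estimate but the bookkeeping of the genuinely infinite-dimensional features: one must use the correct definition of $A^\dag$ for a merely densely defined $A$ and track domains throughout, and --- the step with no finite-dimensional analogue --- one must \emph{derive} the density of $\cD$ from the maximality of $\mathfrak{G}(A)$ as an isotropic subspace rather than assume it (in finite dimensions $\cD=\cH$ automatically and the whole statement collapses to a dimension count). A secondary point worth stating cleanly is what ``Lagrangian submanifold'' means in this weakly-symplectic setting; the computation above also shows, consistently, that a Lagrangian $\mathfrak{G}(A)$ is automatically a closed subspace, matching the fact that $A=-A^\dag$ is a closed operator.
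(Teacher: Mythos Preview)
Your proof is correct and follows essentially the same line as the paper's: both reduce isotropy to the vanishing of $\bk{Ax}{y}+\bk{x}{Ay}$ via the complex-linearity of the graph, establish density of $\cD$ from the observation that $(0,w)$ with $w\perp\cD$ lies in the symplectic orthogonal, and identify $\mathfrak{G}(A)^{\zw}$ with $\mathfrak{G}(-A^\dag)$ through the definition of the adjoint. Your version is slightly more explicit in isolating as a preliminary the fact that for complex-linear subspaces the $\zw_{0+}$-orthogonal coincides with the $\bk{\cdot}{\cdot}_{0+}$-orthogonal (the paper uses this tacitly when passing from symplectic orthogonality to the equation $\bk{x}{\dot y}+\bk{Ax}{y}=0$), and in separating the logical dependence of the $\mathfrak{G}(A)^{\zw}=\mathfrak{G}(-A^\dag)$ identification on the density of $\cD$.
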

\begin{proof}
According to (\ref{sf}), $\mathfrak{G}(A)$ is syplectically isotropic if and only if
the imaginary part of $\bk{Ax}{y}+\bk{x}{Ay}$ vanishes. But $A$ is complex linear, so $(ix,iAx)$ belongs also to $\mathfrak{G}(A)$, thus the imaginary part of
$$\bk{iAx}{y}+\bk{ix}{Ay}=-i\left(\bk{Ax}{y}+\bk{x}{Ay}\right)\,,$$
which is the real part of
$\bk{Ax}{y}+\bk{x}{Ay}$,
vanishes as well. In consequence, $\mathfrak{G}(A)$ is symplectically isotropic if and only if $\bk{Ax}{y}+\bk{x}{Ay}$ vanishes for all $x,y\in\cD$, i.e.\   $A$ is anti-symmetric.
If $\mathfrak{G}(A)$ is Lagrangian, then $\cD$ must be dense in $\cH$. Indeed, if $x_0$ is orthogonal to $\cD$, then $(0,x)$ is symplectically orthogonal to $\mathfrak{G}(A)$, thus $(0,x)\in \mathfrak{G}(A)$ and $x=0$. Moreover, $(y,\dot y)$ is symplectically orthogonal to $\mathfrak{G}(A)$ if and only if
$$\bk{x}{\dot y}+\bk{Ax}{y}=0\,,$$
i.e.\   if and only if $y$ is in the domain of $A^\dag$ and $\dot y=A^\dag y$. Hence, $\mathfrak{G}(A)$ is Lagrangian if and only if $A^\dag=-A$.
\end{proof}

The above result is in full accordance with the fact that the explicite quantum dynamics is described by one-parameter subgroups of the unitary group whose generators are anti self-adjoint operators (Stone theorem). Complex linear Lagrangian submanifolds in $\cH\oplus\cH$ we will call also \emph{anti self-adjoint relations}. In our framework they will play the role of implicit quantum dynamics.

Of course, anti self-adjoint relations are in a close correspondence with
\emph{self-adjoint relations}. The latter are Lagrangian submanifolds of $\cH\oplus\cH$ equipped with the symplectic form
\be\label{sf0}
\zw_{0-}\left((x,\dot x),(y,\dot y)\right)=-\im\bk{(x,\dot x)}{(y,\dot y)}_{0-}\,,
\ee
where
$$\bk{(x,\dot x)}{(y,\dot y)}_{0-}=i\left(\bk{\dot x}{y}-\bk{x}{\dot y}\right)\,.$$
The map
$$\cH\oplus\cH\ni(x,\dot x)\mapsto(ix,\dot x)\in\cH\oplus\cH$$
is a symplectomorphism between $\zw_{0+}=\zw_{\sT\sT^*H}$ and $\zw_{0-}$.

\begin{theorem}\label{th1} A complex linear relation $V\subset\cH\oplus\cH$ is (anti) self adjoint if and only if the inverse relation
$$V^{-1}=\left\{(y,x)\in\cH\op\cH\,:\,(x,y)\in V\right\}$$
is (anti) self-adjoint. If $V$ is (anti) self-adjoint, then the orthogonal complement $V_+^\perp$ of the domain of $V$,
$$V_+=\operatorname{dom}(V)=\left\{ x\in\cH\,:\, (x,y)\in V\ \text{for some}\ y\right\}\,,$$
is the kernel of $V^{-1}$,
$$\ker(V^{-1})=\left\{ x\in\cH\,:\, (0,x)\in V\right\}\,.
$$
In particular, $V$ is a graph, $V=\mathfrak{G}(A)$,
if and only if it is densely defined. In this case, $A$ is a bounded operator if and only if $V_+=\cH$.

\end{theorem}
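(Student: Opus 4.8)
The plan is to read everything through the \emph{flip map} $\sigma\colon\cH\op\cH\to\cH\op\cH$, $\sigma(x,\dot x)=(\dot x,x)$, which by its very definition carries $V$ to $V^{-1}$. A one-line substitution in (\ref{bk0}) gives $\bk{\sigma(x,\dot x)}{\sigma(y,\dot y)}_{0+}=\bk{(x,\dot x)}{(y,\dot y)}_{0+}$, so $\sigma$ is a linear symplectomorphism of $(\cH\op\cH,\zw_{0+})$; the same computation with $\bk{\cdot}{\cdot}_{0-}$ shows $\sigma$ sends $\zw_{0-}$ to $-\zw_{0-}$. Now recall that, for a linear subspace $W$, being a linear Lagrangian submanifold means $W=W^{\perp_\zw}$ (the ambient forms $\zw_{0\pm}$ are non-degenerate, so this is the operative notion). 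Since $\sigma$ is an isomorphism intertwining the two forms up to sign, it maps the symplectic orthogonal of $W$ onto the symplectic orthogonal of $\sigma(W)$ — flipping the sign of a form does not change orthogonals — so $W=W^{\perp_\zw}$ iff $\sigma(W)$ has the same property. Taking $W=V$ yields the first assertion, for both the self-adjoint form $\zw_{0-}$ and the anti self-adjoint form $\zw_{0+}$.

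Next I would pin down $V_+^\perp$. Assume $V$ is anti self-adjoint, $V=V^{\perp_\zw}$ with $\zw=\zw_{0+}$; the self-adjoint case is the same word-for-word with $\zw=\zw_{0-}$ (or is transported by the symplectomorphism $(x,\dot x)\mapsto(ix,\dot x)$ between $\zw_{0+}$ and $\zw_{0-}$ noted above). If $(0,z)\in V$, then for every $(u,v)\in V$ isotropy together with (\ref{sf})--(\ref{bk0}) gives $0=\zw_{0+}((0,z),(u,v))=-\im\bk{z}{u}$; feeding in $(0,iz)$, which lies in $V$ by complex linearity, yields $0=-\im\bk{iz}{u}=\re\bk{z}{u}$, so $\bk{z}{u}=0$. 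As $u$ ranges over $\operatorname{dom}(V)=V_+$, this says $z\in V_+^\perp$. Conversely, if $z\in V_+^\perp$ then $\zw_{0+}((0,z),(u,v))=-\im\bk{z}{u}=0$ for all $(u,v)\in V$, so $(0,z)\in V^{\perp_\zw}=V$. Therefore $V_+^\perp=\{z:(0,z)\in V\}=\ker(V^{-1})$, which is the middle claim.

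The remaining statements follow quickly. $V$ is a graph precisely when the projection onto the first factor is injective on $V$, i.e.\ when $\{z:(0,z)\in V\}=\{0\}$; by the previous step this is $V_+^\perp=\{0\}$, equivalently $V_+$ dense in $\cH$. In that case the preceding theorem on graphs of operators applies, giving an anti self-adjoint $A$ with $V=\mathfrak{G}(A)$. Finally, a linear Lagrangian subspace is closed, being the intersection of the kernels of the continuous functionals $\zw_{0+}(\cdot,w)$, $w\in V$; hence $A$ is a closed, densely defined operator. If $V_+=\cH$, the closed graph theorem makes $A$ bounded; conversely, if $A$ is bounded, then for $x\in\cH$ and $x_n\in\operatorname{dom}(A)$ with $x_n\to x$ the sequence $(Ax_n)$ is Cauchy, hence convergent, and closedness of $\mathfrak{G}(A)$ forces $x\in\operatorname{dom}(A)$, so $V_+=\cH$.

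I do not expect a deep obstacle; the work is essentially bookkeeping. The two points needing care are (i) checking that $\sigma$ is an (anti-)symplectomorphism and that this genuinely transports the Lagrangian condition — which rests on reading ``Lagrangian'' as $W=W^{\perp_\zw}$ for a non-degenerate form — and (ii) in the last equivalence, correctly combining the closedness of the Lagrangian subspace $V$ with the closed graph theorem. Everything else is a direct unwinding of (\ref{sf}) and (\ref{bk0}).
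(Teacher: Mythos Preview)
Your proof is correct and follows essentially the same route as the paper's: the first claim is read off from the symmetry of (\ref{bk0}) under the flip (you make the map $\sigma$ explicit where the paper just says ``easily follows''), the identification $V_+^\perp=\ker(V^{-1})$ is obtained by testing $(0,z)$ against $V$ via (\ref{sf})--(\ref{bk0}), and the boundedness equivalence is the closed graph/Banach inverse theorem combined with the closedness of the Lagrangian subspace, exactly as in the paper. Your explicit justification that a linear Lagrangian subspace is closed (as an intersection of kernels of continuous functionals) is a useful detail the paper uses without comment.
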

\begin{proof}
The first statement easily follows from the formulas (\ref{sf}) and (\ref{bk0}). It is also easy to see that if $V$ is a complex linear relation on a domain $V_+$, then any vector $(0,x)$ is $\bk{\cdot}{\cdot}_{0+}$-orthogonal to $V$, thus belongs to $V$,  if and only if $x\in V_+^\perp$.

Assume now that $V=\mathfrak{G}(A)$ and $A$ is bounded, $\nm{Ax}\le K\nm{x}$, and let $x_n\in V_+$, $x_n\to x_0$. Then, $(x_n)$ is a Cauchy sequence, $\lim_{n,m\to\infty}\nm{x_n-x_m}=0$, so
$$\lim_{n,m\to\infty}\nm{Ax_n-Ax_m}=\lim_{n,m\to\infty}\nm{A(x_n-x_m)}\le K\lim_{n,m\to\infty}\nm{x_n-x_m}=0
$$
and $(Ax_n)$ is a Cauchy sequence, $Ax_n\to y_0$.
Since $V$ is closed, $(x_0,y_0)\in V$, thus $y_0=Ax_0$. This implies that the domain of $A$ is closed, thus equals $\cH$. Conversely, if the domain of $A$ is $\cH$, then the canonical projection $pr_1:\cH\oplus\cH\to\cH$ onto the first component maps $V$ injectively onto $\cH$. In view of the Banach Inverse Theorem, the map $A$, which is the inverse of $V\to\cH$, is bounded.

\end{proof}

\medskip
General (anti) self-adjoint relations we can describe as follows.
\begin{theorem}\label{tsarel}
Any (anti) self-adjoint relation $V\subset\cH\oplus\cH$ is of the form
\be\label{sarel}
V_A=\{ (x,Ax+v)\,|\, x\in\cD\,,\quad v\in\cD^\perp\}\,,
\ee
where $A:\cH\supset\cD\to\cH$ is an (anti) self-adjoint operator densely defined in the
Hilbert space $\ol{\cD}\subset\cH$.
\end{theorem}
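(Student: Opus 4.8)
The plan is to reconstruct $A$ directly from the relation $V$; I describe the anti self-adjoint case, the self-adjoint one being identical after replacing $\zw_{0+}$ and $\bk{\cdot}{\cdot}_{0+}$ by $\zw_{0-}$ and $\bk{\cdot}{\cdot}_{0-}$, which turns ``anti-symmetric'' into ``symmetric'' below. Set $\cD:=\operatorname{dom}(V)=\{x:(x,y)\in V\text{ for some }y\}$ and $N:=\{x:(0,x)\in V\}=\ker(V^{-1})$. By Theorem \ref{th1}, $N=\cD^\perp=(\ol{\cD})^\perp$, so $\cH=\ol{\cD}\oplus N$ orthogonally; let $P$ be the orthogonal projection onto $\ol{\cD}$. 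Since two choices $y,y'$ with $(x,y),(x,y')\in V$ differ by an element of $N=\ker P$, the assignment $Ax:=Py$ is a well-defined complex-linear map $A:\cD\to\ol{\cD}$, densely defined in the Hilbert space $\ol{\cD}$. A short computation with linear combinations (using $(0,v)\in V$ for each $v\in N$ together with $\cH=\ol{\cD}\oplus N$) then gives exactly $V=V_A=\{(x,Ax+v):x\in\cD,\ v\in\cD^\perp\}$.

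Next, the cheap half of self-adjointness. Isotropy of $V$ applied to $(x_1,Ax_1),(x_2,Ax_2)\in V$ gives $\im(\bk{Ax_1}{x_2}+\bk{x_1}{Ax_2})=0$, and applying it also to $i(x_1,Ax_1)=(ix_1,iAx_1)\in V$ (the complex-linearity trick already used in the proof of the graph criterion) kills the real part as well; hence $\bk{Ax_1}{x_2}+\bk{x_1}{Ax_2}=0$ for all $x_1,x_2\in\cD$, i.e. $A$ is anti-symmetric. Equivalently $-A\subseteq A^\dag$, the adjoint being taken inside the Hilbert space $\ol{\cD}$.

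The substantive step is the reverse inclusion $\operatorname{dom}(A^\dag)\subseteq\cD$, and it is here that one uses that $V$ is \emph{Lagrangian} (maximal isotropic), not merely isotropic: since $\bk{\cdot}{\cdot}_{0+}$ is bounded, $\zw_{0+}$ is continuous, the symplectic orthogonal complement $V^{\perp}$ is closed, and being Lagrangian means $V=V^{\perp}$ (so, in passing, $V$ is closed). Given $z\in\operatorname{dom}(A^\dag)\subseteq\ol{\cD}$ with $w:=A^\dag z$, one has $\bk{z}{Ax}=\bk{w}{x}$ for $x\in\cD$ and $\bk{z}{v}=0$ for $v\in N$, whence $\bk{(z,-w)}{(x,Ax+v)}_{0+}=-\bk{w}{x}+\bk{z}{Ax}+\bk{z}{v}=0$ for every generator $(x,Ax+v)$ of $V$; thus $(z,-w)\in V^{\perp}=V=V_A$. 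This forces $z\in\cD$ and $-w=Az+v'$ with $v'\in N$, and since $w,Az\in\ol{\cD}$ while $N\cap\ol{\cD}=\{0\}$ we get $v'=0$ and $A^\dag z=-Az$. Together with $-A\subseteq A^\dag$ this yields $A^\dag=-A$, so $A$ is anti self-adjoint in $\ol{\cD}$.

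Thus the main obstacle is precisely this maximality argument --- upgrading anti-symmetry of $A$ to anti self-adjointness --- which is where ``Lagrangian'' is used in full strength. The points demanding care along the way are: (i) the identity $N=\cD^\perp$ from Theorem \ref{th1}, which simultaneously makes $A$ well defined and makes ``$v\in\cD^\perp$'' an exact description of the fibres of $V$ rather than just an inclusion; (ii) the boundedness of $\bk{\cdot}{\cdot}_{0+}$, which guarantees that $V^{\perp}$ is closed, so that $V=V^{\perp}$ is the correct reading of ``Lagrangian'' in infinite dimensions; and (iii) the repeated appeal to complex linearity of $V$ to turn vanishing of imaginary parts of Hermitian pairings into vanishing of the pairings themselves.
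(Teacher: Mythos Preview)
Your argument is correct and follows essentially the same route as the paper: identify $N=\ker(V^{-1})=\cD^\perp$ via Theorem~\ref{th1}, split $V$ into a graph part over $\ol\cD$ plus the vertical $\{0\}\oplus N$, and then use the Lagrangian condition to upgrade anti-symmetry of the resulting operator to anti self-adjointness. The only difference is packaging: the paper observes that $V_1=V\cap(\ol\cD\oplus\ol\cD)$ is itself a Lagrangian, densely defined relation in $\ol\cD\oplus\ol\cD$ and then invokes the already-proved graph criterion (the theorem preceding Theorem~\ref{th1}) to conclude $V_1=\mathfrak G(A)$ with $A$ anti self-adjoint, whereas you unpack that last step by hand with the $(z,-w)\in V^{\perp}=V$ argument.
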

\begin{proof}
Let $\cH_0=\ker(V^{-1})$ and $\cH_1=\cH_0^\perp$. It is easy to see that
$$V=\{(x,y+v)\,|\,(x,y)\in V_1\,,\quad v\in H_0\}\,,
$$
Where $V_1=V\cap(H_1\oplus H_1)$ is the restriction of $V$ to $H_1\oplus H_1$. As $V_1$ is clearly Lagrangian (anti) self-adjoint and densely defined, it is a graph of an (anti) self-adjoint operator densely defined on $\cD\subset\cH_1$.
\end{proof}

\subsection{Quantum dynamics in the Tulczyjew picture}
The procedure of generating a quantum dynamics will be now standard in the Tulczyjew picture.
Since we intend to work only with linear relations, we will consider for simplicity only Lagrangians $L$ which are real functions, quadratic in $Q=(q,\dot q)$, and defined in domains of their differentiability $\cD_0$ in $\sT H=H\oplus H$ which are linear subspaces of $\sT H=H\oplus H$,
\be\label{L}
L:\sT H\supset\cD_0\to\R\,.
\ee
This means that on $\cD_0$ the Lagrangian is defined together with its differential
$$\xd L:\cD_0\to \ol{\cD_0}^{\,*}\,,$$
associating with the points $Q$ of $\cD_0$ linear functionals $\xd L(Q)$ which are continuous on $\cD_0$, thus on the closure $\ol{\cD_0}$ of $\cD_0$. Of course, $\xd L(Q)(Q')$, for $Q'\in\cD_0$ is the directional derivative of $L$ in the direction of $Q'$.

We regard $\cD_0$ as a vakonomic constraint and generate the Lagrangian submanifold $S(L)$ in $\sT^*\sT H$. Actually, in the infinite dimensions we construct $S(L)$ as
the closure of the linear subspace
$$S(L)^0=\left\{(Q,P)\in\sT^*\sT H\,:\,Q\in\cD_0\,,\quad pr(P)=\xd L(Q)\right\}\,,
$$
where $pr:H^*\oplus H^*\to \ol{\cD_0}^{\,*}$ is the canonical projection, dual to the embedding $\ol{\cD_0}\hookrightarrow H^*\oplus H^*$.
\begin{theorem}\label{th2}
For any Lagrangian (\ref{L}), defined on a linear subspace $\cD_0$ of $\sT H=H\oplus H$ together with its differential and quadratic in $Q=(q,\dot q)$, the linear subspace $S(L)\subset \sT^*\sT H=(H\oplus H)\oplus(H^*\oplus H^*)$ is Lagrangian with respect to the canonical symplectic form $\zw_0$.
\end{theorem}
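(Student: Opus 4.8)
\textit{Proof plan.} The plan is to show that $S(L)$ equals its own symplectic orthogonal $S(L)^{\perp_{\zw_0}}$ in $(\sT^*\sT H,\zw_0)$; since the canonical form $\zw_0$ of (\ref{zw0}) is \emph{strong} — under the reflexive identification $(H\oplus H)^{**}=H\oplus H$ its flat map $\sT^*\sT H\to(\sT^*\sT H)^*$ is the isomorphism $(Q,P)\mapsto(-P,Q)$ — this is exactly what it means for the closed subspace $S(L)$ to be Lagrangian. Write $L(Q)=\tfrac12 B(Q,Q)$ for the symmetric bilinear form $B$ associated with the quadratic $L$, so that $\xd L(Q)(Q')=B(Q,Q')$; the hypothesis that $L$ is ``defined together with its differential'' means that $\xd L$ is a (linear, hence bounded) map $\cD_0\to\ol{\cD_0}^{\,*}$. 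Denote by $pr\colon H^*\oplus H^*\to\ol{\cD_0}^{\,*}$ the (surjective) restriction, so that $S(L)^0=\{(Q,P):Q\in\cD_0,\ pr(P)=\xd L(Q)\}$ and $S(L)=\ol{S(L)^0}$.

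First I would check isotropy. For $(Q,P),(Q',P')\in S(L)^0$, formula (\ref{zw0}) gives
\[
\zw_0\big((Q,P),(Q',P')\big)=\langle P',Q\rangle-\langle P,Q'\rangle=\xd L(Q')(Q)-\xd L(Q)(Q')=B(Q',Q)-B(Q,Q')=0
\]
by symmetry of $B$; this is the linear, quadratic analogue of ``$\xd f$ is a closed one-form'' in Proposition~\ref{p1}. Since $\zw_0$ is a continuous bilinear form, it still annihilates $S(L)\times S(L)$, so $S(L)$ is a (clearly linear, clearly closed) isotropic subspace, $S(L)\subseteq S(L)^{\perp_{\zw_0}}$.

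Next I would establish maximality. Let $(Q_0,P_0)\in S(L)^{\perp_{\zw_0}}$; by continuity of $\zw_0$ and density of $S(L)^0$ it is enough that $\langle P,Q_0\rangle=\langle P_0,Q\rangle$ for all $(Q,P)\in S(L)^0$. Testing with $Q=0$, where $\xd L(0)=0$ so that $P$ runs over $\ker(pr)=(\ol{\cD_0})^{\circ}$, gives $\langle P,Q_0\rangle=0$ for every $P$ annihilating $\ol{\cD_0}$, whence $Q_0\in\ol{\cD_0}$ by the bipolar theorem in $H\oplus H$. Then for each $Q\in\cD_0$ choose, by Hahn--Banach, $P_Q$ with $pr(P_Q)=\xd L(Q)$; testing against $(Q,P_Q)$ and using $Q_0\in\ol{\cD_0}$ yields $\langle P_0,Q\rangle=\xd L(Q)(Q_0)=B(Q,Q_0)$ for all $Q\in\cD_0$. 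Boundedness and symmetry of $B$ then identify the right-hand side with $\xd L(Q_0)(Q)$, so $pr(P_0)=\xd L(Q_0)$ and $(Q_0,P_0)\in S(L)^0\subseteq S(L)$. Hence $S(L)=S(L)^{\perp_{\zw_0}}$, i.e.\ $S(L)$ is Lagrangian for $\zw_0$.

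The isotropy step is a one-line symmetry computation; the substance, and the point at which infinite-dimensionality bites, is the maximality step. One must know that passing from $S(L)^0$ to the closure $\ol{S(L)^0}$ enlarges neither the subspace beyond $\{(Q,P):Q\in\ol{\cD_0},\ pr(P)=\xd L(Q)\}$ nor, crucially, the symplectic orthogonal beyond $S(L)$ itself; both rest on $\xd L$ extending boundedly to $\ol{\cD_0}$, i.e.\ on $B$ being a \emph{continuous} quadratic form — had $B$ been merely closable rather than bounded, $S(L)$ would be the graph of the \emph{closure} of the associated operator while $S(L)^{\perp_{\zw_0}}$ would be the graph of its \emph{adjoint}, and these agree only in the essentially self-adjoint case. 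The other ingredient is that $\zw_0$ is strong, so that ``maximal isotropic'', ``own symplectic orthogonal'' and ``Lagrangian'' coincide; the functional-analytic inputs (surjectivity of $pr$ and the bipolar theorem) are routine for closed subspaces of Hilbert spaces. This Lagrangian property of $S(L)$ is precisely what makes the transported dynamics $\za_H^{-1}(S(L))\subset\sT\sT^*H$ an anti self-adjoint relation in the sense of Theorem~\ref{tsarel}.
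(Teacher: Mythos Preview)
Your overall architecture matches the paper's: show $S(L)^0$ is $\zw_0$-isotropic via the symmetry of $B$, then take an element $(Q_0,P_0)$ of the symplectic orthogonal, use the fibre $\ker(pr)=\cD_0^{o}$ to force $Q_0\in\ol{\cD_0}$, and deduce that $pr(P_0)$ agrees with $\xd L$ along $\cD_0$. The divergence is in how you then place $(Q_0,P_0)$ inside $S(L)$.

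The gap is your parenthetical ``linear, hence bounded''. In infinite dimensions a linear map on an incomplete domain need not be bounded; the hypothesis ``defined together with its differential'' says only that each $\xd L(Q)$ is a continuous functional on $\ol{\cD_0}$, i.e.\ (under the Riesz identification) that the associated operator $B:\cD_0\to\ol{\cD_0}$ is \emph{symmetric}, not that it is bounded. The paper's own first example after the theorem --- the diagonal Lagrangian with weights $\zl_k$ --- has $B(q,\dot q)=(-\zl_kq^k,\dot q^k/\zl_k)$, which is unbounded as soon as the $\zl_k$ are unbounded or accumulate at $0$. Your final paragraph in fact correctly identifies the dichotomy (closure of $B$ versus adjoint of $B$), but you have misread which side of it the hypotheses place you on. A related slip: even granting boundedness, $Q_0$ lies only in $\ol{\cD_0}$, so the conclusion should be $(Q_0,P_0)\in S(L)$, not $(Q_0,P_0)\in S(L)^0$.

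The paper avoids invoking boundedness by an approximation step you do not take: having shown $Q'\in\ol{\cD_0}$ and $pr(P')(Q)=g_0(BQ,Q')$ for all $Q\in\cD_0$, it picks $Q_n\in\cD_0$ with $Q_n\to Q'$, uses symmetry to get $g_0(\cdot,BQ_n)\to pr(P')$, and then argues that one can choose lifts $P_n\in H^*\oplus H^*$ with $(Q_n,P_n)\in S(L)^0$ and $(Q_n,P_n)\to(Q',P')$, thereby placing $(Q',P')$ in the \emph{closure} $S(L)$. That sequence argument is precisely what stands in for your (unwarranted) boundedness assumption.
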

\begin{proof}
Let $g_0$ be the canonical scalar product on $\sT H=H\oplus H$,
$$g_0(Q,Q')=g(q,q')+g(\dot q,\dot q')\,.
$$
That $L$ is quadratic on the domain $\cD_0$ of differentiability means that there is a (real) linear operator $B:{\cD_0}\to\ol{\cD_0}$ such that $B$ is $g_0$-symmetric, i.e.\   $g_0(Q',BQ)=g_0(BQ',Q)$ for all $Q,Q'\in\cD_0$, and
$$L(Q)=\frac{1}{2}g_0(Q,BQ)\,,\quad Q\in\cD_0\,.
$$
Hence, $\xd L(Q)(Q')=g_0(BQ,Q')$ for $Q\in\cD_0$, $Q'\in\ol{\cD_0}$ and it is easy to see that $S(L)$ is $\zw_0$-isotropic. According to (\ref{zw0}), $(Q',P')\in (H\oplus H)\oplus(H^*\oplus H^*)$ is $\zw_0$-orthogonal to all $(Q,P)$, with $Q\in\cD_0$ and $pr(P)=\xd L(Q)$,
if and only if
$$\langle P',Q\rangle=\langle P,Q'\rangle\,,\ \text{for all}\ Q\in\cD_0\,,\ pr(P)=\xd L(Q)\,.
$$
Since $pr(P_1)=\pr(P_2)$ means that $P_1-P_2$ belongs to the annihilator $\cD_0^o\subset\,H^*\oplus H^*$ of $\cD_0$, it follows that
$$\langle P,Q'\rangle=0\,,\ \text{for all}\ P\in\cD_0^o\,,$$
i.e., $Q'\in\ol{\cD_0}$. Hence,
$$pr(P')(Q)=\langle P',Q\rangle=\langle P,Q'\rangle=\xd L(Q)(Q')=g_0(BQ,Q')\,,\ \text{for all}\ Q\in\cD_0\,.$$
Since $Q'$ is in the closure of $\cD_0$, there is a sequence $(Q_n)$ in $\cD_0$ such that $Q_n\rightarrow Q'$. Therefore,
$$pr(P')(Q)=g_0(BQ,Q')=\lim_{n\to\infty}g_0(BQ,Q_n)=\lim_{n\to\infty}g_0(Q,BQ_n)\,,
$$
i.e., $g_0(\cdot,BQ_n)\rightarrow pr(P')$ in $\ol{\cD_0}^{\,*}$. Now, we can choose $P_n\in H^*\oplus H^*$ such that $(Q_n,P_n)\in S(L)^0$ and  $(Q_n,P_n)\rightarrow(Q',P')$, thus $(Q',P')$ is in the closure of $S(L)^0$, so this closure, $S(L)$, is a Lagrangian submanifold.

\end{proof}

Now, \emph{via} the symplectomorphism $\za_H$, we view $S(L)$ as a Lagrangian submanifold $V(L)$ in $(\sT\sT^*H,\zw_{0+})$. If we assume that $V(L)$ is a complex linear relation, it is anti self-adjoint and represents the implicit quantum dynamics. According to Theorem \ref{th1}, the kernel of $V(L)^{-1}$ is the orthogonal complement of $\cD=V(L)_+=\operatorname{dom}(V(L))$.,

The first integrability extract (cf.\    (\ref{ie})),
$$V(L)^1=V(L)\bigcap \left(\cD\oplus\ol{\cD}\right)\,,
$$
is now the graph of an anti self-adjoint operator $-\frac{i}{\hbar}A$ defined on the domain $\cD$ which is dense in the closed subspace $\ol{\cD}=\ol{V(L)_+}$ of $\cH$ representing the Hamiltonian constraint, $V(L)^1=\mathfrak{G}(A)$.
The operator $A$ is the Schr\"odinger operator in the Schr\"odinger picture.

The dynamics is explicit on $\ol{\cD}$ and generates, in view of the Stoke's Theorem, a one-parameter group $\exp(-\frac{it}{\hbar}A)$ of unitary transformations of the phase space $\ol{\cD}\subset\sT^*H$. The projections of the trajectories to $H$ are solutions of the corresponding Euler-Lagrange equations.

The implicit dynamics $V(L)$  comes also from the classical constrained Hamiltonian
\be\label{H}
h_A:\sT^* H\simeq\cH\supset\cD\to\R\,,\quad h(x)=\frac{1}{2\hbar}\bk{x}{Ax}\,,
\ee
defined on $\cD$. Indeed, the symplectic structure on $\cH=\sT^*H$ is the minus of the imaginary part of the Hermitian structure, so for $x,y\in\cD$ we have
$$\xd h_A(x)(y)=\frac{1}{2\hbar}\left(\bk{y}{Ax}+\bk{x}{Ay}\right)=
\frac{1}{\hbar}\mathcal{R}e\bk{y}{Ax}=\zw_{0+}\left(y,\frac{-iAx}{\hbar}\right)\,.
$$
Hence, the constrained Hamiltonian dynamics is represented by the Lagrangian submanifold
\be\label{Vh}V(h_A)=V_{-iA/\hbar}=\left\{(x,-\frac{i}{\hbar}Ax+v)\,:\,x\in\cD\,,\ v\in\cD^\perp\right\}\,,
\ee
which coincides with $V(L)$.
\subsection{Examples}
We now describe some examples of quantum dynamics generated by a Lagrangian or Hamiltonian, using
the quantum Tulczyjew triple with coordinates (\ref{coor}).
\begin{example}
Let $(\zl_k)_1^\infty$ be a sequence of non-zero real numbers. On $\sT H=H\oplus H$ with coordinates $(q,\dot q)$ consider the Lagrangian
\be\label{L1}
L=\frac{1}{2}\sum_{k=1}^\infty\left(\frac{1}{\zl_k}\left(\dot q^k\right)^2-\zl_k\left( q^k\right)^2\right)\,.
\ee
The Lagrangian is densely defined and its domain of differentiability is
\be\label{domain1}
\cD_0=\left\{(q,\dot q)\in\sT H\,:\,\sum_k|\zl_k q^k|^2<\infty\,,\quad \sum_k|\dot q^k/\zl_k|^2<\infty \right\}\,.
\ee
The Lagrangian submanifold it defines is $\xd L(\cD_0)$,
$$S(L)=\left\{\left(q^k,\dot q^k,-\zl_kq^k,\dot q^k/\zl_k\right)\,:\, (q,\dot q)\in\cD_0\right\}\subset\sT^*\sT H\,.
$$
The Lagrangian submanifold $V(L)=\za_H^{-1}(S(L))$ then reads
$$V(L)=\left\{\left(q^k,\dot q^k/\zl_k,\dot q^k,-\zl_kq^k\right)\,:\, (q,\dot q)\in\cD_0\right\}\subset\sT\sT^* H\simeq \cH\oplus\cH\,.
$$
It is easy to see that $V(L)$ is the graph of the complex linear operator $-\frac{i}{\hbar}A$, where
$$Ax=\hbar\sum_{k=1}^\infty\zl_kx^k\,,\quad x^k=(q^k+ip_k)e_k\,,$$
is self-adjoint on the domain
$$\cD=\left\{x\in\cH\,:\,\sum_{k=1}^\infty\nm{\zl_kx^k}^2<\infty\right\}\,.
$$
The corresponding Hamiltonian reads
\be\label{H1}
h_A(x)=\frac{1}{2}\sum_{k=1}^\infty{\zl_k}\nm{x^k}^2=\frac{1}{2}\sum_{k=1}^\infty{\zl_k}\left((p_k)^2+(q^k)^2\right)^2\,,
\ee
defined on its domain of differentiability $\cD$.
\end{example}

\begin{example}
Let us modify the above example by putting $\zl_1=0$ with the additional constraint on the domain of $L$:
\be\label{domain2}
\cD_0=\left\{(q,\dot q)\in\sT H\,:\,\dot q^1=0\,,\quad\sum_{k>1}|\zl_k q^k|^2<\infty\,,\quad \sum_{k>1}|\dot q^k/\zl_k|^2<\infty \right\}\,.
\ee
With this constraint our Lagrangian is:
\be\label{L2}
L=\frac{1}{2}\sum_{k=2}^\infty\left(\frac{1}{\zl_k}\left(\dot q^k\right)^2-\zl_k\left( q^k\right)^2\right)
\ee
and the corresponding (vakonomically generated) Lagrangian submanifold reads
$$S(L)=\left\{\left(q^k,\dot q^k,-\zl_kq^k,a^k\right)\,:\, (q,\dot q)\in\cD_0\right\}\subset\sT^*\sT H\,,
$$
where $a^k=\dot q^k/\zl_k$ if $k>1$, and $a^1$ is arbitrary.
The Lagrangian submanifold $V(L)=\za_H^{-1}(S(L))$ then reads
$$V(L)=\left\{\left(q^k,a^k,\dot q^k,-\zl_kq^k\right)\,:\, (q,\dot q)\in\cD_0\right\}\subset\sT\sT^* H\simeq \cH\oplus\cH\,.
$$
and is the graph of the complex linear operator $-\frac{i}{\hbar}A$, where
$$Ax=\hbar\sum_{k=2}^\infty\zl_kx^k\,,\quad x^k=(q^k+ip_k)e_k\,,$$
is self-adjoint on the domain
$$\cD=\left\{x\in\cH\,:\,\sum_{k=2}^\infty\nm{\zl_kx^k}^2<\infty\right\}\,.
$$
Now, $A$ has a non-trivial kernel spanned by $e_1$ and the range of $A$ is $\langle e_1\rangle^\perp$. The corresponding Hamiltonian reads
\be\label{H2}
h_A(x)=\frac{1}{2}\sum_{k=2}^\infty{\zl_k}\nm{x^k}^2\,.
\ee
\end{example}

\begin{example}
Let us modify again the first example, this time by putting $\zl_1=\infty$ with the additional constraint on the domain of $L$:
\be\label{domain3}
\cD_0=\left\{(q,\dot q)\in\sT H\,:\, q^1=0\,,\quad\sum_{k>1}|\zl_k q^k|^2<\infty\,,\quad \sum_{k>1}|\dot q^k/\zl_k|^2<\infty \right\}\,.
\ee
With this constraint our Lagrangian is again no longer completely regular and formally looks the same:
\be\label{L3}
L=\frac{1}{2}\sum_{k=2}^\infty\left(\frac{1}{\zl_k}\left(\dot q^k\right)^2-\zl_k\left( q^k\right)^2\right)\,.
\ee
However, the domain is now different and the corresponding (vakonomically generated) Lagrangian submanifold reads
$$S(L)=\left\{\left(q^k,\dot q^k,-b^k,a^k\right)\,:\, (q,\dot q)\in\cD_0\right\}\subset\sT^*\sT H\,,
$$
where $b^k=\zl_k q^k$ and $a^k=\dot q^k/\zl^k$ if $k>1$, and $b^1$ is arbitrary, $a^1=0$.
The Lagrangian submanifold $V(L)=\za_H^{-1}(S(L))$ then reads
$$V(L)=\left\{(x,y)\,:\, x\in\cD\,,\quad y^k=-ix^k\ \text{for}\ k>1\right\}\subset\sT\sT^* H\simeq \cH\oplus\cH\,.
$$
It is no longer the graph of a complex linear operator but it is a genuine anti self-adjoint relation in the domain
$$\cD=\left\{x\in\cH\,:\,\sum_{k=2}^\infty\nm{\zl_kx^k}^2<\infty\,,\quad x^1=0\right\}\,,
$$
which is not dense in $\cH$. The closure $\ol{\cD}$ is $\langle e_1\rangle^\perp$ and $V(L)$ defines an anti self-adjoint operator $(-\frac{i}{\hbar}A)$ on $\ol{\cD}$, where
$$Ax=\hbar\sum_{k=2}^\infty\zl_kx^k\,,\quad x^k=(q^k+ip_k)e_k\,.$$
The corresponding Hamiltonian is defined on $\cD$ and reads
\be\label{H3}
h_A(x)=\frac{1}{2}\sum_{k=2}^\infty{\zl_k}\nm{x^k}^2\,.
\ee

\end{example}

\begin{example}
Consider an open bounded domain $\zW$ in $\R^n$ and the Hilbert space $\cH=\L^2(\zW,\xd \br)$.
We have a canonical real part $H$ of $\cH$ consisting of real functions,
$H=\L^2_\R(\zW,\xd \br)$. For $U(\br)$ being a potential function on $\zW$, consider the `classical' Schr\"odinger operator $A=-\frac{\hbar^2}{2m}\zD + U$ and assume that it is self-adjoint in a domain $\cD_0$ and invertible. This operator can be reconstructed with the use of our procedure from the quadratic Lagrangian
\be\label{lag}
L(q,\dot q)=\frac{1}{2}\int_\zW\left[\dot q A^{-1}\dot q- \frac{1}{\hbar^2}qAq  \right]\xd \br\,.
\ee
This corresponds to the observations made in \cite{MV}. With the use of the Green function $G$ for $A$, we can write
\be\label{lag1}
L(q,\dot q)=\frac{1}{2}\int_\zW\left[\int_\zW\left(\dot q(\br)G(\br,\br')\dot q(\br')\right)\xd\br'- \frac{1}{\hbar^2}qAq  \right]\xd \br\,.
\ee
As a particular example consider the self-adjoint extension $A=-\frac{\hbar^2}{2m}\zD$ of the Laplace operator defined in $\zW=\R$ on smooth functions vanishing at infinity with first derivatives. This situation differs from the above examples, as $A$ has no point spectrum. Nevertheless, we can find the Lagrangian in the form (\ref{lag1}), which in this case reads
\be\label{lag2}
L(q,\dot q)=\frac{1}{2}\int_{\R}\left[\int_\R\left(\frac{2m}{\hbar^2}\dot q(\br)(\br-\br')\Theta(\br-\br')\dot q(\br')\right)\xd\br'- \frac{1}{2m}q(\br)(\zD q)(\br)  \right]\xd \br\,.
\ee
Here, $\Theta$ is the Heaviside step function ($\br\Theta(\br)$ is the Green function for $\zD$).

\end{example}

\subsection{Hamiltonian formalism on the Hilbert projective space}\label{holohamiltononians}
Since the Hamiltonians (\ref{H}) are constant along the action of the subgroup $S^1=\langle e^{it}\Id\rangle$ in $\U(\cH)$, we can carry out a Hamiltonian reduction from the unit sphere $S^\infty(\cH)$ which is an isotropic submanifold in $\cH$ onto the Hilbert projective space
$\P(\cH)$. The reduced symplectic structure is the canonical symplectic structure on $\P(\cH)$ and the reduced Hamiltonian is
\be\label{ha}h_{\P A}([\psi])=\frac{1}{2\hbar}\frac{\bk{\psi}{A\psi}}{\bk{\psi}{\psi}}\,.
\ee
which is differentiable on the domain $\cD_\P=\{[\psi]\,|\,\psi\in\cD\}$, where $\cD$ is the domain of the self-adjoint operator $A$ on the Hilbert space $\ol{\cD}$. The corresponding (implicit) Hamiltonian dynamics is represented by the Lagrangian submanifold
\be\label{rhLagrangian}V(h_{\P A})=\{\phi_\psi\,|\,[\psi]\in\cD_\P\,,\quad\phi=-\frac{i}{\hbar\nm{\psi}^2}(A\psi)^\perp+v\,,\quad v\in\langle\psi\rangle^\perp\cap\ol{\cD}^\perp\}\,,
\ee
in $\sT\P(\cH)$, where
$$(A\psi)^\perp=A\psi-\frac{\bk{\psi}{A\psi}}{\bk{\psi}{\psi}}\psi$$
is the part of $A\psi$ orthogonal to $\psi$ in the decomposition $\cH=\langle\psi\rangle\op\langle\psi\rangle^\perp$ (the definition is correct: we should remember that $(\ol{\za}\phi)_\psi=\phi_{(\za\psi)}$). It is easy to see that, in other words,
$V(h_{\P A})$ is the projection of $V(h_A)$ under the tangent map $\sT\P$ of the canonical projection $\P:\cH^\ti\ni\psi\to[\psi]\in\P(\cH)$ (cf.\    (\ref{Vh})).
Reducing eventually to the Hilbert projective space $\ol{\cD_\P}=\P(\ol{\cD})$, we can assume that
$\cD_\P$ is dense in $\P(\cH)$, i.e.\   $h_{\P A}$, thus $A$ and the Hamiltonian vector fields
$X_{A}(x)=-\frac{i}{2\hbar}Ax$, as well as
$$X_{\P A}([\psi])=-\frac{i}{\hbar\nm{\psi}^2}((A\psi)^\perp)_\psi\,,
$$
are densely defined. The vector field $X_{\P A}$ is an infinitesimal automorphism of the K\"ahler structure, since it generates a one-parameter group of automorphism continuous in the strong topology on $\PU(\cH)$ (i.e.\   the topology induced from the strong topology on the unitary group), namely the one given by
\be\label{var}\R\ni t\mapsto \P_{\exp(-itA/\hbar)}\in\PU(\cH)\,.
\ee
Densely defined Hamiltonian vector fields on $\P(\cH)$ will be called \emph{real holomorphic} if they are infinitesimal automorphisms of the K\"ahler structure (cf.\    Remark
\ref{re1}), and the corresponding Hamiltonians we call \emph{real holomorphic Hamiltonians}.

The following theorem is a nontrivial but very useful fact.
\begin{theorem}\label{pself} Each strongly  continuous group of automorphisms $[U]_t$ of the K\"ahler structure on $\P(\cH)$ is of this form (\ref{var}). In other words, each real holomorphic
Hamiltonian is of the form (\ref{ha}) for a self-adjoint operator $A$ on $\cH$.
\end{theorem}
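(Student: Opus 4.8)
The plan is to recognize $t\mapsto[U]_t$ as a strongly continuous one-parameter subgroup of the projective unitary group and to lift it to an honest one-parameter unitary group, after which Stone's theorem does the rest. By Theorem~\ref{tx} every K\"ahler automorphism of $\P(\cH)$ is of the form $U_\P$, so for each $t$ we may choose $V_t\in\U(\cH)$ with $[U]_t=(V_t)_\P$; the only ambiguity is the phase subgroup $S^1=\langle e^{i\theta}\Id\rangle$. From $[U]_s[U]_t=[U]_{s+t}$ we get $V_sV_t=c(s,t)\,V_{s+t}$ for a unique multiplier $c(s,t)\in S^1$, and associativity of composition yields the $2$-cocycle identity $c(r,s+t)\,c(s,t)=c(r,s)\,c(r+s,t)$. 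Everything now reduces to showing that $c$ is a coboundary, i.e.\ that the homomorphism $t\mapsto[V_t]\in\PU(\cH)$, which is continuous in the quotient of the strong topology, lifts to a strongly continuous homomorphism $\R\to\U(\cH)$.

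For the lift I would first produce a continuous \emph{local} section of $\U(\cH)\to\PU(\cH)$ near $[\Id]$: fixing a unit vector $x_0\in\cH$, the map $[U]\mapsto \bigl(\overline{\bk{Vx_0}{x_0}}/|\bk{Vx_0}{x_0}|\bigr)\,V$ (any $V$ in the coset $[U]$) is well defined and strongly continuous on the open set $\{\,\bk{Vx_0}{x_0}\neq0\,\}\ni[\Id]$. Composing with $t\mapsto[U]_t$ gives a strongly continuous $W_t$ defined for $|t|$ small, with $W_0=\Id$ and $W_sW_t=c(s,t)W_{s+t}$, where $c$ is now continuous near $(0,0)$ with $c(0,0)=1$. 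Shrinking the interval so $c$ stays near $1$, its principal argument $\gamma=\arg c$ is a continuous real-valued local $2$-cocycle on $\R$, hence (the vanishing of $H^2_{\mathrm{loc}}(\R,\R)$ is elementary) a coboundary $\gamma(s,t)=\beta(s)+\beta(t)-\beta(s+t)$ with $\beta$ continuous, $\beta(0)=0$; then $\wt W_t:=e^{-i\beta(t)}W_t$ is a strongly continuous \emph{local} one-parameter unitary group, which extends uniquely to a global strongly continuous one-parameter group $\wt W\colon\R\to\U(\cH)$ with $q(\wt W_t)=[U]_t$ for all $t$. (Alternatively one lifts directly using a Borel section of $\U(\cH)\to\PU(\cH)$, the triviality of $H^2_{\mathrm{Borel}}(\R,\T)$, and von Neumann's theorem that a weakly measurable one-parameter unitary group is strongly continuous; I would mention this more classical route as a remark.)

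By Stone's theorem $\wt W_t=\exp(-itA/\hbar)$ for a unique self-adjoint $A$ (absorbing the sign and the factor $1/\hbar$ into the normalization of $A$), and therefore $[U]_t=q(\wt W_t)=\P_{\exp(-itA/\hbar)}$, which is exactly (\ref{var}). For the Hamiltonian reformulation, a real holomorphic Hamiltonian $h$ is by definition the generator of a strongly continuous group of K\"ahler automorphisms, which by the first part equals $\P_{\exp(-itA/\hbar)}$ for some self-adjoint $A$; its infinitesimal generator is then $X_{\P A}$, so on the connected manifold $\P(\cH)$ the functions $h$ and $h_{\P A}$ differ by a constant, and this constant is matched by replacing $A$ with $A+cI$ (which alters neither the flow nor the domain and shifts $h_{\P A}$ by $c/2\hbar$). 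Hence $h=h_{\P A}$ is of the form (\ref{ha}).

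The heart of the argument, and its only real obstacle, is the triviality of the projective multiplier $c$: the inputs from Theorem~\ref{tx}, Stone's theorem and connectedness are routine, whereas checking that the (local or Borel) second cohomology of $\R$ vanishes, together with the continuity bookkeeping needed to pass from the section to a genuine unitary lift, is what makes the statement ``nontrivial but useful''. I do not expect a specifically infinite-dimensional difficulty beyond the fact that $\U(\cH)$ is a Polish group in the strong topology, which is precisely what allows the section and the local-to-global argument to go through.
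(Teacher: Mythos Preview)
Your proposal is correct and follows essentially the same route as the paper: reduce to lifting the projective representation $\R\to\PU(\cH)$ to a genuine strongly continuous unitary one by showing the associated $S^1$-multiplier is exact, then invoke Stone's theorem. The paper simply cites Varadarajan \cite[Ch.~7 and Thm.~7.38]{Var} for the continuous choice of representatives and the triviality of multipliers on $\R$, whereas you sketch these steps directly via an explicit local section and a local-cohomology argument, and you also spell out the passage from the group statement to the Hamiltonian statement that the paper leaves implicit.
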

\begin{proof} As every $[U]_t$ has a representative $U_t$ in the unitary group, $[U]_t=\P_{U_t}$, one shows first that these representatives can be chosen continuously, i.e.\   such that the map $\R\ni t\mapsto U_t\in\U(\cH)$ from $\R$ into the unitary group with the strong topology is continuous (see \cite[Chapter 7]{Var}). As $[U]_t$ is a one-parameter group there is a \emph{$S^1$-multiplier} $m:\R\ti\R\to S^1\subset\C$ such that
$$U_t\circ U_{t'}=m(t,t')U_{t+t'}\,.$$
If we choose other representatives, then the resulted multiplier $m'$ is \emph{similar} to $m$, i.e.\   there exists a function $s:\R\to S^1$ such that
$$m'(t,t')=\frac{m(t,t')s(t+t')}{s(t)s(t')}\,.
$$
A multiplier which is similar to the trivial multiplier $m_0(t,t')=1$ we call \emph{exact}.
Finally, one proves that any $S^1$-multiplier on $\R$ is exact \cite[Theorem 7.38]{Var}, so we can choose unitary representatives which form a strongly continuous one-parameter group of unitary transformations.

\end{proof}

\medskip
The Lagrangian submanifolds (\ref{rhLagrangian}) in $\sT\P(\cH)$ we will call \emph{anti self-adjoint dynamics}. We have the following characterization which easily follows from (\ref{projection}) and Theorem \ref{pself}.
\begin{theorem}
Anti self-adjoint dynamics in $\sT\P(\cH)$ are exactly projections under $\sT\P:\sT\cH=\cH\oplus\cH\to\sT\P(\cH)$ of anti self-adjoint relations in $\sT\cH$, i.e.
of the form (\ref{rhLagrangian}) for a certain self-adjoint operator $A:\cH\supset\cD\to\cH$
densely defined on the Hilbert space $\ol{\cD}$. They correspond to constrained real holomorphic Hamiltonians (\ref{ha}).
\end{theorem}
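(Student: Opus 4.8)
The plan is to assemble three already-available pieces: the classification of anti self-adjoint relations in Theorem~\ref{tsarel}, the explicit description~\eqref{projection} of the tangent map $\sT\P$, and the nontrivial input Theorem~\ref{pself}. Essentially all the work is in those results, so the task reduces to putting them together and checking that the domains match.

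First I would compute the $\sT\P$-image of an arbitrary anti self-adjoint relation $V\subset\cH\oplus\cH$. By Theorem~\ref{tsarel}, $V=\{(x,Bx+v):x\in\cD,\ v\in\cD^\perp\}$ with $B$ anti self-adjoint and densely defined on the closed subspace $\ol\cD$; writing $A=i\hbar B$ makes $A$ self-adjoint on $\ol\cD$ and identifies $V$ with $V(h_A)=V_{-iA/\hbar}$ of~\eqref{Vh}. Applying $\sT\P$ pointwise and using~\eqref{projection}: for $\psi\in\cD\setminus\{0\}$ and $v\in\cD^\perp$ one gets $\sT_\psi\P(B\psi+v)=\frac{1}{\nm{\psi}^2}\big((B\psi+v)^\perp\big)_\psi$; since $\psi\in\ol\cD$ and $v\in\ol\cD^\perp$ we have $v\perp\psi$, hence $(B\psi+v)^\perp=(B\psi)^\perp+v$; and since the orthogonal projection onto $\langle\psi\rangle^\perp$ is complex linear, $(B\psi)^\perp=-\frac{i}{\hbar}(A\psi)^\perp$. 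Thus the image point equals $\big(-\frac{i}{\hbar\nm{\psi}^2}(A\psi)^\perp+v'\big)_\psi$ with $v'=v/\nm{\psi}^2$ running over all of $\langle\psi\rangle^\perp\cap\ol\cD^\perp$, which is precisely~\eqref{rhLagrangian}; that is, $\sT\P(V)=V(h_{\P A})$. Conversely, anti self-adjoint dynamics were \emph{defined} to be the submanifolds~\eqref{rhLagrangian}, and it was observed after~\eqref{rhLagrangian} that each $V(h_{\P A})$ is the $\sT\P$-image of the anti self-adjoint relation $V(h_A)=V_{-iA/\hbar}$. Combining the two directions gives the first assertion: anti self-adjoint dynamics are exactly the $\sT\P$-projections of anti self-adjoint relations, all of the form~\eqref{rhLagrangian} for a self-adjoint $A$ densely defined on a closed subspace $\ol\cD$.

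For the Hamiltonian correspondence I would argue as follows. By the Hamiltonian reduction from $S^\infty(\cH)$ to $\P(\cH)$, the submanifold $V(h_{\P A})$ is the implicit Hamiltonian dynamics generated on the domain $\cD_\P$ by the constrained Hamiltonian $h_{\P A}$ of~\eqref{ha}; and $h_{\P A}$ is real holomorphic, since (as computed just before Theorem~\ref{pself}) its densely defined Hamiltonian vector field $X_{\P A}$ integrates to the strongly continuous one-parameter group~\eqref{var} in $\PU(\cH)$, all of whose elements are automorphisms of the K\"ahler structure. Conversely, Theorem~\ref{pself} states exactly that every real holomorphic Hamiltonian is of the form~\eqref{ha} for some self-adjoint $A$, hence generates an anti self-adjoint dynamics~\eqref{rhLagrangian}. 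This yields the claimed correspondence, which becomes a bijection once one records the constrained domain $\cD_\P$ as part of the data and works modulo additive constants (which do not affect the dynamics).

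The substantive difficulty is entirely inside Theorem~\ref{pself} — the $S^1$-multiplier/exactness argument lifting a strongly continuous projective one-parameter group to a genuine unitary one — and is not re-done here. Within the present proof the only points requiring care are the bookkeeping of domains (distinguishing $\cD$, its closure $\ol\cD$ and the ambient $\cH$, and noting that $v\in\ol\cD^\perp$ is automatically orthogonal to $\psi$), the complex linearity of the orthogonal projection onto $\langle\psi\rangle^\perp$ that lets one pull $-i/\hbar$ through, and the harmless rescaling $v\mapsto v/\nm{\psi}^2$ in the identification with~\eqref{rhLagrangian}.
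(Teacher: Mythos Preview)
Your proposal is correct and follows the same route as the paper, which simply records that the theorem ``easily follows from (\ref{projection}) and Theorem \ref{pself}''; you have filled in precisely those details, together with the invocation of Theorem~\ref{tsarel} to put an arbitrary anti self-adjoint relation into the form~\eqref{Vh}. The domain bookkeeping (in particular $\cD^\perp=\ol{\cD}^\perp$, so $v\perp\psi$) and the rescaling $v\mapsto v/\nm{\psi}^2$ are handled correctly.
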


Assume now for simplicity that a self adjoint operator $A$ is densely defined in $\cH$. The following is well known \cite{CM}.
\begin{theorem}
The critical point of the Hamiltonian (\ref{ha}) correspond to eigenvectors of $A$ and the values
of the Hamiltonian at the critical values equal the corresponding eigenvalues up to the factor $\frac{1}{2\hbar}$.
\end{theorem}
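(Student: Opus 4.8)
The plan is to characterize the critical points of $h_{\P A}$ either by differentiating the Rayleigh quotient directly in the tangent-space realization (\ref{tP}), or — equivalently, since the symplectic form $\oP$ on $\P(\cH)$ is strong — by using that $[\psi]$ is critical for $h_{\P A}$ exactly when the Hamiltonian vector field $X_{\P A}$ vanishes at $[\psi]$, and reading off the zeros from the explicit formula $X_{\P A}([\psi])=-\frac{i}{\hbar\nm{\psi}^2}((A\psi)^\perp)_\psi$ established above. I will carry out the direct computation and then note the shortcut.

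First I would fix $[\psi]\in\cD_\P$ with $\nm{\psi}=1$ and, for $\phi\in\cD\cap\langle\psi\rangle^\perp$, differentiate $t\mapsto h_{\P A}([\psi+t\phi])$ at $t=0$. Since $\phi\perp\psi$, the denominator equals $\bk{\psi+t\phi}{\psi+t\phi}=1+t^2\nm{\phi}^2$ and has vanishing first derivative, so the directional derivative is $\frac{1}{\hbar}\re\bk{\phi}{A\psi}$. As $\langle\psi\rangle^\perp$ is a complex subspace and $A$ is complex linear, running $\phi$ over $\cD\cap\langle\psi\rangle^\perp$ together with $i\phi$ (which is again in $\cD\cap\langle\psi\rangle^\perp$) shows that $\xd h_{\P A}([\psi])=0$ if and only if $\bk{\phi}{A\psi}=0$ for all $\phi\in\cD\cap\langle\psi\rangle^\perp$.

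Next I would observe that $\cD\cap\langle\psi\rangle^\perp$ is dense in $\langle\psi\rangle^\perp$: because $\psi\in\cD$ and $\cD$ is a dense linear subspace, the orthogonal projection $P$ onto $\langle\psi\rangle^\perp$ (namely $P\phi=\phi-\bk{\psi}{\phi}\psi$) maps $\cD$ into $\cD$, and $P(\cD)$ is dense in $\langle\psi\rangle^\perp$ by continuity and surjectivity of $P$. Hence $\xd h_{\P A}([\psi])=0$ forces $A\psi\perp\langle\psi\rangle^\perp$, i.e.\ $A\psi\in\langle\psi\rangle$, so $A\psi=\lambda\psi$ with $\lambda=\bk{\psi}{A\psi}\in\R$ (real since $A$ is self-adjoint). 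Conversely, if $A\psi=\lambda\psi$ then $\bk{\phi}{A\psi}=\lambda\bk{\phi}{\psi}=0$ on $\langle\psi\rangle^\perp$, so $[\psi]$ is critical, and $h_{\P A}([\psi])=\frac{1}{2\hbar}\bk{\psi}{A\psi}/\bk{\psi}{\psi}=\frac{\lambda}{2\hbar}$, which is the asserted value. (The shortcut: strongness of $\oP$ gives that critical points are the zeros of $X_{\P A}([\psi])=-\frac{i}{\hbar\nm{\psi}^2}((A\psi)^\perp)_\psi$; since $\phi_\psi=0$ implies $\phi=0$ for $\phi\perp\psi$ — apply $\phi_\psi$ to $\psi$ — this is precisely $(A\psi)^\perp=0$, i.e.\ $A\psi$ proportional to $\psi$.)

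The main obstacle is simply the bookkeeping with domains: one must keep the variational directions admissible ($\phi\in\cD$), genuinely invoke the density of $\cD\cap\langle\psi\rangle^\perp$ in $\langle\psi\rangle^\perp$ to pass from ``$\bk{\phi}{A\psi}=0$ on a subspace'' to ``$A\psi\in\langle\psi\rangle$'', and note that an eigenvector of $A$ automatically lies in $\cD$, so the corresponding class belongs to $\cD_\P$ where $h_{\P A}$ is differentiable. Beyond that, the argument is the standard Rayleigh-quotient calculation.
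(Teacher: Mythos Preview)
Your proposal is correct. The ``shortcut'' you sketch at the end is exactly the paper's proof: the paper simply cites the explicit form of the Lagrangian submanifold $V(h_{\P A})$ (equivalently, of $X_{\P A}$) and reads off that the differential of $h_{\P A}$ vanishes at $[\psi]$ if and only if $(A\psi)^\perp=0$, then notes $h_{\P A}([\psi])=\lambda/(2\hbar)$.

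Your primary argument via direct differentiation of the Rayleigh quotient is a slightly different, more elementary route. It avoids invoking the strongness of $\oP$ and the identification of critical points with zeros of the Hamiltonian vector field, replacing this by a bare-hands computation of the directional derivative together with the density argument for $\cD\cap\langle\psi\rangle^\perp$ in $\langle\psi\rangle^\perp$. The paper's one-line proof is cleaner given the machinery already in place (the formula for $X_{\P A}$ and the tangent-space identification $\sT_{[\psi]}\P(\cH)\simeq\langle\psi\rangle^\perp$), while your direct computation is self-contained and makes the domain issues explicit --- in particular, your observation that $\psi\in\cD$ ensures the orthogonal projection onto $\langle\psi\rangle^\perp$ preserves $\cD$ is a detail the paper leaves implicit.
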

\begin{proof}
According to (\ref{rhLagrangian}), the differential of the Hamiltonian vanishes at $[\psi]$ if and only if $(A\psi)^\perp=0$, i.e.\   $A\psi=\lambda\psi$ for some $\lambda$ (which must be real).
In this case, $h_{\P A}([\psi])=\frac{\zl}{2\hbar}$.
\end{proof}

\medskip
Note that, from the point of view of dynamics on the Hilbert projective space, the operator $A$ is determined only up to a component $\zl\Id$ for some $\zl\in\R$, so that
$$\P_{\exp(tA)}=\P_{\exp(tA')}\ \Leftrightarrow\ A-A'=\zl\Id\,,\ \zl\in\R\,.
$$
Hence, the set
$$\A([U]_t)=\{ A\,|\, \P_{\exp(-itA/\hbar)}=[U]_t\}$$
is an affine line. The corresponding spectrum of $\A([U]_t)$ must be understood as an affine object, i.e.\   a subset in the affine real line. Fixing $A\in\A([U]_t)$ identifies $\A([U]_t)$ with $\R$ and $\zs(\A([U]_t))$ with $\zs(A)\subset\R$. The definition is correct, since adding $\zl\Id$ to $A$ results in shifting the spectrum of $A$ by $\zl$.
In other words,
$$\zs(\A([U]_t))=\{ A\in\A([U]_t)\,|\, A^{-1}\in\gl(\cH)\}\,.
$$

\section{Self-adjoint extensions of symmetric relations}
Recall that on $\cH_1=\cH\oplus\cH$ we have considered the pseudo-Hermitian products
\beas
\bk{(x,\dot x)}{(y,\dot y)}_{0-}&=&i\left(\bk{\dot x}{y}-\bk{x}{\dot y}\right)\,,\\
\bk{(x,\dot x)}{(y,\dot y)}_{0+}&=&\bk{\dot x}{y}+\bk{x}{\dot y}\,,\\
\bk{(\zf,\zf')}{(\zc,\zc')}_\pm &=&\bk{\zf'}{\zc'}\pm\bk{\zf}{\zc}\,,
\eeas
and the corresponding symplectic forms $\zw_{0\pm}$, $\zw_\pm$ being the imaginary parts of the above products. We know that (complex) linear Lagrangian subspaces of $\zw_{0-}$ and $\zw_{0+}$ correspond to self-adjoint and anti self-adjoint relations (operators if the relations are graphs), respectively.
As for $\zw_{\pm}$ we have the following.
\begin{proposition}\label{prop1}
Complex linear isotropic submanifolds of $\zw_{-}$ are always graphs of a (partially defined)
complex linear operator $U:\cH\subset\cD\rightarrow\cH$ which is a partial isometry. They are Lagrangian if and only if $U$ is unitary.
\end{proposition}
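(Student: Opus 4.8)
The plan is to follow the pattern already used in the paper for graphs of anti-symmetric operators, exploiting the complex linearity of the subspace to turn ``the imaginary part of the Hermitian form vanishes'' into ``the whole Hermitian form vanishes''. Let $V\subset\cH\oplus\cH$ be a complex linear $\zw_-$-isotropic subspace. Since $(i\zf,i\zf')\in V$ whenever $(\zf,\zf')\in V$, the isotropy condition $\im(\bk{\zf'}{\zc'}-\bk{\zf}{\zc})=0$, applied both on $V\times V$ and to the pair $(i\zf,i\zf'),(\zc,\zc')$, forces
$$\bk{\zf}{\zc}=\bk{\zf'}{\zc'}\qquad\text{for all}\quad(\zf,\zf'),(\zc,\zc')\in V\,.$$
Putting $(\zf,\zf')=(\zc,\zc')=(0,\zg)$ gives $\nm{\zg}^2=0$, so $\pr_1$ is injective on $V$; hence $V=\mathfrak{G}(U)$ for a complex linear operator $U:\cD\to\cH$ with $\cD=\pr_1(V)\subset\cH$. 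The displayed identity then reads $\bk{\zf}{\zc}=\bk{U\zf}{U\zc}$ for $\zf,\zc\in\cD$, i.e.\ $U$ preserves the inner product on $\cD$; in particular $U$ is isometric (a partial isometry) and therefore automatically bounded.

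For the second statement I would compute $V^{\perp}$, the symplectic orthogonal complement of $V$ with respect to $\zw_-$. Using complex linearity of $V$ once more, $(\zc,\zc')\in V^{\perp}$ iff $\bk{\zf}{\zc}=\bk{U\zf}{\zc'}$ for all $\zf\in\cD$. If $U$ is unitary, the right-hand side equals $\bk{\zf}{U^\dag\zc'}$ for all $\zf\in\cH$, so the condition becomes $\zc=U^\dag\zc'=U^{-1}\zc'$, i.e.\ $(\zc,\zc')\in V$; combined with isotropy this gives $V^{\perp}=V$, so $V$ is Lagrangian. Conversely, assume $V$ is Lagrangian. I would first check that $\cD$ is dense: if $0\ne\zc\perp\cD$, then $(\zc,0)$ is $\zw_-$-orthogonal to $V$, hence lies in $V=\mathfrak{G}(U)$, which forces $\zc\in\cD$ — impossible. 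So $\cD$ is dense, $U$ is densely defined and bounded, $U^\dag$ is everywhere defined and bounded, and the computation of $V^{\perp}$ now gives $U^\dag\zc'-\zc\perp\cD$, i.e.\ $U^\dag\zc'=\zc$; thus $V^{\perp}=\{(U^\dag z,z)\,:\,z\in\cH\}$.

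Equating $V^{\perp}$ with $\mathfrak{G}(U)$ then yields $UU^\dag=\Id_{\cH}$ and $U^\dag U=\Id_{\cD}$. From the first identity and the fact that $U$ preserves the inner product on $\cD\supseteq\Rg(U^\dag)$, one gets that $U^\dag$ is itself an isometry of $\cH$, since $\bk{U^\dag z}{U^\dag w}=\bk{UU^\dag z}{UU^\dag w}=\bk{z}{w}$. An isometry of a complete space has closed range, so $\cD=\Rg(U^\dag)$ (the inclusion $\cD\subseteq\Rg(U^\dag)$ following from $U^\dag U=\Id_{\cD}$) is closed; being dense as well, it equals $\cH$, and then $U$ is a bijective isometry of $\cH$, i.e.\ unitary. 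The step I expect to require the most care — the \emph{main obstacle} — is precisely this passage from a merely dense domain to $\cD=\cH$: one must be sure that boundedness of $U$ (which is free, from isometry) makes $U^\dag$ behave correctly on a non-closed domain, and that the Lagrangian condition is strong enough to promote density to equality. Everything else is the ``$\im=0\Rightarrow$ full form $=0$'' device already used twice in the paper.
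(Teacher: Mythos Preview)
Your argument is correct. The first half (isotropic $\Rightarrow$ graph of a partial isometry) is essentially identical to the paper's proof: both pass from $\im\bk{\cdot}{\cdot}_-=0$ to $\bk{\cdot}{\cdot}_-=0$ via complex linearity, kill the kernel of $\pr_1$ by evaluating the form on $(0,\zg)$, and read off the isometry condition.

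For the Lagrangian $\Leftrightarrow$ unitary part you diverge from the paper. You compute $V^\perp$ explicitly as $\{(U^\dag z,z):z\in\cH\}$, equate it with $\mathfrak{G}(U)$ to obtain $UU^\dag=\Id_\cH$ and $U^\dag U=\Id_\cD$, and then close up the domain by showing $U^\dag$ is an isometry with closed range equal to $\cD$. The paper instead argues in two strokes: maximality of a Lagrangian subspace forces the domain of $U$ to be all of $\cH$ (your density argument plus the automatic closedness of a Lagrangian subspace is really what underlies this), and then observes that the transposition $(\zf,\zf')\mapsto(\zf',\zf)$ is an anti-symplectomorphism of $\zw_-$, hence sends Lagrangian subspaces to Lagrangian subspaces, so $\mathfrak{G}(U^{-1})$ is also Lagrangian and the \emph{range} of $U$ is all of $\cH$ by the same reasoning. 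The symmetry trick buys a shorter argument and avoids introducing $U^\dag$ and the closed-range lemma; your computation of $V^\perp$, on the other hand, gives the explicit description $V^\perp=\mathfrak{G}(U^\dag)^{-1}$, which is informative in its own right and makes the ``unitary $\Rightarrow$ Lagrangian'' direction completely transparent (the paper leaves that direction implicit).
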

\begin{proof}
If a complex linear subspace $V\subset\cH_1$ is isotropic with respect to $\zw_{-}$, then by the standard argument it is isotropic with respect to $\bk{\cdot}{\cdot}_{-}$. Hence, if $(\zf,\zf'_j)\in V$, $j=1,2$, then $(0,\zf'_1-\zf'_2)\in V$, so
$$\bk{(0,\zf'_1-\zf'_2)}{(0,\zf'_1-\zf'_2)}=\nm{\zf'_1-\zf'_2}^2=0$$
and $\zf'_1=\zf'_2$. This shows that $V$ is a graph, $V=\mathfrak{G}(U)$. Now, the isotropy property for the graph of $U$ reads
$$\bk{Ux}{Uy}=\bk{x}{y}\,,$$
so $U$ is a partial isometry. Moreover, $\mathfrak{G}(U)$ is Lagrangian implies that the domain of $U$ cannot be enlarged, thus is the whole $\cH$. Since the transposition $\cH_1\ni(\zf,\zf')\mapsto(\zf',\zf)\in\cH$ is an anti-symplectomorphism for $\zw_-$, it maps Lagrangian subspaces onto Lagrangian subspaces. Therefore the image of $U$ must also be the whole $\cH$, i.e.\   $U$ is invertible and $U\in\U(\cH)$.
\end{proof}
\begin{remark}
We have a similar result for the symplectic structure and complex anti-linear Lagrangian submanifolds which turn out to be the graphs of anti-unitary operators.
\end{remark}

Consider the complex linear isomorphism $C:\cH\oplus\cH\to \cH\oplus\cH$,
\be\label{cay}
C(x,x')=\frac{1}{\sqrt{2}}\left(x'+ix,x'-ix\right)\,.
\ee
It is easy to see that $C$ transfers $\bk{\cdot}{\cdot}_+$ into $\bk{\cdot}{\cdot}_{0+}$ and
$\bk{\cdot}{\cdot}_{0-}$ into $\bk{\cdot}{\cdot}_-$. Indeed,
\beas
\bk{C(\zf,\zf')}{C(\zc,\zc')}_{0+}&=&\frac{1}{{2}}\bk{(\zf'-i\zf)}{(\zc'+i\zc)}
+\frac{1}{{2}}\bk{(\zf'-i\zf)}{(\zc'+i\zc)}\\
&=&\bk{\zf'}{\zc'}+\bk{\zf}{\zc}=\bk{(\zf,\zf')}{(\zc,\zc')}_+\,.
\eeas
Similarly,
\beas
\bk{C(x,\dot x)}{C(y,\dot y)}_{-}&=&\frac{1}{{2}}\bk{(\dot x-ix)}{(\dot y-iy)}
-\frac{1}{{2}}\bk{(\dot x+ix)}{(\dot y+iy)}\\
&=&i\left(\bk{x}{\dot y}+\bk{x}{\dot y}\right)=\bk{(x,\dot x)}{(y,\dot y)}_{0-}\,.
\eeas
We have the following.
\begin{theorem}
The map (\ref{cay}) is, up to a unitary map, the unique complex linear isomorphism $C:\cH\oplus\cH\to \cH\oplus\cH$ which identifies the symplectic structures $\zw_{0+}$ and $\zw_{+}$.
\end{theorem}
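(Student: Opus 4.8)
The plan is to reduce the statement to an elementary fact about complex-linear maps that preserve the imaginary part of a positive-definite Hermitian form, so that the infinite-dimensionality plays essentially no role. Existence of such a $C$ is already contained in the computation immediately preceding the theorem: there it is checked that $C$ is a complex-linear isomorphism carrying the (positive-definite) Hermitian form $\bk{\cdot}{\cdot}_+$ to $\bk{\cdot}{\cdot}_{0+}$, and taking imaginary parts gives $C^*\zw_{0+}=\zw_+$. Hence the only thing left to prove is uniqueness, and I will argue that the ``unitary map'' in the statement must be read as precomposition by a unitary operator of $(\cH\oplus\cH,\bk{\cdot}{\cdot}_+)$ --- which is nothing but the standard Hilbert-space structure on $\cH\oplus\cH$, since $\bk{(\zf,\zf')}{(\zc,\zc')}_+=\bk{\zf}{\zc}+\bk{\zf'}{\zc'}$.

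\textbf{Key steps.} Suppose $\tilde C:\cH\oplus\cH\to\cH\oplus\cH$ is any complex-linear isomorphism with $\tilde C^*\zw_{0+}=\zw_+$. First I would set $W=C^{-1}\circ\tilde C$, a complex-linear automorphism of $\cH\oplus\cH$; from $\tilde C=C\circ W$ and $C^*\zw_{0+}=\tilde C^*\zw_{0+}=\zw_+$ one immediately gets $W^*\zw_+=\zw_+$, i.e.\ $W$ is a complex-linear symplectomorphism of $(\cH\oplus\cH,\zw_+)$. The central step is then the claim that such a $W$ is automatically $\bk{\cdot}{\cdot}_+$-unitary. Here I would use that $\bk{\cdot}{\cdot}_+$ is positive definite and that its real part is recovered from its imaginary part via the complex structure: since $\bk{u}{iv}_+=i\bk{u}{v}_+$ one has $\zw_+(u,iv)=-\im(i\bk{u}{v}_+)=-\re\bk{u}{v}_+$, hence $\re\bk{u}{v}_+=-\zw_+(u,iv)$. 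Because $W$ commutes with multiplication by $i$, $\re\bk{Wu}{Wv}_+=-\zw_+(Wu,W(iv))=-\zw_+(u,iv)=\re\bk{u}{v}_+$, while $\im\bk{Wu}{Wv}_+=-\zw_+(Wu,Wv)=-\zw_+(u,v)=\im\bk{u}{v}_+$; together $\bk{Wu}{Wv}_+=\bk{u}{v}_+$, so $W$ is an isometry of a Hilbert space and, being bijective, unitary. Conversely, for any unitary $W$ of $(\cH\oplus\cH,\bk{\cdot}{\cdot}_+)$ the map $C\circ W$ is again a complex-linear isomorphism intertwining $\zw_+$ and $\zw_{0+}$, since $W$ preserves $\bk{\cdot}{\cdot}_+$ and hence its imaginary part. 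Thus the complex-linear isomorphisms identifying $\zw_+$ with $\zw_{0+}$ are exactly the maps $C\circ W$, $W\in\U(\cH\oplus\cH)$, which is the assertion.

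\textbf{Main obstacle.} I do not expect a genuine obstacle: once the right observation is isolated the whole argument is a few lines, uniform in the dimension. The two points that need a little care are (i) pinning down the convention (linearity in which slot of $\bk{\cdot}{\cdot}$) so that the sign in $\re\bk{u}{v}_+=-\zw_+(u,iv)$ comes out right, and (ii) being explicit that ``up to a unitary map'' must mean precomposition by a unitary of the standard structure on $\cH\oplus\cH$, \emph{not} postcomposition: a complex-linear self-map of the target preserving $\zw_{0+}$ need only preserve the \emph{indefinite} Hermitian form $\bk{\cdot}{\cdot}_{0+}$, so it is merely ``pseudo-unitary'' and not unitary. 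Making (ii) precise is, I think, the only real content beyond the routine computation.
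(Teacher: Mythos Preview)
Your proposal is correct and follows essentially the same route as the paper: reduce uniqueness to the observation that a complex-linear automorphism preserving $\zw_+$ must preserve the full Hermitian form $\bk{\cdot}{\cdot}_+$ and hence be unitary --- this is precisely the paper's ``general argument'', which you have spelled out explicitly. Your care in point (ii) is well placed: the paper in fact writes $C'C^{-1}$ (postcomposition on the target side), which, as you observe, would only yield a map preserving the indefinite form $\bk{\cdot}{\cdot}_{0+}$; your choice $W=C^{-1}\tilde C$ acting on the source side is the clean formulation.
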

\begin{proof}
If $C'$ is another such an isomorphism, then $C'C^{-1}$ is a complex linear isomorphism preserving
$\zw_+$, thus by the general argument, preserving $\bk{\cdot}{\cdot}_+$, so it is unitary on $\cH_1$.
\end{proof}

Note that the symplectomorphism $C$ can be interpreted as the complex \emph{Cayley transform} and we easily obtain the von Neumann's characterization of self-adjoint extensions of symmetric operators in terms of \emph{deficiency spaces}.
\begin{theorem}
If $V\subset\cH\oplus\cH$ is a symmetric relation, i.e.\   an isotropic submanifold with respect to the symplectic form $\zw_{0-}$, then its image $C(V)\subset\cH\oplus\cH$ by $C$ is the graph of a partial isometry
$$U:\cH\supset W_+\to W_-\subset \cH\,,$$
where $W_+$, $W_-$ are canonical projections of $C(V)\subset\cH\oplus\cH$ onto the first and the second component, respectively.
The set of self-adjoint relations extending $V$ can be identified with the set of unitary operators
$$U_0:N_+\to N_-\,,$$
where $N_\pm$ are the \emph{deficiency spaces},
$$N_+=W_+^\perp\,,\quad N_-=W_-^\perp\,.$$
\end{theorem}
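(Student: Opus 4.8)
The plan is to reduce the whole statement to Proposition~\ref{prop1} by means of the symplectomorphism $C$. The displayed identity $\bk{C(x,\dot x)}{C(y,\dot y)}_-=\bk{(x,\dot x)}{(y,\dot y)}_{0-}$ established just before the statement shows that $C$ pulls $\bk{\cdot}{\cdot}_-$ back to $\bk{\cdot}{\cdot}_{0-}$, hence, taking imaginary parts, $\zw_-(Cu,Cv)=\zw_{0-}(u,v)$; so $C\colon(\cH_1,\zw_{0-})\to(\cH_1,\zw_-)$ is a complex linear symplectomorphism and carries $\zw_{0-}$-isotropic (resp.\ Lagrangian) subspaces onto $\zw_-$-isotropic (resp.\ Lagrangian) subspaces. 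Thus if $V$ is a symmetric relation, $C(V)$ is a complex linear $\zw_-$-isotropic subspace, and Proposition~\ref{prop1} gives at once $C(V)=\mathfrak{G}(U)$ for a partial isometry $U$ whose domain $W_+$ and range $W_-$ are exactly the projections of $C(V)$ onto the first and second factor; if in addition $V$ is self-adjoint, $C(V)$ is $\zw_-$-Lagrangian and the same proposition forces $U$ to be unitary. Since a partial isometry extends by continuity to a partial isometry between the closures $\overline{W_\pm}$, and since self-adjoint relations are closed, we may work with $\overline{W_\pm}$ from now on; the deficiency spaces $N_\pm=W_\pm^\perp=\overline{W_\pm}^{\,\perp}$ are unaffected.

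For the extension part I would argue: a self-adjoint relation $\widetilde V$ contains $V$ if and only if $C(\widetilde V)$ is a $\zw_-$-Lagrangian subspace containing $C(V)=\mathfrak{G}(U)$; by Proposition~\ref{prop1}, $C(\widetilde V)=\mathfrak{G}(\widetilde U)$ with $\widetilde U\in\U(\cH)$, and the inclusion $\mathfrak{G}(U)\subset\mathfrak{G}(\widetilde U)$ holds precisely when $\widetilde U|_{W_+}=U$. Hence the self-adjoint extensions of $V$ are in bijection with the unitary extensions $\widetilde U\in\U(\cH)$ of the partial isometry $U$.

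It then remains to match such $\widetilde U$ with unitaries $U_0\colon N_+\to N_-$. If $\widetilde U$ is a unitary extension of $U$, then by continuity $\widetilde U$ coincides on $\overline{W_+}$ with the unitary $\overline U\colon\overline{W_+}\to\overline{W_-}$, so $\widetilde U(\overline{W_+})=\overline{W_-}$, and since $\widetilde U$ is unitary it maps $N_+=\overline{W_+}^{\,\perp}$ isometrically onto $\overline{W_-}^{\,\perp}=N_-$; set $U_0:=\widetilde U|_{N_+}$. Conversely, any unitary $U_0\colon N_+\to N_-$ yields $\widetilde U:=\overline U\oplus U_0$ on $\cH=\overline{W_+}\oplus N_+$, which, using the orthogonal decompositions $\cH=\overline{W_+}\oplus N_+=\overline{W_-}\oplus N_-$, is an isometry onto all of $\cH$, hence unitary, and restricts to $U$ on $W_+$. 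These two assignments are mutually inverse, which gives the asserted identification; this is exactly von Neumann's parametrisation of self-adjoint extensions by the deficiency spaces.

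The whole argument is essentially bookkeeping once Proposition~\ref{prop1} and the symplectomorphism property of $C$ are in hand; the one genuinely delicate point is the treatment of closures, i.e.\ checking that replacing $U$ by $\overline U$ and $W_\pm$ by $\overline{W_\pm}$ changes nothing, since for a general (non-closed) symmetric relation $W_\pm$ need not be closed. One should also note that the parametrising set is empty exactly when $N_+$ and $N_-$ fail to be isomorphic Hilbert spaces, which recovers the classical fact that equality of the deficiency indices is necessary and sufficient for $V$ to admit a self-adjoint extension.
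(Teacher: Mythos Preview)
Your proposal is correct and follows essentially the same route as the paper: reduce everything to Proposition~\ref{prop1} via the symplectomorphism $C$, so that symmetric relations become graphs of partial isometries and self-adjoint extensions become graphs of unitaries extending $U$. Your write-up is in fact more complete than the paper's, which simply asserts that a unitary extension of $U$ ``unitarily identif[ies] the orthogonal complements of the domain $W_+$ and the range $W_-$''; you spell out the bijection $\widetilde U\leftrightarrow U_0=\widetilde U|_{N_+}$ explicitly and flag the closure issue for non-closed $V$, neither of which the paper addresses.
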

\begin{proof}
$C(V)$ is isotropic with respect to $\zw_-$ and, according to Proposition \ref{prop1}, $C(V)$ is the graph $\mathfrak{G}(U)$ of a partial isometry $U$.
Moreover, any self-adjoint extension of $V$ corresponds to a complex linear Lagrangian submanifold $V'$ containing $V$. Its image $C(V')$ under $C$ is complex linear Lagrangian for $\zw_-$, thus the graph of an unitary operator $U'$ extending $U$, thus unitarily identifying the orthogonal complements of the domain $W_+$ and the range $W_-$ of $U$.
\end{proof}

\begin{corollary} (von Neumann)
If $A$ is a symmetric operator on $\cH$, then the image $C(\mathfrak{G}(A))\subset\cH\oplus\cH$ by $C$ of its graph $\mathfrak{G}(A)\subset\cH\oplus\cH$ is the graph of a partial isometry
$$U:\cH\supset W_+(A)\to W_-(A)\subset \cH\,,$$ which satisfies
\be\label{cf}
U(A+iI)=A-iI\,.
\ee
Here, $W_\pm(A)$ is the range of the operator $A\pm iI$. Complements of the ranges can be identified with kernels of adjoint operators.
The set of self-adjoint extensions of $A$ can be therefore identified with the set of unitary operators
$$U_0:N_+(A)\to N_-(A)\,,$$
where $N_\pm(A)$ are the \emph{deficiency spaces} being the kernels of $A^\dagger\mp iI$,
$$N_\pm(A)=\ker(A^\dag\mp iI)\,.$$
\end{corollary}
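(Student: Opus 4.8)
The plan is to deduce this as a direct corollary of the preceding theorem, applied to the symmetric relation $V=\mathfrak{G}(A)$; the only genuine work is to make the action of the Cayley map $C$ on a \emph{graph} explicit. First I would compute, straight from the definition (\ref{cay}), that for $x$ in the domain $\cD$ of $A$
\[
C(x,Ax)=\frac{1}{\sqrt 2}\bigl((A+iI)x,\ (A-iI)x\bigr)\,.
\]
Reading off the two components, the subspaces that the preceding theorem produces as the canonical projections of $C(V)$ are precisely $W_+(A)=\Rg(A+iI)$ and $W_-(A)=\Rg(A-iI)$.

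That theorem already asserts that $C(\mathfrak{G}(A))$ is the graph $\mathfrak{G}(U)$ of a partial isometry $U\colon W_+(A)\to W_-(A)$; I would only have to check that $U$ is genuinely single-valued, and this is the one place where symmetry of $A$ (and not just isotropy) is needed: a one-line computation gives $\nm{(A+iI)x}^2=\nm{Ax}^2+\nm{x}^2$ for $x\in\cD$, so $A+iI$ is injective, and the displayed identity then forces $U(A+iI)x=(A-iI)x$ on all of $\cD$, i.e.\ $U(A+iI)=A-iI$. For the deficiency spaces I would invoke the standard fact (see the Remark on $A^\dag$ in Section~4, where $A^\dag$ is defined for densely defined $A$) that $\Rg(T)^\perp=\ker(T^\dag)$ for densely defined $T$; applying it to $T=A\pm iI$ and using $(A\pm iI)^\dag=A^\dag\mp iI$ yields $N_+(A):=W_+(A)^\perp=\ker(A^\dag-iI)$ and $N_-(A):=W_-(A)^\perp=\ker(A^\dag+iI)$.

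For the parametrization of self-adjoint extensions I would once more quote the preceding theorem: self-adjoint relations extending $V=\mathfrak{G}(A)$ are in bijection with unitary maps $U_0\colon N_+(A)\to N_-(A)$, where $U_0$ is the component acting between the deficiency spaces of a unitary extension of $U$ with full domain, and the corresponding extension of $V$ is its $C$-preimage, a complex linear $\zw_{0-}$-Lagrangian subspace of $\cH\oplus\cH$ containing $\mathfrak{G}(A)$. One extra observation finishes the proof: since $A$ is densely defined, every relation extending $\mathfrak{G}(A)$ has dense domain, hence by Theorem~\ref{th1} is itself the graph of an operator, so these self-adjoint relations are exactly the graphs of the self-adjoint extensions of $A$. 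The closest thing to an obstacle is bookkeeping --- one must make sure that $\Rg(T)^\perp=\ker(T^\dag)$ is used in its correct unbounded-operator form and that the single-valuedness of $U$ really rests on the estimate $\nm{(A+iI)x}\ge\nm{x}$, not merely on the partial-isometry property delivered by the abstract statement.
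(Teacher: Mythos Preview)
Your argument is correct and follows the paper's proof almost verbatim: compute $C(x,Ax)$, read off $W_\pm(A)=\Rg(A\pm iI)$, invoke the preceding theorem for the partial isometry and for the bijection with unitaries $N_+\to N_-$, and identify the deficiency spaces via $\Rg(T)^\perp=\ker(T^\dag)$. Two small remarks: the separate injectivity check $\nm{(A+iI)x}^2=\nm{Ax}^2+\nm{x}^2$ is redundant, since Proposition~\ref{prop1} (through the preceding theorem) already guarantees that any complex-linear $\zw_-$-isotropic subspace is a graph, and for a complex-linear operator symmetry of $A$ is \emph{exactly} isotropy of $\mathfrak{G}(A)$, so your phrase ``symmetry and not just isotropy'' draws a distinction that does not exist here. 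On the other hand, your closing observation---that every self-adjoint \emph{relation} extending $\mathfrak{G}(A)$ is automatically a graph because $A$ is densely defined (Theorem~\ref{th1})---is a point the paper's proof leaves implicit and is worth making explicit.
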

\begin{proof}
We know that $\mathfrak{G}(A)$ is isotropic for symmetric $A$. Hence, $C(\mathfrak{G}(A))$ is the graph $\mathfrak{G}(U)$ of a partial isometry $U$. Since elements of $\mathfrak{G}(A)$ are of the form $(x,Ax)$ and
$$C(x,Ax)=\frac{1}{\sqrt{2}}\left(Ax+ix,Ax-ix\right)\,,
$$
the element $C(x,Ax)$ is in the graph of $U$ if and only if
$$U(Ax+ix)=Ax-ix\,.$$
Moreover, any self-adjoint extension of $A$ corresponds to the graph of an unitary operator $U'$ extending $U$, thus unitarily identifying the orthogonal complements of the domain and the range of $U$, thus the ranges of $A\pm iI$. Clearly they are the kernels of the operators $A^\dag\mp iI$.

\end{proof}
\section{Additional questions}

\subsection{The Heisenberg picture}
Within the standard formulation of quantum mechanics, the Schr\"odinger picture is considered to be equivalent to the Heisenberg picture. It is therefore meaningful to ask to what extent our geometrical formulation of the dynamics in the Schr\"odinger picture has a counterpart also in the Heisenberg picture. In this section we would like to outline a possible approach to deal with Heisenberg equations of motion within the geometrical formulation.

If $A$ and $T$ are quantum observables (self-adjoint operators), then the one-parametr group of unitary transformations $U_t=\exp(-itA)$ produces the trajectory of self-adjoint operators (observables) $T_t=U_tTU_t^\dag$. It is however difficult to describe the `generator' of such one-parameter group of transformations; it would be in general an unbounded `derivation'.

The situation is better when we reduce to bounded $T$. In this case, the generator is usually understood as the commutator $i[A,T]=i(AT-TA)$. However, in general this commutator makes no sense even for all bounded $T$ and is defined as an unbounded operator on the Banach space $\gl(\cH)$ for a dense set of $T$. On the other hand, we can develop our full machinery when we reduce to the space $\L^2(\cH)$ of Hilbert-Schmidt operators. Since Hilbert-Schmidt operators can be interpreted as elements of the Hilbert tensor product $\cH\otimes{\cH}^*$, where the dual $\cH^*$ can be identified with $\cH$ with respect to the anti-linear isomorphism (cf. \cite{grabowski07})
$$\cH\ni x\mapsto \ol{x}\in\cH^*\,,$$
understood as identifying `bras' with `kets', $\bra{x}\mapsto\ket{x}$, we can interpret the commutator as an action in the tensor product:
$$[A,x\otimes\ol{y}]=A\circ x\otimes\ol{y}-x\otimes\ol{y}\circ A=Ax\otimes\ol{y}-x\otimes\overline{Ay}\,.$$
Then, we can develop the whole machinery for the corresponding quantum dynamics just replacing $\cH$ with $\cH\otimes\cH$.
\subsection{Composite systems}
In addition to states, observables, probability functions and evolution, a basic requirement for the description of quantum systems is a composition rule, i.e., how to compose interacting quantum systems. Therefore our description should take into account also the composition rule for quantum systems.
One of the main difference between classical and quantum mechanics is that to obtain composition of two systems one uses the Cartesian product of configuration spaces in the classical case, and the tensor product of the corresponding Hilbert spaces in the quantum case. The dimensions of the tensor products are much higher than the dimension of the Cartesian product, this fact is usually understood to be the source of quantum phenomena like entanglement.

How we can explain all this in our model with the Tulczyjew triple, being originally classical.
Note first that our configuration spaces $H$ are by definition linear (real Hilbert spaces).
According to the classical rule, for the composition of systems with configurations in $H_1$, $H_2$, respectively, we should use $H_1\ti H_2$. But our configurations should be linear, so we must generate freely (if we do not want to introduce extra constraints) a real Hilbert space out of $H_1\ti H_2$. This is exactly the (real) tensor product $H_1\otimes_\R H_2$. Our complex Hilbert space is therefore the complexification of $H_1\otimes_\R H_2$ which, as easily seen, is the complex tensor product $\cH_1\otimes_\C\cH_2$ of the complexifications of $H_1$ and $H_2$.
\section{Concluding remarks}
We have reviewed the essential aspects of the Lagrangian description of implicit differential equations on the cotangent bundle of a configuration space by means of the Tulczyjew triple. By taking advantage of the geometrical formulation of quantum mechanics we have proposed a Lagrangian description , similar to the classical one, also for quantum dynamics.
This manifold point of view has required revisiting of various aspects in the framework of differential geometry applied to Hilbert manifolds.

The characterization in terms of Lagrangian submanifolds (relations) allows to reformulate the problem of selfadjointness for unbounded operators in geometrical terms. The formalism we have considered appears to be quite flexible, therefore in a forthcoming paper we shall consider composite systems and discuss the problem of separability and entanglement by `geometrizing' our previous approach \cite{grabowski07,grabowski11}

\section*{Acknowledgments}
We are indebted to Karl-Hermann Neeb for his useful comments on the unitary group. G.~Marmo would like to thank the Santander -- UCIIIM Chair of Excellence Program 2016/2017 for partial support. J.~Grabowski and T.~Shulman acknowledge the support of the  Polish National Science Centre grant under the contract number DEC-2012/06/A/ST1/00256.

\noindent Janusz Grabowski\\\emph{Institute of Mathematics, Polish Academy of Sciences}\\{\small \'Sniadeckich 8, 00-656 Warszawa,
Poland}\\{\tt jagrab@impan.pl}
\\

\noindent Marek Ku\'s\\
\emph{Center for Theoretical Physics, Polish Academy of Sciences,} \\
{\small Aleja Lotnik{\'o}w 32/46, 02-668 Warszawa,
Poland} \\{\tt marek.kus@cft.edu.pl}
\\

\noindent Giuseppe Marmo\\
\emph{Dipartimento di Fisica, Universit\`{a} ``Federico II'' di Napoli} \\
\emph{and Istituto Nazionale di Fisica Nucleare, Sezione di Napoli,} \\
{\small Complesso Universitario di Monte Sant Angelo,} \\
{\small Via Cintia, I-80126 Napoli, Italy} \\
{\tt marmo@na.infn.it}
\\

\noindent Tatiana Shulman\\\emph{Institute of Mathematics, Polish Academy of Sciences}\\{\small \'Sniadeckich 8, 00-656 Warszawa,
Poland}\\{\tt tshulman@impan.pl}


\end{document}